  \providecommand\BibTeX{{%
    \normalfont B\kern-0.5em{\scshape i\kern-0.25em b}\kern-0.8em\TeX}}}
\newlength\mylena
\newlength\mylenb
\newcommand\mystrut[1][2]{%
    \setlength\mylena{#1\ht\@arstrutbox}%
    \setlength\mylenb{#1\dp\@arstrutbox}%
    \rule[\mylenb]{0pt}{\mylena}}
\pgfplotsset{compat=1.9}
\DeclareMathAlphabet{\mathcal}{OMS}{cmsy}{m}{n}
\newcommand{\Mcal}{\mathcal{M}}
\newcommand{\Aclass}{\mathbb{A}}
\newcommand{\Bclass}{\mathbb{B}}
\newcommand{\userdpfedavg}{\texttt{UserDP-FedAvg}}
\newcommand{\insdpfedavg}{\texttt{InsDP-FedAvg}}
\newcommand{\insdpfedsgd}{\texttt{InsDP-FedSGD}}
\newcommand{\inslevel}{instance-level}
\newcommand{\mnist}{MNIST}
\newcommand{\cifar}{CIFAR}
\newcommand{\sent}{Sent140}
\newcommand{\ceracc}{certified accuracy}
\newcommand{\cerk}{\mathsf{K}}
\definecolor{aliceblue}{rgb}{0.94, 0.97, 1.0}
\definecolor{lightgray}{gray}{0.9}
 \newcommand{\squishlist}{
	\begin{list}{$\bullet$}
		{
			\setlength{\itemsep}{0pt}
			\setlength{\parsep}{0.5pt}
			\setlength{\topsep}{0.5pt}
			\setlength{\partopsep}{0pt}
			\setlength{\leftmargin}{1em}
			\setlength{\labelwidth}{1em}
			\setlength{\labelsep}{0.5em} } }
\newcommand{\squishend}{
\end{list}  }
  \newcommand{\squishlistdash}{
	\begin{list}{$-$}
		{
			\setlength{\itemsep}{0pt}
			\setlength{\parsep}{0.5pt}
			\setlength{\topsep}{0.5pt}
			\setlength{\partopsep}{0pt}
			\setlength{\leftmargin}{1em}
			\setlength{\labelwidth}{1em}
			\setlength{\labelsep}{0.5em} } }
\newcommand{\squishenddash}{
\end{list}  }
\newcommand{\attackcost}[0]{attack inefficacy\xspace}
\newcommand{\ApplyBlue}[1]{%
    \textcolor{black}{#1}
}
\newcolumntype{R}{>{\collectcell\ApplyBlue}{r}<{\endcollectcell}}
\definecolor{revcolor}{HTML}{0a46f4}  %
\newenvironment{rev}[1]
  {\color{black}}  %
  {\color{black}}
\newcommand{\revise}[1]{\textcolor{black}{#1}}
\theoremstyle{plain}
\newtheorem{theorem}{Theorem}
\newtheorem{lemma}{Lemma}
\newtheorem{example}{Example}
\newtheorem{definition}{Definition}
\begin{document}

\title[Unraveling the Connections between Privacy and Certified Robustness in FL Against Poisoning Attacks]{Unraveling the Connections between Privacy and Certified Robustness in Federated Learning Against Poisoning Attacks}

\author{Chulin Xie}
\email{chulinx2@illinois.edu}
\affiliation{%
  \institution{University of Illinois at
Urbana-Champaign}
\city{Urbana}
  \state{Illinois}
  \country{USA}
}

\author{Yunhui Long}
\email{ylong4@illinois.edu}
\affiliation{%
  \institution{University of Illinois at
Urbana-Champaign}
\city{Urbana}
  \state{Illinois}
  \country{USA}
}

\author{Pin-Yu Chen}
\email{pin-yu.chen@ibm.com}
\affiliation{%
  \institution{IBM Research}
  \state{New York}
  \country{USA}
}

\author{Qinbin Li}
\email{qinbin@berkeley.edu}
\affiliation{%
  \institution{UC Berkeley}
  \city{Berkeley}
  \state{California}
  \country{USA}
}

\author{Arash Nourian}
\email{nouriara@amazon.com}
\affiliation{%
  \institution{Amazon Web Services}
   \city{Santa Clara}
   \state{California}
   \country{USA}
}

\author{Sanmi Koyejo}
\email{sanmi@cs.stanford.edu}
\affiliation{%
  \institution{Stanford University}
  \city{Stanford}
  \state{California}
  \country{USA}
}

\author{Bo Li}
\email{lbo@illinois.edu}
\affiliation{%
  \institution{University of Illinois at
Urbana-Champaign}
\city{Urbana}
  \state{Illinois}
  \country{USA}
}

\begin{abstract}
Federated learning (FL)  provides an efficient paradigm to jointly train a global model leveraging  data  from distributed users.
As local training data comes from different users who may not be trustworthy, several studies have shown that FL is vulnerable to poisoning attacks. Meanwhile, to protect the privacy of local users, FL is usually trained in a differentially private way (DPFL). Thus, in this paper, we ask: \textit{What are the underlying connections between differential privacy and certified robustness in FL against poisoning attacks? Can we leverage the innate privacy property of  DPFL to provide certified robustness for FL? Can we further improve the privacy of FL to improve such robustness certification?}
We first investigate both user-level and instance-level privacy of FL and provide formal privacy analysis to achieve improved instance-level privacy.
We then provide two robustness certification criteria: \textit{certified prediction} and \textit{certified attack inefficacy} for DPFL on both user and instance levels. Theoretically,  we provide the certified robustness of DPFL based on both criteria given a bounded number of adversarial users or instances. 
Empirically, we conduct extensive experiments to verify our theories under a range of poisoning attacks on different datasets. We find that increasing the level of privacy protection in DPFL results in stronger \textit{certified attack inefficacy}; however, it does not necessarily lead to a stronger \textit{certified prediction}. Thus, achieving the optimal certified prediction requires a proper balance between privacy and utility loss.
\end{abstract}

\vspace{-5mm}
\begin{CCSXML}
<ccs2012>
   <concept>
       <concept_id>10002978.10002986.10002990</concept_id>
       <concept_desc>Security and privacy~Logic and verification</concept_desc>
       <concept_significance>500</concept_significance>
       </concept>
   <concept>
       <concept_id>10002978.10003006.10003013</concept_id>
       <concept_desc>Security and privacy~Distributed systems security</concept_desc>
       <concept_significance>500</concept_significance>
       </concept>
   <concept>
       <concept_id>10002978.10002991.10002995</concept_id>
       <concept_desc>Security and privacy~Privacy-preserving protocols</concept_desc>
       <concept_significance>500</concept_significance>
       </concept>
 </ccs2012>
\end{CCSXML}

\ccsdesc[500]{Security and privacy~Logic and verification}
\ccsdesc[500]{Security and privacy~Distributed systems security}
\ccsdesc[500]{Security and privacy~Privacy-preserving protocols}

\vspace{-6mm}
\keywords{Certified Robustness, Federated Learning, Differential Privacy, Poisoning Attacks} %

\maketitle

\section{Introduction}
\label{sec:intro}
Federated Learning (FL), which aims to jointly train a global model with distributed local data,  has been widely deployed in different applications, such as finance~\cite{yang2019ffd} and medical analysis~\cite{brisimi2018federated}. However, the fact that the local data and the training process are entirely controlled by the \textit{local users}, who may be adversarial, raises great concerns from both security and privacy perspectives. 
In particular, recent studies show that FL is vulnerable to different types of training-time attacks,
such as model poisoning~\revise{\cite{shejwalkar2021manipulating,fang2020local,bhagoji2018analyzing}}, backdoor attacks~\cite{bagdasaryan2020backdoor,xie2019dba,wang2020attackthetails}, and label-flipping attacks~\cite{fung2020limitations}.

Several defenses have been proposed to defend against poisoning attacks in FL. 
For instance, various robust aggregation methods~\revise{\cite{blanchard2017machinekrum,el2018hidden,yin2018byzantine,pillutla2019robustrfa,nguyen2022flame}} identify and down-weight the malicious updates during aggregation, or estimate a true ``center'' of the received updates instead of taking a weighted average directly. Other defenses include robust FL protocols (e.g., clipping~\cite{sun2019can}, noisy perturbation~\cite{sun2019can}, and additional evaluation during training~\cite{xie2022zenops}) and post-training strategies (e.g., fine-tuning and pruning~\cite{wu2020mitigating}) that repair the poisoned global model. 
However, as these works mainly focus on providing empirical robustness on specific types of  attacks, they have been shown to be vulnerable to newly proposed strong adaptive attacks~\cite{wang2020attackthetails,xie2019dba,fang2020local}.
Recently, some \textit{certified} defenses have been proposed against poisoning attacks~\cite{weber2020rab,rosenfeld2020certifiedlableflip,jia2021intrinsic, jia2022certified, levine2020deep}, while they mainly focus on  centralized setting.

In the meantime, privacy concerns have motivated FL training, where the sensitive raw data is kept on local devices without sharing.
However, sharing other indirect information such as gradients or model updates during the FL training process can also leak sensitive user information~\cite{NEURIPS2019_deepleakage}. 
As a result, approaches based on differential privacy (DP)~\cite{dwork2014algorithmic}, homomorphic encryption~\cite{rouhani2018deepsecure}, and secure multiparty computation~\cite{ben1988completeness,bonawitz2017practical} have been proposed to protect the privacy of users in FL. \revise{In particular, differentially private federated learning (DPFL)~\cite{geyer2017differentially, mcmahan2018learning,naserilocal} provides strong privacy guarantees for user privacy, and has been deployed to real-world FL applications such as Google’s Gboard~\cite{googledpfl} and Apple’s Siri~\cite{appledpfl}.}

Recent studies observe that differential privacy (DP) is related to the robustness of ML models.
\revise{Intuitively, DP is designed to protect the privacy of individual data, such that the output of an algorithm should not change much when one individual record is modified.} 
Hence, the prediction of a DP model will be less impacted by a small amount of perturbation.
Consequently, several studies have been conducted to provide empirical and certified defenses against evasion attacks~\cite{Lecuyer2019,wang2021certified,liu2021certifiably} and data poisoning attacks~\cite{ma2019data,hong2020effectiveness} based on DP properties in the \textit{centralized} ML setting.
Empirical defense against backdoor attacks~\cite{gu2019badnets} based on DP has also been studied in \textit{federated learning} without theoretical guarantees~\cite{bagdasaryan2020backdoor,sun2019can,naserilocal}. 
To the best of our knowledge, despite the widespread use of DP in FL, there is no study exploring the underlying connections between DP and \textit{certified} robustness in FL against poisoning attacks, or providing {certified} robustness for DPFL leveraging its privacy properties.

Hence, in this paper, we aim to bridge this gap and answer the research questions: {
Can we {quantitatively} uncover the underlying connections between differential privacy and  the certified robustness of FL against poisoning attacks? 
Can we improve the privacy of FL to improve its certified robustness?}

To explore and exploit the inherent privacy properties of DPFL for robustness certifications of FL,
we mainly focus on two \underline{goals}: (1) conducting thorough privacy analysis of DPFL algorithms over multiple rounds of training; (2) providing certified robustness of DPFL as a function of its privacy parameters $(\epsilon,\delta)$ under different robustness criteria. 
In terms of privacy analysis, we revisit existing DPFL algorithms and provide improved privacy analysis.
We investigate user-level DP, which is commonly guaranteed in cross-device FL to protect the sensitive information of each user~\cite{agarwal2018cpsgd,geyer2017differentially,mcmahan2018learning,asoodeh2020differentially,liang2020dpfllaplacian}, as well as instance-level DP which is more suitable for cross-silo FL  to protect sensitive information in each data instance~\cite{malekzadeh2021dopamine,zhu2021votingbased,liu2022privacy}. 
Moreover, we carry out privacy analysis for instance-level DPFL algorithms, and provide an improved guarantee for FedSGD~\cite{mcmahan2016communication}-based algorithm with privacy amplification of user and batch subsampling. We also provide a formal privacy guarantee for FedAvg~\cite{mcmahan2016communication}-based algorithm with parallel composition~\cite{mcsherry2009privacy} considering local privacy budget accumulation and global privacy budget aggregation over training rounds.
In terms of certified robustness of FL, we introduce two robustness criteria: \textit{certified prediction} and \textit{certified \attackcost}, which can be adapted to different threat models in DPFL.
We prove that user-level (instance-level) DPFL is certifiably robust against a bounded number of adversarial users (instances).  
We also show that our analysis on certified robustness is \textit{agnostic} to the type of poisoning attack strategies as long as the number of adversarial users or instances is bounded.
Empirically, we quantitatively measure the relationship between privacy guarantee and the certified robustness of FL based on different robustness criteria. We present the first set of  certified robustness for DPFL on image datasets \mnist{}, \cifar{} and text dataset Tweets against various FL poisoning attacks, including backdoor attacks~\cite{bagdasaryan2020backdoor,sun2019can}, distributed backdoor attacks~\cite{xie2019dba}, label-flipping attacks~\cite{fung2020limitations}, model replacement attacks~\cite{bagdasaryan2020backdoor, bhagoji2018analyzing}, \revise{and optimization-based model poisoning attacks~\cite{shejwalkar2021manipulating}}. 
From our theoretical and empirical results, we provide the following insights:
\begin{enumerate}[label=(\emph{\arabic*}),leftmargin=*]
    \vspace{-0.3em}
\item Certified robustness in terms of \textit{certified prediction} is influenced by both the privacy guarantee and model utility. Moderately strong privacy protection enhances certified prediction, while overly strong privacy protection can harm. This is potentially caused by the significant loss of model utility. 
Thus,  optimal certified prediction is achieved by balancing privacy protection and utility.
\item \textit{Certified attack inefficacy} is always enhanced by stronger privacy protection. The certified lower bounds of \attackcost are generally tight when the number of poisoned users or instances is small, or the attack strategy is strong. 
\item Different DPFL algorithms yield varying certification robustness under the same privacy guarantee due to distinct training mechanisms (e.g., per-layer clipping or flat clipping).
\item Larger FL data heterogeneity leads to a smaller number of tolerable adversaries for certified prediction, due to degraded utility.
    \vspace{-1em}
\end{enumerate}
\textbf{\underline{Contributions.}} 
In this paper, we 
take the first step to characterize the underlying connections between privacy guarantees and certified robustness in FL. 
We hope our work can pave the way for more private and robust FL applications.
\begin{itemize}[leftmargin=*]
    \vspace{-0.3em}
\item We provide two criteria for certified robustness of FL against poisoning attacks {(Section~\ref{sec:cer_robust_user_level})}.
\item Given an FL model satisfying user-level DP, we prove that it is certifiably robust against arbitrary poisoning attacks with a bounded number of adversarial users   {(Section~\ref{sec:cer_robust_user_level})}.
\item We revisit two instance-level DPFL algorithms and  provide the improved privacy analysis  (Section~\ref{sec:ins_privacy}). We further prove that instance-level DPFL is certifiably robust against a bounded number of poisoning instances during training   {(Section~\ref{sec:ins_robustness})}.
\item We systematically evaluate the \textit{certified} robustness for user-level and instance-level DPFL based on two robustness criteria on both image and text datasets against \revise{five} types of poisoning attacks. We provide a series of ablation studies to further analyze the factors that affect the certified robustness, such as different DPFL algorithms and data heterogeneity. \revise{Our results also indicate that our certification approach offers strong \textit{empirical} robustness when compared to six empirical FL defenses (Section~\ref{sec_exp}).}
\end{itemize}

\section{Related work}

\subsection{Differentially Private Federated Learning}
To guarantee \textit{user-level privacy} for FL, \citet{mcmahan2018learning} introduce user-level DP-FedAvg and DP-FedSGD to train language models with millions of users, where the server clips the norm of each local update, then adds Gaussian noise on the summed update. User-level DP-FedAvg is also proposed independently by \citet{geyer2017differentially}. Both of these works calculate the privacy budget via the moment accountant~\cite{abadi2016deep}.
In CpSGD~\cite{agarwal2018cpsgd}, each user clips and quantizes the model update, and adds noise drawn from Binomial distribution,  achieving both communication efficiency and DP.
\citet{bhowmick2018protection} derive DP  for FL via R{\'e}nyi divergence~\cite{mironov2017renyi} and study its resilience against data reconstruction attacks. 
\citet{liang2020dpfllaplacian} utilize Laplacian smoothing for each local update to enhance model utility. 
\citet{asoodeh2020differentially} propose a different way to calculate the privacy budget by interpreting each round as a Markov kernel and quantifying its impact on privacy parameters.
Recent studies propose different regularization and sparsification techniques to improve utility~\cite{cheng2022differentially} and leverage sharpness-aware optimizer~\cite{foret2021sharpnessaware} to make the model less sensitive to weight perturbation~\cite{shi2023make}.

In terms of \textit{instance-level privacy} for FL,
Dopamine~\cite{malekzadeh2021dopamine} provides \inslevel{} privacy guarantee for FedSGD~\cite{mcmahan2016communication} where each user only performs one step of DP-SGD~\cite{abadi2016deep} at each FL round. \citet{girgis2021shuffled} introduce variants of instance-level DP-FedSGD with a trusted shuffler between the server and users to randomly permutes user gradients for privacy amplification through anonymization. Nonetheless, both works cannot be applied to the more general setting (e.g., FedAvg~\cite{mcmahan2016communication}) where each user performs multiple steps of SGD.
\citet{zhu2021votingbased} privately aggregate the label predictions from  users in a voting scheme and provide DP guarantees on both user and instance levels. However, it does not allow aggregating the gradients or updates and is thus not applicable to standard FL.
Recent works 
combine local DP-SGD training of clients with personalized FL algorithms~\cite{liu2021projected,noble2022differentially,liu2022privacy,yangprivatefl} to address the user heterogeneity issue in FL and improve privacy-utility tradeoff. 

In summary, the above works focus on privacy in FL while leaving its robustness unexplored. Our goal is to uncover the underlying connections between privacy guarantees with certified robustness.

\begin{table}[t]
\centering
\caption{Comparison between our work and existing studies on privacy and robustness in the context of poisoning attacks.}
 \vspace{-3mm}
\label{tab:compare_poisoning_relatedwork}
\scalebox{0.8}{
\begin{tabular}{lcccccc}
\toprule
 & \makecell{FL} & \makecell{DP}  & \makecell{Empirical \\ Robustness} & \makecell{Certifed \\ Prediction}  & \makecell{Certifed \\ Attack Inefficacy}  \\ 
\midrule
\cite{rosenfeld2020certifiedlableflip, weber2020rab,levine2020deep,wang2022improved} &   \textcolor{red}{$\times$}  & \textcolor{red}{$\times$} & \textcolor{green}{\checkmark}   &  \textcolor{green}{\checkmark} &  \textcolor{red}{$\times$}  \\
\cite{hong2020effectiveness} &  \textcolor{red}{$\times$} & \textcolor{green}{\checkmark} &  \textcolor{green}{\checkmark} & \textcolor{red}{$\times$} & \textcolor{red}{$\times$}  \\ 
\cite{ma2019data} &  \textcolor{red}{$\times$} & \textcolor{green}{\checkmark}  &  \textcolor{green}{\checkmark} &  \textcolor{red}{$\times$}  & \textcolor{green}{\checkmark}    \\ \midrule
\cite{xie2021crfl,cao2021provably} &   \textcolor{green}{\checkmark} & \textcolor{red}{$\times$} & \textcolor{green}{\checkmark} & \textcolor{green}{\checkmark} & \textcolor{red}{$\times$}      \\
\cite{bagdasaryan2020backdoor, wang2020attackthetails,sun2019can,naserilocal}  &   \textcolor{green}{\checkmark} &  \textcolor{green}{\checkmark} & \textcolor{green}{\checkmark} &  \textcolor{red}{$\times$} & \textcolor{red}{$\times$}   \\
\rowcolor{aliceblue} \textbf{Our work}    & \textcolor{green}{\checkmark} &  \textcolor{green}{\checkmark} &  \textcolor{green}{\checkmark}   & \textcolor{green}{\checkmark} & \textcolor{green}{\checkmark}  \\
\bottomrule
\end{tabular}
}
 \vspace{-6.5mm}
\end{table}

\looseness=-1
\vspace{-3mm}

\subsection{Certified Robustness against Evasion Attacks}

Machine learning models are susceptible to test-time evasion attacks~\cite{goodfellow2014explaining}, and different defenses have been proposed to enhance the robustness of models and provide certifications to guarantee consistent predictions under a specified perturbation radius~\cite{li2022sok}.
Pixel-DP~\cite{Lecuyer2019} first connects DP  to certified robustness against adversarial examples by adding noise on the test sample $O$ times and taking the expectation over the corresponding outputs.
Later on, \textit{randomized smoothing}~\cite{cohen2019certified} is proposed to provide a tight robustness certification.
\citet{wang2021certified} extends Pixel-DP~\cite{Lecuyer2019} to NLP tasks, and \citet{liu2021certifiably} improves the certification based on R{\'e}nyi  DP~\cite{mironov2017renyi}. 
However, such an approach of adding noise to test samples does not guarantee that the training algorithm itself satisfies DP. In contrast, our certification against poisoning attacks focuses on DPFL, which requires the \textit{training algorithm} to satisfy DP. Such analysis requires careful privacy budget analysis of DPFL models across multiple training rounds and aggregation.

\vspace{-3mm}
\subsection{Certified Robustness against Poisoning Attacks}\label{subsec:related_work_robust_poisoning}
Compared to test-time certifications against evasion attacks,  training-time certifications against poisoning attacks 
have been less explored due to the notably different threat models and the complexity of analyzing model training dynamics, even in a centralized setting.

{In \textbf{centralized setting}, current approaches primarily utilize \textit{randomized smoothing} to certify the model robustness under a bounded number of poisoned instances.} \citet{weber2020rab} and \citet{rosenfeld2020certifiedlableflip} propose to add noise directly to the training dataset, train multiple models on the randomized datasets, and take majority vote for the final prediction for certification. 
\citet{levine2020deep} and \citet{wang2022improved} propose to
partition a centralized dataset into disjoint subsets, train an independent model on each partition, and make  majority predictions among all models. 
However, these certifications do not apply to FL, where each local model can influence other users’ local models through periodic global model aggregation, so the malicious effect of one poisoned local model could spread to all local models, making the certified robustness in FL a far more challenging task. 
To achieve certified robustness in \textbf{FL},
CRFL~\cite{xie2021crfl} clips the aggregated FL model parameters and adds noise, but it does not consider the properties provided by DPFL.
Emsemble~\cite{cao2021provably} trains numerous FL global models (e.g., 500) on different subsets of users and takes majority prediction.  Similarly, it only leverages the randomness in user-subsampling and does not consider data privacy property during training. 
Our goal is to explore the underlying connections between DP properties of DPFL algorithms and their certified robustness, as well as provide recipes for achieving higher certified robustness.

Several studies have explored the robustness against poisoning attacks induced by \textbf{DP}, either in centralized learning or only empirically in FL.  \citet{ma2019data} first demonstrate that private learners are resistant to data poisoning for centralized regression models and analyze the lower bound of \attackcost. 
\revise{Here 
we extend such a lower bound of attack inefficacy from DP in centralized setting~\cite{ma2019data} to  user-level DP in FL, and further derive the upper bound of the attack inefficacy.}
\revise{We also provide certified prediction guarantees as another robustness certification criterion for general classification tasks in FL based on the privacy properties.}
Meanwhile, some \textit{empirical} studies \cite{bagdasaryan2020backdoor,sun2019can,hong2020effectiveness,naserilocal}  show that DP property can mitigate backdoor attacks. 
For instance, in the \textbf{centralized setting}, \citet{hong2020effectiveness} show that the off-the-shelf mechanism DP-SGD~\cite{abadi2016deep} can serve as a defense against poisoning attacks;
in \textbf{FL},~\cite{wang2020attackthetails,bagdasaryan2020backdoor,sun2019can} show that bounding the norm and adding Gaussian noise on model updates can mitigate backdoor attacks. Recently, \citet{naserilocal} revealed that both user-level DP and instance-level DP can defend against backdoor attacks empirically with varying levels of privacy protection. 
\revise{However, none of these studies provides certified robustness guarantees for DPFL or characterizes the quantitative relationships between privacy guarantees and certified robustness in FL.} \revise{In contrast, our work offers robustness certifications, which can be represented as a function of DP parameters $(\epsilon,\delta)$ based on different robustness criteria. }
We provide an overall comparison between our work and existing studies in \cref{tab:compare_poisoning_relatedwork}.

\section{Preliminaries}
We start by providing some background on Differential Privacy (DP) and Federated Learning (FL). %

\paragraph{Differential Privacy.}
DP provides a mathematically rigorous guarantee for privacy, which ensures that the output of a random algorithm is close no matter whether an individual data record is included in the input. 

\vspace{-1mm}
\begin{definition}[$(\epsilon,\delta)$-DP \cite{dwork2006our}] \label{def:dp}
A randomized mechanism $\mathcal{M}:\mathcal{D} \rightarrow \Theta$ with domain $\mathcal{D}$ and output set $\Theta$
satisfies $(\epsilon, \delta)$-DP if for any pair of two adjacent datasets $ d,d' \in \mathcal{D}$, and for any possible (measurable) output set $E 	\subseteq \Theta$, it holds that 
 \vspace{-1mm}
\begin{align}
	\operatorname{Pr}[\mathcal{M}(d) \in E] \leq e^{\epsilon} \operatorname{Pr}\left[\mathcal{M}\left(d^{\prime}\right) \in E\right]+\delta.
\end{align}
\end{definition}

Group DP follows immediately \cref{def:dp}, where the privacy guarantee decreases with the size of the group. 
\vspace{-1mm}
\begin{restatable}[Group DP]{definition}{lemmagroupdp}
\label{lemma_group_dp}
For mechanism $\mathcal{M}$ that satisfies $(\epsilon, \delta)$-DP, it satisfies $(k\epsilon, \frac{1-e^{k \epsilon }}{1-e^\epsilon}\delta)$-DP for groups of size $k$. That is, for any  $ d,d' \in \mathcal{D}$ that differ by $k$ individuals and any $E \subseteq \Theta$, it holds that
 \vspace{-1mm}
\begin{align}\label{eq:group_dp_definition}
	\operatorname{Pr}[\mathcal{M}(d) \in E] \leq e^{k\epsilon} \operatorname{Pr}\left[\mathcal{M}\left(d^{\prime}\right) \in E\right]+ \frac{1-e^{k \epsilon }}{1-e^\epsilon} \delta.
\end{align}
\end{restatable}

\paragraph{Federated Learning.}
The standard instantiation of FL is FedAvg~\cite{mcmahan2016communication}, which trains a shared global model in FL without directly accessing the local training data of users.
{We consider an FL system consisting of $N$ users, with $B$ representing the set of all users (i.e., $B:=[N]$) and $D:=\{D_1,\ldots,D_N\}$ denoting the union of local datasets across all users. }
At round $t$, the server sends the current global model $w_{t-1}$ to users in the selected user set $U_t$, where $|U_t|=m=qN$ and $q$ is the user sampling probability. 
Each selected user $i \in U_t$ then locally updates the model for $E$ local epochs with its dataset $D_i$ and learning rate $\eta$ to obtain a new local model. Then, the user sends the local model updates $\Delta w_{t}^i$ to the server. Finally, the server aggregates over the updates from all selected users into the new global model: $w_{t}= w_{t-1}+ \frac{1}{m} \sum_{i\in U_t} \Delta w_t^i$.

\section{User-level DP and Certified Robustness}
\label{sec_client_privacy}

\subsection{User-level DP and Background}

\cref{def:dp} leaves the definition of adjacent datasets flexible, which is application-dependent.
When DP is used for the privacy protection of individual users,
the adjacency relation is defined 
 as that differing by data from one user~\cite{mcmahan2018learning}. 

\vspace{-1mm}
\begin{definition}[User-level $(\epsilon,\delta)$-DP]
\label{def:userdp}
Let $B,B'$ be two user sets. Let $D$ and $D'$ be the datasets that are the union of local training examples from all users in $B$ and $B'$, respectively. Then, $D$ and $D'$ are adjacent if $B$ and $B'$ differ by one user.
The mechanism $\mathcal{M}$ satisfies user-level $(\epsilon, \delta)$-DP if it meets  \cref{def:dp} with  $D$ and $D'$ as adjacent datasets.
\end{definition}

Following the standard  user-level DPFL  ~\cite{geyer2017differentially,mcmahan2018learning}, we introduce \userdpfedavg{} (\cref{algo:Fedavg_client_privacy} in~\cref{sec:dpfl_algo}).
 Specifically, at each round, the server first clips the model update from each user with a threshold $S$ such that its $\ell_2$-sensitivity is upper bounded by $S$. Next, the server sums up the updates, adds Gaussian noise sampled from 
$\mathcal{N}(0, \sigma^{2} S^{2})$, and takes the average:  
 \vspace{-2mm}
\begin{align}
w_{t} \gets w_{t-1 } + \frac{1}{m}\left(\sum_{i\in U_t}  \mathrm{Clip}({\Delta w^i_{t} ,S})  +  \mathcal{N}\left(0, \sigma^{2} S^{2}\right) \right). 
\end{align}
During FL training, the users repeatedly query private datasets over training rounds; thus, the privacy guarantee composes.
We use the existing accountant~\cite{wang2019subsampled} based on R\'enyi Differential Privacy (RDP)~\cite{mironov2017renyi} for a tight privacy budget accumulation over $T$ rounds.

\subsection{Certified Robustness of User-level DPFL}
\label{sec:cer_robust_user_level}

\subsubsection{Threat Model.}\label{subsubsec:user_level_threat_model}
We consider there are $k$ adversarial users (attackers) out of $N$ users.
\squishlist
\item \textbf{Attack Goal}:
The {goal} of attackers is to fool the trained FL global model on the server side with specific attack objectives (e.g., misclassification).
\item  \textbf{Attack Capability}:
In line with prior works~\cite{sun2019can,naserilocal}, for attacker {capability}, we consider the attacker with full control of its local training data/model. The attacker can arbitrarily manipulate the features and labels of the local data and modify the weights of the local model before submitting it to the server.
However, the attacker has no control over the server operations nor over the local training process of other users. 
The trusted server conducts DP-related operations\revise{~\cite{geyer2017differentially}}, including model update clipping and noise perturbing, so that the trained FL global model satisfies user-level DP even in the presence of attackers.
\item   \textbf{Attack Strategy}:  The attacker {strategies} include backdoor attacks~\cite{gu2019badnets,chen2017targeted},  which alter local data to embed a backdoor trigger with a targeted adversarial label during local training, causing the FL global model to misclassify any test data with the backdoor trigger as the target label~\cite{bagdasaryan2020backdoor,xie2019dba,sun2019can,wang2020attackthetails}; label flipping attacks~\cite{biggio2012poisoning,huang2011adversarial} which switch the labels of local training data  from one source class to a target class while keeping the data features unchanged, causing the FL global model to misclassify any test data from source class to target class~\cite{fu2019attack}; and model poisoning attacks that directly manipulate local model weights to tamper global model convergence~\cite{fang2020local} or  \revise{amplify the malicious effects of the attacker's model updates derived from poisoning data by scaing the updates by a factor of $\gamma$~\cite{bagdasaryan2020backdoor,bhagoji2018analyzing}}. \revise{Note that by providing certified robustness for FL, which is agnostic to the actual attack strategies, our work is able to explore the worst-case robustness of FL and its relationship to privacy properties.}
\squishend
We denote $B'$ as the set of all users, among which $k$ users are adversarial, and $D':=\{{D'}_1,\ldots, {D'}_{k-1}, {D'}_{k}, {D}_{k+1},\ldots,D_N \}$ as the corresponding union of local datasets.

Next, we introduce two criteria for robustness certification in FL: \textit{certified prediction} and \textit{certified \attackcost}.

\subsubsection{Certified Prediction}\label{subsubsec:method_cer_pred_userdp}

Consider the classification task with $C$ classes.
We define the classification \textit{scoring function} $f:(
\Theta, \mathbb{R}^d)  \rightarrow \Upsilon^C$ which maps model parameters $\theta \in \Theta$ and an input data $x \in \mathbb{R}^d$ to a confidence vector $f(\theta,x)$, and $f_c(\theta,x)\in [0,1]$ represents the confidence of class $c$. We mainly focus on the confidence after normalization, i.e., $f(\theta,x) \in \Upsilon^C = \{p \in \mathbb{R}^C_{\geq0} :  \|p\|_1=1\} $  in the probability simplex. 
Since the DP mechanism $\mathcal{M}$ is randomized and produces a \textit{stochastic} FL global model $\theta = \mathcal{M}(D)$, 
it is natural to resort to a probabilistic expression as a bridge for quantitative robustness certifications. In particular, we will use the expectation of the model's predictions to provide a quantitative guarantee on the robustness of $\mathcal{M}$. Concretely,
we define the \textit{expected scoring function} $F:(\theta , \mathbb{R}^d)  \rightarrow \Upsilon^C$  where 
$F_{c}(\mathcal{M}(D), x)= \mathbb{E}[f_{c}(\mathcal{M}(D),x)]$ is the expected confidence for class $c$. The expectation is taken over DP training randomness, e.g., random Gaussian noise and random user subsampling.
The corresponding \textit{prediction} 
 $H: (\theta , \mathbb{R}^d)  \rightarrow [C] $ is defined by 
 \vspace{-2mm}
\begin{align}  \label{eq:prediction_certification}
H(\mathcal{M}(D),x) :=\arg \max_{c \in [C]} F_c(\mathcal{M}(D),x),
\end{align}
 which is the top-one class based on expected prediction confidence.
We prove that such prediction allows robustness certification.

\textbf{Certified Prediction under One Adversarial User.}
Following our threat model above
and  the DPFL training mechanism in \cref{algo:Fedavg_client_privacy}, we denote the trained global model exposed to a poisoned dataset $D'$ as $\mathcal{M}(D')$. 
When the number of adversarial users $k=1$, 
$D$ and $D'$ are user-level adjacent datasets according to \cref{def:userdp}. 
Given that mechanism $\mathcal{M}$ satisfies user-level $(\epsilon,\delta)$-DP, based on the DP property, the distribution of the stochastic model $\mathcal{M}(D')$ is ``close'' to the distribution of $\mathcal{M}(D)$. 
Intuitively, according to the \textit{post-processing property} of DP~\cite{dwork2006our},
during testing, given a test sample $x$, 
we would expect the values of the expected confidence for each class $c$, i.e., $F_{c}(\mathcal{M}(D'), x)$ and $F_{c}(\mathcal{M}(D), x)$, to be  close, and hence the returned most likely class to be the \textit{same}, i.e., $H(\mathcal{M}(D),x) = H(\mathcal{M}(D'),x)$, indicating \textit{robust} prediction.

\begin{restatable}[Certified Prediction under One Adversarial User]{theorem}{thmonecertpred}
\label{thm_pred_consist_one_client}
\sloppy
Suppose a randomized mechanism $\Mcal$ satisfies user-level $(\epsilon, \delta)$-DP. For two user sets $B$ and $B^\prime$ that differ by one user, let $D$ and $D'$ be the corresponding training datasets. 
For a test input $x$, suppose $\Aclass, \Bclass \in [C]$ satisfy $ \Aclass=\arg \max_{c \in [C]} F_c(\mathcal{M}(D),x)$ and $\Bclass =\arg \max_{c \in [C]:c\neq \Aclass}  F_c(\mathcal{M}(D),x)$.
Then, it is guaranteed that 
 $  H(\mathcal{M}(D'),x) = H(\mathcal{M}(D),x) = \Aclass$ if:

\begin{align} \label{pred_cons_condition}
	 {F_{\Aclass}(\mathcal{M}(D), x)  >  e^{2\epsilon} F_{\Bclass}(\mathcal{M}(D), x)+  (1+ e^{\epsilon})\delta,} 
\end{align}
\end{restatable}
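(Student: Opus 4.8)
The plan is to transfer the user-level $(\epsilon,\delta)$-DP guarantee on the distribution of $\mathcal{M}$ into a two-sided comparison of the expected confidences $F_c(\mathcal{M}(D),x)$ and $F_c(\mathcal{M}(D'),x)$, and then combine these bounds with the margin hypothesis \eqref{pred_cons_condition} to pin down the $\arg\max$. Since $k=1$, the datasets $D$ and $D'$ are user-level adjacent in the sense of Definition~\ref{def:userdp}, so Definition~\ref{def:dp} applies directly to the pair $(D,D')$.

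First I would establish an elementary ``expectation version'' of DP: if a mechanism $\mathcal{M}$ is $(\epsilon,\delta)$-DP and $g:\Theta\to[0,1]$ is any measurable function, then $\mathbb{E}[g(\mathcal{M}(d))]\le e^{\epsilon}\,\mathbb{E}[g(\mathcal{M}(d'))]+\delta$ for adjacent $d,d'$. This follows by writing $\mathbb{E}[g(\mathcal{M}(d))]=\int_0^1 \Pr[g(\mathcal{M}(d))>t]\,dt$ (the layer-cake/tail formula, valid because $g\in[0,1]$), applying Definition~\ref{def:dp} to each event $E_t=\{\theta: g(\theta)>t\}$, and integrating in $t$. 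Because the normalized score $f_c(\cdot,x)$ takes values in $[0,1]$ for every class $c$, this applies with $g=f_c(\cdot,x)$, giving, for all $c$, both $F_c(\mathcal{M}(D),x)\le e^{\epsilon}F_c(\mathcal{M}(D'),x)+\delta$ and $F_c(\mathcal{M}(D'),x)\le e^{\epsilon}F_c(\mathcal{M}(D),x)+\delta$.

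Next I would specialize these two directions. Rearranging the first inequality with $c=\Aclass$ yields the lower bound $F_{\Aclass}(\mathcal{M}(D'),x)\ge e^{-\epsilon}\bigl(F_{\Aclass}(\mathcal{M}(D),x)-\delta\bigr)$. For the competing classes, for any $c\neq\Aclass$ the second inequality combined with the definition of $\Bclass$ as the runner-up under $D$ (so $F_c(\mathcal{M}(D),x)\le F_{\Bclass}(\mathcal{M}(D),x)$) gives the uniform upper bound $F_c(\mathcal{M}(D'),x)\le e^{\epsilon}F_{\Bclass}(\mathcal{M}(D),x)+\delta$. To conclude $H(\mathcal{M}(D'),x)=\Aclass$ it then suffices that the lower bound on $F_{\Aclass}(\mathcal{M}(D'),x)$ strictly exceeds this common upper bound, i.e. $e^{-\epsilon}\bigl(F_{\Aclass}(\mathcal{M}(D),x)-\delta\bigr)>e^{\epsilon}F_{\Bclass}(\mathcal{M}(D),x)+\delta$; multiplying by $e^{\epsilon}$ and collecting the $\delta$ terms shows this is exactly \eqref{pred_cons_condition}. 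The equality $H(\mathcal{M}(D),x)=\Aclass$ is immediate from the definition of $\Aclass$.

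The only real subtlety is the lift from event-probabilities to expectations (the layer-cake step) and keeping the $e^{\epsilon}$ factors and the additive $\delta$ on the correct sides when combining the two one-sided bounds; the rest is bookkeeping. This is a direct strengthening of the Pixel-DP-style argument of \cite{Lecuyer2019}, here instantiated for user-level DPFL, and the factor $e^{2\epsilon}$ (rather than $e^{\epsilon}$) appears precisely because both the top class and the runner-up must be moved across the adjacency relation.
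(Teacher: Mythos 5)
Your proposal is correct and follows essentially the same route as the paper's proof: the layer-cake identity $\mathbb{E}[g(\mathcal{M}(d))]=\int_0^1\Pr[g(\mathcal{M}(d))>t]\,dt$ lifting the $(\epsilon,\delta)$-DP event bound to expected confidences (the paper's Lemma~\ref{lemma:class_conf}), followed by applying it in both directions to lower-bound $F_{\Aclass}(\mathcal{M}(D'),x)$ and upper-bound the competitors, with condition \eqref{pred_cons_condition} closing the gap. If anything, your step bounding $F_c(\mathcal{M}(D'),x)\le e^{\epsilon}F_{\Bclass}(\mathcal{M}(D),x)+\delta$ uniformly over all $c\neq\Aclass$ is slightly more explicit than the paper, which writes the chain only for $\Bclass$ and leaves the extension to the remaining classes implicit.
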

\vspace{-2mm}
\begin{proof}[Proof sketch]
\sloppy
The proof generalizes the analysis of {pixel-level DP in test-time}~\cite{Lecuyer2019}.
Specifically, with DP property for two FL neighboring datasets, we can lower bound  $	F_{\Aclass}(\mathcal{M}(D'), x)$ based on  $	F_{\Aclass}(\mathcal{M}(D), x)$, and upper bound $F_{\Bclass}(\mathcal{M}(D'), x)$ based on $F_{\Bclass}(\mathcal{M}(D), x)$. 
When the lower-bound of $	F_{\Aclass}(\mathcal{M}(D'), x)$ is strictly higher than the upper-bound of $F_{\Bclass}(\mathcal{M}(D'), x)$, the predicted class will be provably $\Aclass$ even under poisoning attack.
\cref{pred_cons_condition} states the condition for achieving such robustness. 
Full proofs are in \cref{sec:app_proofs}.    
\end{proof}

\textit{Remark.}
In \cref{thm_pred_consist_one_client}, if $\epsilon$ is large, i.e., weak privacy guarantee, such that the RHS of \cref{pred_cons_condition} $>1$, the robustness condition cannot hold since the expected confidence $F_{\Aclass}(\mathcal{M}(D), x) \in [0,1]$. On the other hand, to achieve small $\epsilon$, i.e., strong privacy guarantee,  large noise is required during training, which would hurt model utility and thus result in a small confidence margin between the top two classes (e.g., $F_{\Aclass}(\mathcal{M}(D), x)$ and $F_{\Bclass}(\mathcal{M}(D), x)$), making it hard to meet the robustness condition. 
This indicates that achieving certified prediction requires a reasonable privacy level $\epsilon$.

\textbf{Certified Prediction under $k$ Adversarial Users.}
When the number of adversarial users  $k>1$, we resort to group DP. 
According to \cref{lemma_group_dp}, given mechanism $\mathcal{M}$ satisfying user-level $(\epsilon,\delta)$-DP, it also satisfies user-level $(k\epsilon, \frac{1-e^{k \epsilon }}{1-e^\epsilon}\delta)$-DP for groups of size $k$.
When $k$ is smaller than a certain threshold,  leveraging the group DP property, we would expect that the distribution of the stochastic model $\mathcal{M}(D')$ is not too far  away from the distribution of $\mathcal{M}(D)$ such that they would make the close prediction for a test sample with high probability.
Next, we present the corresponding robustness certificate by studying the sufficient condition of $k$, such that the prediction for a test sample is consistent between the stochastic FL models trained from $D$ and $D'$ separately.

\vspace{-1mm}
\begin{restatable}[Upper Bound of $k$ for Certified Prediction]{theorem}{thmkcertpred}
\label{thm_pred_consist_k_client}
Suppose a randomized mechanism $\Mcal$ satisfies user-level $(\epsilon, \delta)$-DP. For two user sets $B$ and $B^\prime$ that differ by $k$ users, let $D$ and $D'$ be the corresponding training datasets. 
For a test input $x$, suppose $\Aclass, \Bclass \in [C]$ satisfy $ \Aclass=\arg \max_{c \in [C]} F_c(\mathcal{M}(D),x)$ and $\Bclass=\arg \max_{c \in [C]:c\neq \Aclass} F_c(\mathcal{M}(D),x)$,
then $H(\mathcal{M}(D'),x) = H(\mathcal{M}(D),x) = \Aclass$, $\forall k<\cerk$ where $\cerk$ is the certified number of adversarial users:

\begin{align} \label{eq:cerk}
{\cerk=   \frac{1}{2\epsilon} \log \frac{ F_{\Aclass}(\mathcal{M}(D),x) (e^\epsilon -1) + \delta }{F_{\Bclass}(\mathcal{M}(D),x) (e^\epsilon -1) + \delta }}
\end{align}

\end{restatable}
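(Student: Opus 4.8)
The plan is to reduce Theorem~\ref{thm_pred_consist_k_client} to Theorem~\ref{thm_pred_consist_one_client} via the Group DP property (Lemma~\ref{lemma_group_dp}), and then solve the resulting inequality for $k$. Since $B$ and $B'$ differ by $k$ users, and $\Mcal$ satisfies user-level $(\epsilon,\delta)$-DP, Lemma~\ref{lemma_group_dp} tells us that with respect to the adjacency relation ``differ by $k$ users,'' $\Mcal$ satisfies $(k\epsilon, \tfrac{1-e^{k\epsilon}}{1-e^\epsilon}\delta)$-DP. Hence $D$ and $D'$ are adjacent for a mechanism with privacy parameters $(\epsilon_k,\delta_k) := (k\epsilon, \tfrac{1-e^{k\epsilon}}{1-e^\epsilon}\delta)$, and I can invoke Theorem~\ref{thm_pred_consist_one_client} with $(\epsilon,\delta)$ replaced by $(\epsilon_k,\delta_k)$. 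This yields the sufficient condition
\begin{align*}
F_{\Aclass}(\mathcal{M}(D), x) > e^{2\epsilon_k} F_{\Bclass}(\mathcal{M}(D), x) + (1 + e^{\epsilon_k})\delta_k
\end{align*}
for the guarantee $H(\mathcal{M}(D'),x) = H(\mathcal{M}(D),x) = \Aclass$ to hold.

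Next I would substitute the group parameters and simplify. Note that $(1+e^{\epsilon_k})\delta_k = (1+e^{k\epsilon})\cdot\tfrac{1-e^{k\epsilon}}{1-e^\epsilon}\delta = \tfrac{1-e^{2k\epsilon}}{1-e^\epsilon}\delta = \tfrac{e^{2k\epsilon}-1}{e^\epsilon-1}\delta$. Writing $F_\Aclass := F_{\Aclass}(\mathcal{M}(D),x)$ and $F_\Bclass := F_{\Bclass}(\mathcal{M}(D),x)$ for brevity, and letting $x := e^{2k\epsilon} > 0$, the sufficient condition becomes
\begin{align*}
F_\Aclass > x F_\Bclass + \frac{x-1}{e^\epsilon - 1}\delta,
\end{align*}
i.e. $F_\Aclass + \tfrac{\delta}{e^\epsilon-1} > x\bigl(F_\Bclass + \tfrac{\delta}{e^\epsilon-1}\bigr)$. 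Since $F_\Bclass + \tfrac{\delta}{e^\epsilon-1} > 0$ (as $\epsilon > 0$ and all quantities are nonnegative), this is equivalent to $x < \tfrac{F_\Aclass (e^\epsilon-1) + \delta}{F_\Bclass(e^\epsilon-1)+\delta}$, that is $e^{2k\epsilon} < \tfrac{F_\Aclass(e^\epsilon-1)+\delta}{F_\Bclass(e^\epsilon-1)+\delta}$. Taking logarithms and dividing by $2\epsilon > 0$ gives exactly $k < \cerk$ with $\cerk$ as in Equation~\ref{eq:cerk}. So for every integer $k < \cerk$, Theorem~\ref{thm_pred_consist_one_client} (applied at the group level) certifies the consistent prediction, which is the claim.

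I should also confirm the edge cases and the monotonicity remark implicit in the statement: since $\Aclass$ is the top class, $F_\Aclass \ge F_\Bclass$, so the log argument is $\ge 1$ and $\cerk \ge 0$; the bound is vacuous precisely when $F_\Aclass = F_\Bclass$. One subtlety worth a sentence is that Theorem~\ref{thm_pred_consist_one_client} is stated for a fixed mechanism with given parameters, so I need the observation that the \emph{same} mechanism $\Mcal$ simultaneously satisfies the $(\epsilon_k,\delta_k)$-DP guarantee at group granularity — this is exactly the content of Lemma~\ref{lemma_group_dp}, so no new work is needed. The main (and essentially only) obstacle is the algebraic manipulation of the group-$\delta$ term $\tfrac{1-e^{k\epsilon}}{1-e^\epsilon}\delta$: one must be careful that $1-e^\epsilon < 0$, so the fraction is positive, and that combining it with $(1+e^{k\epsilon})$ telescopes cleanly into $\tfrac{e^{2k\epsilon}-1}{e^\epsilon-1}\delta$; once that is handled, isolating $e^{2k\epsilon}$ and taking logs is routine. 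The full calculation is deferred to Appendix~\ref{sec:app_proofs}.
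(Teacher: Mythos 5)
Your proposal is correct and follows essentially the same route as the paper: apply the Group DP property to get the $(k\epsilon, \tfrac{1-e^{k\epsilon}}{1-e^\epsilon}\delta)$ guarantee for datasets differing by $k$ users, plug these parameters into the sufficient condition of Theorem~\ref{thm_pred_consist_one_client} (the paper formalizes this substitution as a group-level analogue of the expected-confidence lemma), and then algebraically isolate $e^{2k\epsilon}$ to obtain the closed form for $\cerk$. Your telescoping of $(1+e^{k\epsilon})\cdot\tfrac{1-e^{k\epsilon}}{1-e^\epsilon}\delta$ into $\tfrac{e^{2k\epsilon}-1}{e^\epsilon-1}\delta$ and the subsequent rearrangement match the paper's Equation~(\ref{eq:group_pred_cons_condition}) exactly.
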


\begin{proof}[Proof sketch]
\sloppy
By solving \cref{thm_pred_consist_one_client} combined with Group DP definition, 
 we derive the above robustness condition.
Full proofs are in~\cref{sec:app_proofs}. 
\end{proof}

\vspace{-1mm}
\textit{Remark.}
\sloppy 
\textbf{(1)} In \cref{thm_pred_consist_k_client}, if we fix $F_{\Aclass}(\mathcal{M}(D), x)$ and $F_{\Bclass}(\mathcal{M}(D), x)$, the smaller $\epsilon$ of FL can certify larger $\cerk$. However, smaller $\epsilon$ also induces lower confidence due to the model performance drop, thus reducing the tolerable $\cerk$ instead.
 As a result, properly choosing  $\epsilon$ would help to improve the certified robustness and tolerate more adversaries during training (e.g. certify against a large  $\cerk$).
\textbf{(2)} \cref{thm_pred_consist_k_client} provide a \textit{unified} certification against $k$ adversarial users built upon $\epsilon$, which remains valid regardless of how $\epsilon$ is achieved. It thus offers the flexibility of choosing various types of noise, clipping, subsampling strategies, and FL training algorithms to achieve user-level $\epsilon$. DPFL mechanisms that can retain a larger prediction confidence margin under the same $\epsilon$ can certify a larger $\cerk$.
\textbf{(3)} \cref{thm_pred_consist_k_client} is distinct from the maximum adversarial perturbation magnitude against test-time attacks provided by Pixel-DP~\cite{Lecuyer2019} in three important aspects. 
First, we employ group DP to provide certifications against a discrete $k$ number of adversarial users under the threat model of FL poisoning attacks, while Pixel-DP measures maximum perturbation magnitude using the $\ell_p$-norm due to the continuous nature of pixels.
Second, the certification from Pixel-DP is based on the one-time noise in the direct input perturbation during test time, leading to different closed-form solutions for different types of noise distributions such as Laplace and Gaussian. In contrast, \cref{thm_pred_consist_k_client} based  on \textit{$\epsilon$} is a unified certification applicable to \emph{any} user-level DP FL mechanisms.
Third, the analysis of $\epsilon$ in DPFL takes into account more factors than sorely the noise, such as user subsampling and the privacy accountant techniques for DP composition over training rounds.

 \revise{
\textbf{Certified Prediction via R\'enyi DP.}
In addition to the theoretical guarantees of DP-based certified prediction, we also derive the certified prediction based on RDP~\cite{mironov2017renyi} with the randomized smoothing technique via R\'enyi Divergence~\cite{Dvijotham2020A} in \cref{app:renyi_certifications}. 
Yet, compared to DP-based certifications, RDP-based certifications are more intricate, due to the additional parameter, RDP order $\alpha$, and its foundational R\'enyi Divergence-based definition, which makes it more challenging to derive a straightforward upper bound $\cerk$ as in Theorem~\ref{thm_pred_consist_k_client}. 
In our main paper, we focus on DP-based certifications for the convenience of illustration.
}

\vspace{-2mm}
\subsubsection{Certified Attack Inefficacy}
\label{subsubsec:attack_inefficacy}
In addition to the certified prediction, 
we define a bounded \textit{\attackcost} for attacker $C: \Theta \rightarrow \mathbb{R}$, which quantifies the difference between the attack performance of the poisoned model and the \textit{attack goal}, following \cite{ma2019data}. In general, the attacker aims to minimize the \textit{expected} \attackcost $J({D'}):=\mathbb{E}[C(\mathcal{M}({D'}))]$ where  {$\mathcal{M}({D'})$ is the global model trained from poisoned dataset $D'$}, and
the expectation is taken over the randomness of DP training. 
The inefficacy function can be instantiated according to the concrete attack goal in different types of poisoning attacks, and we provide some examples below. {For instance, in \cref{example_backdoor} of backdoor attack, the \attackcost is defined as the loss of the poisoned  FL model $\theta'=\mathcal{M}({D'})$ evaluated on a backdoor testset. During the FL training stage, the attacker optimizes the poisoned FL model $\theta’$ with poisoned training data, so as to minimize the \attackcost $C({\theta'})$ during the test phase. 
The lower the \attackcost, the stronger the attack is. 
}

Given a global FL model {$\mathcal{M}({D'})$} satisfying user-level $(\epsilon,\delta)$-DP, we prove the lower bound of the \attackcost $J(D')$  when there are at most $k$ users.
The existence of the lower bound implies that $J(D')$ can not be arbitrarily low under the constraint of $k$ adversarial users, i.e.,
 the attack can not be arbitrarily successful, which reflects the robustness of the trained global model.
A higher lower bound of the \attackcost (i.e., less effective attack) indicates a more \textit{certifiably robust}  global model.

\vspace{-1mm}
\begin{example}(Backdoor attack~\cite{gu2019badnets})
\label{example_backdoor}
$C({\theta'}) = \frac{1}{n} \sum_{i=1}^n l({\theta'}, z_i^*)$, where $z_i^*= (x_i + \delta_x, y^*)$, $\delta_x$ is the backdoor pattern, $y^*$ is the target adversarial label.
Minimizing $J(D')$ {over model parameters $\theta'$} drives the prediction on test data with backdoor pattern $\delta_x$ to  $y^*$.  
\end{example}

\vspace{-2mm}
\begin{example}(Label Flipping attack~\cite{biggio2012poisoning})
\label{example_label_flip}
$C({\theta'}) = \frac{1}{n} \sum_{i=1}^n l({\theta'}, z_i^*)$,  where $z_i^*= (x_i  ,y^*)$ and  $y^*$ is the target adversarial label. 
Minimizing $J(D')$ {over model $\theta'$} drives the prediction on test data to $y^*$.
\end{example}

\vspace{-1mm}
\textbf{Certified Attack Inefficacy under $k$ Adversarial Users.} We discuss our main results on certified \attackcost below.
\begin{restatable}[Attack Inefficacy with $k$ Attackers]{theorem}{thmcostfunck}
\label{thm_costfunc_k_client}
Suppose a randomized mechanism $\Mcal$ satisfies user-level $(\epsilon, \delta)$-DP. For two user sets $B$ and $B^\prime$ that differ $k$ users, $D$ and $D'$ are the corresponding training datasets. Let $J(D)$ be the expected \attackcost where  $ | C(\theta)| \leq \bar C$, $\forall \theta$. Then,

\begin{equation}
\begin{aligned} 
&{\min\{	 e^{k\epsilon} J(D) +  \frac{e^{k \epsilon }-1}{e^\epsilon-1}\delta \bar C , \bar C \} \geq  J(D') } \\  
&{\geq \max \{e ^{-k\epsilon} J(D) - \frac{1-e^{-k \epsilon }}{e^\epsilon-1}\delta \bar C ,0 \}, \text{\quad if\quad} C(\cdot) \geq 0}  \\  
& {\min\{ e^{-k \epsilon} J(D)+ \frac{1-e^{-k \epsilon }}{e^\epsilon-1}\delta \bar C  , 0 \}  \geq   J(D') }\\  
&{ \geq \max \{ e ^{k\epsilon} J(D) - \frac{ e^{k \epsilon } - 1}{e^\epsilon-1}\delta \bar C  , -\bar C  \}  , \text{\quad if\quad}  C(\cdot) \leq 0 }
\end{aligned} 
\end{equation}
\vspace{-3mm}
\end{restatable}

\vspace{-1mm}
\begin{proof}[Proof sketch]
\cref{thm_costfunc_k_client} contains the lower bound and upper bound for \attackcost. 
For the lower bound, we generalize the proof from \textit{DP in centralized learning}~\cite{ma2019data} to the \textit{user-level DP in FL}. 
Concretely, we derive the lower bound of $J(D')$ based on $J(D)$ according to the satisfied condition in the Group DP definition for the neighboring datasets differing $k$ users. 
In addition, we prove the upper bound by leveraging the symmetric property of DP neighboring datasets.
The full proofs are omitted to \cref{sec:app_proofs}.
\end{proof}

\vspace{-2mm}
\textit{Remark.} In \cref{thm_costfunc_k_client}, 
\textbf{(1)} the lower bounds show to what extent the attack can reduce $J(D')$ by manipulating up to $k$ users, i.e., how successful the attack can be. The lower bounds depend on $J(D)$, $k$, and $\epsilon$.
{Here $J(D)$ is the \attackcost evaluated on the global model trained from clean dataset $D$, which is unrelated to the adversarial users and is only influenced by DPFL mechanism $\mathcal{M}$.}
When  $J(D)$ is higher {(i.e., the clean model $\mathcal{M}({D})$ is more robust)}, the DPFL model under poisoning attacks {$\mathcal{M}({D'})$} is more robust because the lower bounds are accordingly higher; a tighter privacy guarantee, i.e., smaller $\epsilon$, can also lead to higher robustness certification as it increases the lower bounds. On the other hand, with larger $k$, the attacker ability grows and thus leads to lower $J(D')$. 
\textbf{(2)} 
\revise{The upper bounds indicate the minimal adversarial impact caused by $k$ attackers, demonstrating the vulnerability of DPFL models in the most optimistic case (e.g., the backdoor pattern is less distinguishable).}
\textbf{(3)}
Leveraging the above lower bounds, we can lower bound the minimum number of attackers required to reduce \attackcost to a certain level associated with hyperparameter $\tau$ in \cref{thm_k_clients_bound}.
\vspace{-1mm}
\begin{restatable}[Lower Bound of $k$ Given $\tau$, extended from \cite{ma2019data}]{corollary}{corcostfunctau}
	\label{thm_k_clients_bound}
Suppose a randomized mechanism $\Mcal$ satisfies user-level $(\epsilon, \delta)$-DP. Let \attackcost function be $C(\cdot)$, the expected \attackcost be $J(\cdot)$.  In order to achieve $J(D^\prime) \leq  \frac{1}{\tau}  J(D) $ for $\tau \geq 1$ when $ 0\leq C(\cdot) \leq \bar C$, or achieve $J(D^\prime) \leq  \tau  J(D) $ for $ 1 \leq  \tau \leq -\frac{\bar C}{J(D)}$ when $ -\bar C \leq C(\cdot) \leq 0 $, 
the number of adversarial users should satisfy the following:
\begin{equation}
\begin{aligned} 
&{k \geq\frac{1}{\epsilon} \log \frac{\left(e^{\epsilon}-1\right) J(D) \tau+\bar{C} \delta \tau}{\left(e^{\epsilon}-1\right) J(D)+\bar{C} \delta \tau} \text{ \xspace or \xspace  }} 
 k \geq\frac{1}{\epsilon} \log \frac{\left(e^{\epsilon}-1\right) J(D) \tau - \bar{C} \delta}{\left(e^{\epsilon}-1\right) J(D) - \bar{C} \delta },  \nonumber
\end{aligned} 
\end{equation}
\end{restatable}

\vspace{-1mm}
\begin{proof}[Proof sketch]
The proof  generalizes the proof of DP in centralized learning~\cite{ma2019data} to the user-level DP in FL. Consider the case $ 0\leq C(\cdot) \leq \bar C$, when the lower bound of $ J(D')$ in \cref{thm_costfunc_k_client} is smaller than the desired \attackcost $\frac{1}{\tau}  J(D)$, the current \attackcost $J(D')$ will be smaller than the desired \attackcost, i.e., $J(D^\prime) \leq  \frac{1}{\tau}  J(D)$, indicating  the desired attack effectiveness under $k$ adversarial users. \cref{thm_k_clients_bound} states the aforementioned condition. 
The full proofs are omitted to \cref{sec:app_proofs}.
\end{proof}
\vspace{-2mm}
\textit{Remark.} \cref{thm_k_clients_bound} shows that 
 stronger privacy guarantee (i.e., smaller $\epsilon$) requires more attackers to achieve the same effect of the attack, indicating higher robustness.

\section{Instance-level DP and Certified Robustness}
\label{sec:ins_privacy_robustness}

\subsection{Instance-level Privacy} \label{sec:ins_privacy}

We start by introducing instance-level DP definition that 
protects privacy of individual instances, and guarantees that the trained stochastic FL model should not differ much if one instance is modified.

\vspace{-2mm}
\begin{definition}[Instance-level $(\epsilon,\delta)$-DP]
\label{def:insdp}
Let $D$ be the dataset that is the union of local training examples from all  users. Then,  $D$ and $D'$ are adjacent if they differ by one instance.
The randomized mechanism $\mathcal{M}$ is instance-level $(\epsilon, \delta)$-DP if it meets \cref{def:dp} with  $D$ and $D'$ as adjacent datasets.
\end{definition}

Next, we revisit \insdpfedsgd{}~\cite{malekzadeh2021dopamine} and \insdpfedavg{}, where each user adds noise in each training step using DP-SGD~\cite{abadi2016deep}  when training its local model based on Fed-SGD and Fed-Avg, respectively. Then, we formally provide the corresponding privacy analysis.

\subsubsection{Instance-level DP for FedSGD}\label{subsubsec:ins_privacy_fedsgd}
Dopamine~\cite{malekzadeh2021dopamine} provides the first \inslevel{} DP guarantee for the DP-SGD~\cite{abadi2016deep} training of FedSGD~\cite{mcmahan2016communication}. 
Although FedSGD performs the user sampling on the server and the batch sampling in each user during training, Dopamine neglects the privacy gain provided by random user sampling, unlike the privacy analysis in user-level DP. Therefore, we improve the privacy guarantee via privacy amplification~\cite{bassily2014private, abadi2016deep} with user sampling. 
In addition, we use the \revise{R\'enyi DP (RDP)} accountant~\cite{wang2019subsampled}, instead of the moment accountant~\cite{abadi2016deep} used in Dopamine~\cite{malekzadeh2021dopamine}, for a tighter privacy budget analysis, given its tighter compositions rules based on R\'enyi divergence~\cite{mironov2017renyi}.

Specifically, in \insdpfedsgd{} (\cref{algo:Fedsgd_sample_privacy} in~\cref{sec:dpfl_algo}), each user updates its local model by one step of DP-SGD~\cite{abadi2016deep} to protect the privacy of each training instance, the randomized mechanism $\Mcal$ that outputs the global model preserves the instance-level DP. The one-step update for the global model can be described as follows:

{
\begin{small}
    \begin{equation}\label{eq:onestep_globalupdate_fedsdg}
	w_t \gets w_{t-1} - \frac{1}{m} \sum_{i\in U_t}   \frac{\eta}{L} \left(\sum_{x_j\in b_t^i} \mathrm{Clip}(\nabla l_i(w_{t-1};x_j), S) + \mathcal{N}\left(0, \sigma^{2} S^{2}\right) \right ), 
\end{equation}
\end{small}
}
where $b_t^i$ is a local batch randomly sampled from the local dataset of user $i$, $L$ is the batch size, $\nabla l_i(w_{t-1};x_j)$ is the gradient for local sample ${x_j\in b_t^i}$ calculated upon the current FL model $w_{t-1}$, and $\mathcal{N}\left(0, \sigma^{2} S^{2}\right)$ is the Gaussian noise added to the per-sample gradient. 

\vspace{-1mm}
\begin{restatable}[\insdpfedsgd{} Privacy Guarantee]{proposition}{propinsdpfedsgd}
\label{proposition:insdpfedsgd_privacy_guarantee}
Given batch sampling probability $p$ without replacement, and user sampling probability $q=\frac{m}{N}$ without replacement, FL rounds $T$, the clipping threshold $S$, the noise parameter $\sigma$,  the randomized mechanism $\Mcal$ in \cref{algo:Fedsgd_sample_privacy}  satisfies  $
(T \epsilon'(\alpha) + \log \frac{\alpha -1 }{\alpha} - \frac{\log\delta + \log \alpha}{ \alpha -1}  , \delta)$-DP with $\epsilon (\alpha)= \alpha/(2m\sigma^2)$ where  $\alpha$ is the RDP order and 
\begin{small}
\begin{equation}
\begin{aligned}
& \epsilon^{\prime}(\alpha) \leq \frac{1}{\alpha-1} \cdot  \log \left(1+(pq)^{2}\left(\begin{array}{c}
\alpha \\
2
\end{array}\right) \min \left\{4\left(e^{\epsilon(2)}-1\right), e^{\epsilon(2)} \cdot \right.\right.\\
&\left.\left. \min \left\{2,\left(e^{\epsilon(\infty)}-1\right)^{2}\right\}\right\} +\sum_{j=3}^{\alpha} (pq)^{j}\left(\begin{array}{c}
\alpha \\
j
\end{array}\right) e^{(j-1) \epsilon(j)} \min \left\{2,\left(e^{\epsilon(\infty)}-1\right)^{j}\right\}\right)  \nonumber
\end{aligned}
\end{equation}
\end{small}
\end{restatable}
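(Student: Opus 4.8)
The plan is to recognize each round of \insdpfedsgd{} as an instance of the subsampled Gaussian mechanism, compose the per-round R\'enyi guarantees over the $T$ rounds, and finally convert the resulting RDP bound into an $(\epsilon,\delta)$-DP statement. First I would argue that, up to data-independent post-processing, round $t$ of Algorithm~\ref{algo:Fedsgd_sample_privacy} produces only the noisy gradient sum $G_t := \sum_{i\in U_t}\sum_{x_j\in b_t^i}\bar g(x_j) + \sum_{i\in U_t}\mathcal{N}_i(0,\sigma^2 S^2)$: by \eqref{eq:onestep_globalupdate_fedsdg} the update is $w_t = w_{t-1} - \frac{\eta}{mL}G_t$, and the map $G_t\mapsto w_t$ is deterministic given $w_{t-1}$, which is itself a function of earlier outputs, hence post-processing and irrelevant to the privacy accounting. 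Because the $m$ per-user noises are independent, $\sum_{i\in U_t}\mathcal{N}_i(0,\sigma^2 S^2)$ is distributed as $\mathcal{N}(0, m\sigma^2 S^2)$, so the noise scale ``seen'' by $G_t$ corresponds to a noise multiplier $\sqrt{m}\,\sigma$ relative to sensitivity $S$.

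Next I would pin down the effective subsampling rate. Fixing two instance-level adjacent datasets (Definition~\ref{def:insdp}) that differ in a single instance $x^*$ belonging to some user $i_0$, the clipped gradient $\bar g(x^*)$ enters $G_t$ iff $i_0\in U_t$ (probability $q=m/N$ for a uniform size-$m$ subset of $N$ users without replacement) and $x^*$ lands in the sampled batch $b_t^{i_0}$ (probability $p$), and these two events are independent; hence $x^*$ participates with probability $\gamma:=pq$, and conditioned on participation $G_t$ changes by the single vector $\bar g(x^*)$ of $\ell_2$ norm at most $S$. Thus, with respect to the neighbor in which $x^*$ is absent, the round-$t$ mechanism is exactly the subsampled Gaussian mechanism with sampling rate $\gamma=pq$ and noise multiplier $\sqrt{m}\,\sigma$ --- precisely the configuration handed to $\Mcal.\textit{accum\_priv\_spending}(\sqrt{m}\sigma, pq, \delta)$. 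Invoking the RDP of the (un-subsampled) Gaussian mechanism from Mironov~\cite{mironov2017renyi} gives $\epsilon(\alpha)=\alpha/(2m\sigma^2)$, and plugging this base RDP curve into the subsampled-RDP amplification bound of Wang~\textit{et al.}~\cite{wang2019subsampled} with rate $pq$ yields the displayed bound on $\epsilon'(\alpha)$ for one round.

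Then I would compose: the $T$ rounds use fresh, independent user sampling, batch sampling, and Gaussian noise, with earlier outputs only affecting the model fed in, so adaptive composition of RDP~\cite{mironov2017renyi} applies and $\Mcal$ is $(\alpha, T\epsilon'(\alpha))$-RDP for every $\alpha>1$. Finally I would apply the RDP-to-$(\epsilon,\delta)$-DP conversion of Balle~\textit{et al.}~\cite[Theorem~20]{balle2020hypothesis}, which states that an $(\alpha,\rho)$-RDP mechanism is $\bigl(\rho + \log\tfrac{\alpha-1}{\alpha} - \tfrac{\log\delta + \log\alpha}{\alpha-1},\ \delta\bigr)$-DP; substituting $\rho = T\epsilon'(\alpha)$ gives exactly the claimed guarantee.

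The hard part will be the second paragraph: rigorously collapsing the two-level, sampling-without-replacement structure (users, then batches within each user) together with the summed-and-rescaled Gaussian noise into a single subsampled Gaussian mechanism to which \cite{wang2019subsampled} applies verbatim. Three points need care --- that the per-user noises aggregate to variance $m\sigma^2 S^2$ so the effective noise multiplier is $\sqrt{m}\,\sigma$ rather than $\sigma$; that although instances sharing a user are sampled in a correlated fashion, a single-instance change perturbs only that user's contribution, so the worst-case participation probability of the differing instance is exactly $pq$; and that the amplification bound is stated for the add/remove (or replace) adjacency matching instance-level DP. The remaining ingredients --- Gaussian RDP, adaptive RDP composition, and the RDP-to-DP conversion --- are routine invocations of the cited results.
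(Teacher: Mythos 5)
Your proposal is correct and follows essentially the same route as the paper: identify the effective per-instance sampling rate as $pq$, show the aggregated Gaussian noise corresponds to an effective noise multiplier $\sqrt{m}\,\sigma$ relative to the per-instance sensitivity (the paper does this by rescaling the sensitivity to $S'=\eta S/(Lm)$ rather than via your post-processing framing, but the ratio is identical), then apply the subsampled-RDP bound of Wang~\etal, RDP composition over $T$ rounds, and the Balle~\etal{} RDP-to-DP conversion. No gaps.
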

\vspace{-4mm}
\begin{proof}[Proof sketch]
We use $pq$ to represent \emph{instance-level} sampling probability, $T$ to represent FL training rounds, $\sigma \sqrt{m}$ to represent the \textit{equivalent global noise} level. 
The rest of the proof follows \textbf{(1)} RDP subsampling amplification~\cite{wang2019subsampled}, \textbf{(2)} RDP composition for privacy budget accumulation over $T$ rounds based on the RDP composition~\cite{mironov2017renyi}
and \textbf{(3)} transferring RDP guarantee to DP guarantee based on the conversion theorem~\cite{balle2020hypothesis}.
\end{proof}

\vspace{-1mm}
\subsubsection{Instance-level DP for FedAvg}
Dopamine only allows users to perform \textit{one} step of DP-SGD~\cite{abadi2016deep} during each FL round,
while in practice, users are typically allowed to update their local models for many steps before submitting updates to reduce communication costs. To solve this problem, we introduce \insdpfedavg{} (\cref{algo:Fedavg_sample_privacy} in~\cref{sec:dpfl_algo}),
where each user $i$ performs local DP-SGD for multiple steps so that the local training mechanism $\Mcal^i$ satisfies instance-level DP. Then, the server aggregates the updates by FedAvg. We prove that the global mechanism $\Mcal$ preserves instance-level DP using DP parallel composition theorem~\cite{mcsherry2009privacy}.

In \insdpfedavg{}, before FL training, local privacy costs $\{\epsilon_0^i\}_{i \in [N]}$ are initialized as 0.
At round $t$, if user $i$ is not selected, its local privacy cost is kept unchanged $\epsilon_{t}^i \gets \epsilon_{t-1}^i$ since local dataset $D_i$ is not accessed.
Otherwise, user $i$ updates the local model by running DP-SGD for $V$ local steps with batch sampling probability $p$, noise level $\sigma$ and clipping threshold $S$, and $\epsilon_{t}^i $ is accumulated upon $\epsilon_{t-1}^i$ via its local RDP accountant.
Next, the server aggregates the updates from selected users and leverages the parallel composition in \cref{thm:insdp_t_round} to calculate the global privacy cost $\epsilon_t= \max_{i \in [N]}\epsilon_t^i$.
After $T$ rounds, the mechanism $\Mcal$ that outputs the final FL global model satisfies instance-level $(\epsilon_T,\delta)$-DP.

To derive the privacy guarantee for \insdpfedavg{}, we analyze the privacy cost \textit{accumulation} for each local model across FL training rounds, as well as the privacy cost \textit{aggregation} during model aggregation on the server side at each round.

\vspace{-1mm}
\begin{restatable}[\insdpfedavg{} Privacy Guarantee]{proposition}{thminsdpfedavgprivacytround}
\label{thm:insdp_t_round}
\sloppy
In \cref{algo:Fedavg_sample_privacy}, during round $t$,  the local mechanism $\Mcal^i$ satisfies $(\epsilon^i_t,\delta^i)$-DP, and the global mechanism $\Mcal$ satisfies $\left(\max_{i \in [N]}\epsilon_t^i, \delta^i \right)$-DP. 
\end{restatable}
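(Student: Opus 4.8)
The plan is to generalize the argument behind Lemma~\ref{lemma:insdp_one_round} from a single pure-DP round to $t$ rounds with $(\epsilon,\delta)$ guarantees by keeping all the accounting in R\'enyi DP and only converting back to $(\epsilon,\delta)$-DP at the very end. As in that lemma, I would start from the observation that FL partitions the global dataset $D=\cup_i D_i$ into disjoint pieces and that a single-instance change $D\to D'$ lies entirely in one piece, say $D_k\to D_k'$, leaving every other $D_i$ untouched. Now list all the atomic computations performed through round $t$ in chronological order: for each round $s\le t$, the server subsamples $U_s$; each selected user $i\in U_s$ runs its $V$ local DP-SGD steps to produce $\Delta w_s^i$; the server forms $w_s\gets w_{s-1}+\frac1m\sum_{i\in U_s}\Delta w_s^i$. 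Among these, the subsampling of $U_s$ and the aggregation are data-independent, and for $i\ne k$ user $i$'s update reads only the unchanged $D_i$ (the differing model $w_{s-1}$ enters merely as auxiliary input). Hence the only atomic steps whose output distribution differs between $D$ and $D'$ are user $k$'s local DP-SGD steps, i.e.\ those in rounds $s$ with $k\in U_s$.

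I would then treat the whole of Algorithm~\ref{algo:Fedavg_sample_privacy} as one adaptive composition over these chronologically ordered atomic steps. Within a round the selected users' updates are conditionally independent given $w_{s-1}$ and act on disjoint datasets, so the joint density of $(\Delta w_s^i)_{i\in U_s}$ factorizes and the R\'enyi divergence between its $D$- and $D'$-distributions equals the contribution of user $k$'s step alone (this is parallel composition cast in the R\'enyi sense); data-independent and non-$k$ steps contribute zero R\'enyi cost; and the rescaling by $\eta/L$ and the server averaging are deterministic post-processing. Composing over all rounds via the RDP composition theorem (Mironov~\etal~\cite[Proposition 1]{mironov2017renyi}) shows that the map $D\mapsto w_t$ has exactly the R\'enyi privacy curve that user $k$'s own local RDP accountant accumulates in Algorithm~\ref{algo:Fedavg_sample_privacy} --- both accountants compose the same per-step subsampled-Gaussian mechanism over the same set of rounds. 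Converting that curve to $(\epsilon,\delta)$-DP with the conversion of Balle~\etal~\cite[Theorem 20]{balle2020hypothesis} and the same $\delta$ used by the local accountant yields precisely $(\epsilon^k_t,\delta)$-DP for the global model when the modified instance lies in $D_k$; taking the worst case over $k\in[N]$ gives $(\max_{i\in[N]}\epsilon^i_t,\delta)$-DP. The passage from the transcript $(\Delta w_s^i)_{s,i}$ to $w_t$ is one last post-processing step.

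The main obstacle is making the parallel-composition step rigorous despite the cross-round coupling: the model $w_{s-1}$ broadcast to all users at round $s$ is a function of every user's earlier updates, so the $N$ local mechanisms are not run in isolation the way the textbook parallel-composition statement assumes. The remedy is exactly the chronological ``one big adaptive composition'' framing above, together with the fact that the per-step subsampled-Gaussian RDP bound is valid even when the model fed to user $k$'s step is chosen adaptively from earlier outputs --- which is the regime RDP composition is built for --- so only user $k$'s steps ever accumulate cost. A secondary point worth stating explicitly is that the right-hand-side $\delta$ does not degrade with $t$: no $\delta$ is incurred anywhere in the R\'enyi bookkeeping, and the single RDP-to-DP conversion at the end is done with the same $\delta$ that defines each $\epsilon^i_t$, so the global and local $\delta$'s coincide.
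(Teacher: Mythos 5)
Your proposal is correct and follows essentially the same route as the paper's proof: both reduce to the observation that the single modified instance lies in one user's disjoint local dataset, so only that user's local DP-SGD steps accumulate privacy cost (parallel composition within a round, composition of that user's steps across rounds via its local RDP accountant), and the global guarantee is the maximum of the local budgets. Your execution is somewhat more careful than the paper's --- you keep the bookkeeping in R\'enyi DP throughout, explicitly justify the parallel-composition step despite the cross-round adaptivity of $w_{s-1}$, and note why $\delta$ does not degrade with $t$, whereas the paper first runs the argument in pure DP with sequential composition and then informally substitutes the RDP accountant.
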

\vspace{-4mm}
\begin{proof}[Proof sketch]
When $D'$ and $D$ differ in one instance, the modified instance only falls into one user's local dataset for any $t$ training round, and thus parallel composability of DP~\cite{mcsherry2009privacy}  applies. Moreover, server aggregation does not increase privacy costs due to DP post-processing property. The local cost $\epsilon_i$ is only accumulated via the local RDP accountant. Finally, the privacy guarantee corresponds to the worst case and is obtained by taking the maximum local privacy cost across all the users. 
Proof is in \cref{sec:dpfl_algo}.
\end{proof}
\vspace{-2mm}
\textit{Remark.}
\cref{thm:insdp_t_round} provides the privacy guarantee for trained FL global model when users perform local DP-SGD training. To achieve that,  we examine the outcomes from FL local and global randomized mechanisms and analyze the accumulation of local privacy costs and subsequent aggregation of global privacy costs over different training rounds. In the centralized setting, \citet{yu2019differentially} analyzes disjoint data batching and presents similar results. Recent studies~\cite{liu2021projected,liu2022privacy,yangprivatefl}  directly apply the results from~\cite{yu2019differentially} for instance-level DPFL. However, these studies lack a thorough privacy analysis in the context of FL, and our analysis fills this gap.

\vspace{-1mm}
\subsection{Certified Robustness of Instance-level DPFL} 
\label{sec:ins_robustness}

\subsubsection{Threat Model.}\label{subsubsec:ins_level_threat_model}

We consider there are in total $k$ poisoned instances that the same or multiple users could control. 
\squishlist 
\item \textbf{Attack Goal.} The goal of attackers is to mislead the trained global model to make mispredictions by injecting poisoning data during local training. 
\item \textbf{Attack Capability.} In accordance with prior work~\cite{naserilocal}, for attack {capability}, we consider that local users, including adversaries, follow the DP training protocol to protect data privacy. That means the adversaries need to follow the training protocol  to sample local data randomly during training. This scenario is realistic for instance-level DPFL because FL users often run pre-defined programs~\cite{kairouz2021advances, bonawitz2019towards} that implement DP mechanisms.  For example, according to \citet{bonawitz2019towards}, ``If the device has been selected, the FL runtime receives the FL plan, queries the app’s example store for data requested by the plan, and computes plan-determined model updates and metrics.''  
On the other hand, the users have full control over their training data, so they can arbitrarily manipulate the local training data.
Under this setting, the trained FL model is guaranteed to satisfy instance-level DP. 
\item \textbf{Attack Strategy.} It includes data poisoning attacks, e.g., backdoor~\cite{gu2019badnets,chen2017targeted} or label-flipping~\cite{biggio2012poisoning,huang2011adversarial}. Our analysis of certified robustness is agnostic to the specific attack strategy employed.
\squishend

\subsubsection{Certified Robustness.}
According to the group DP property and the post-processing property for the FL model with instance-level $(\epsilon,\delta)$-DP, 
we prove that our robust certification results for user-level DP are also applicable to instance-level DP. Below is the formal theorem (proof is given in Appendix~\ref{sec:app_proofs}).

\vspace{-1mm}
\begin{restatable}[]{theorem}{thmapplytoinsdp}
\label{thm:thmsapplytoinsdp}
Suppose $D$ and $D'$ differ by $k$ instances, and mechanism $\Mcal$ satisfies instance-level $(\epsilon,\delta)$-DP. The results on user-level DPFL in \cref{thm_pred_consist_one_client},~\cref{thm_pred_consist_k_client},~\cref{thm_costfunc_k_client}, and \cref{thm_k_clients_bound} still hold for the instance-level DPFL $\Mcal$, $D$, and $D'$.	
\end{restatable}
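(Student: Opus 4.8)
The plan is to observe that Theorem~\ref{thm:thmsapplytoinsdp} is essentially a ``type-checking'' statement: the proofs of Theorems~\ref{thm_pred_consist_one_client}, \ref{thm_pred_consist_k_client}, and~\ref{thm_costfunc_k_client}, and Corollary~\ref{thm_k_clients_bound} only ever use two facts about the mechanism $\Mcal$ and the pair $(D,D')$, namely (i) that $\Mcal$ satisfies $(\epsilon,\delta)$-DP with respect to the notion of adjacency under which $D$ and $D'$ differ by a single unit, so that Group DP (Definition~\ref{lemma_group_dp}) upgrades this to $(k\epsilon, \frac{1-e^{k\epsilon}}{1-e^\epsilon}\delta)$-DP when $D$ and $D'$ differ by $k$ units, and (ii) the post-processing property of DP, which lets us pass from the output distribution of $\Mcal$ to the expected scoring function $F_c(\Mcal(\cdot),x)$ and the expected attack cost $J(\cdot)$. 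Neither of these facts cares whether a ``unit'' is a user or an instance. So the proof strategy is to re-instantiate the earlier arguments with ``user'' replaced by ``instance'' throughout.

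Concretely, I would proceed as follows. First, invoke Definition~\ref{def:insdp}: since $\Mcal$ is instance-level $(\epsilon,\delta)$-DP, for any $D,D'$ differing in one instance we have $\Pr[\Mcal(D)\in E]\le e^\epsilon\Pr[\Mcal(D')\in E]+\delta$ for all measurable $E$. Second, apply Group DP (Definition~\ref{lemma_group_dp}) verbatim — its statement is phrased purely in terms of datasets differing by $k$ individuals and is agnostic to the adjacency notion — to conclude $\Mcal$ is $(k\epsilon,\frac{1-e^{k\epsilon}}{1-e^\epsilon}\delta)$-DP for the $D,D'$ here that differ by $k$ instances. Third, feed this into the identical chains of inequalities used in the proofs of Theorems~\ref{thm_pred_consist_one_client} and~\ref{thm_pred_consist_k_client}: lower-bound $F_{\Aclass}(\Mcal(D'),x)$ and upper-bound $F_{\Bclass}(\Mcal(D'),x)$ in terms of their counterparts at $D$ using the $(k\epsilon,\cdot)$-DP bound applied to the (post-processed, hence still DP) random variables $f_c(\Mcal(\cdot),x)\in[0,1]$, and observe that the separation condition forcing $H(\Mcal(D'),x)=H(\Mcal(D),x)=\Aclass$ is exactly Eq.~\eqref{pred_cons_condition} (for $k=1$) or Eq.~\eqref{eq:cerk} (for general $k$) with no change. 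Fourth, for the attack-cost statements, note that $C(\Mcal(\cdot))$ is a bounded post-processing of $\Mcal(\cdot)$, so $J(\cdot)=\mathbb E[C(\Mcal(\cdot))]$ obeys the same two-sided Group-DP transfer bounds derived in the proof of Theorem~\ref{thm_costfunc_k_client}; Corollary~\ref{thm_k_clients_bound} then follows by the same algebraic inversion. In each case the only substitution is ``adjacent = differ by one user'' $\mapsto$ ``adjacent = differ by one instance.''

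The one genuine subtlety — and the step I would flag as needing a sentence of care rather than a heavy computation — is justifying that the notion of ``$D$ and $D'$ differing by $k$ instances'' correctly composes with Group DP in the federated setting. For user-level DP this was transparent because a user either is or is not replaced; for instance-level DP one must confirm that replacing $k$ instances can be decomposed into $k$ single-instance replacements each of which stays within the instance-level adjacency relation (i.e., that the intermediate datasets are legitimate), so that the $k$-fold triangle-inequality argument underlying Group DP goes through. Since an instance-level edit to any single local dataset $D_i$ keeps $D=\bigcup_i D_i$ a valid FL dataset, this holds, and the decomposition is immediate; I would state this explicitly to close the argument. Everything else is a literal transcription of the earlier proofs, so no new estimates are required. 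The full write-up is deferred to Appendix~\ref{sec:app_proofs}.
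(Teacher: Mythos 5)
Your proposal is correct and matches the paper's own argument: the appendix proof of Theorem~\ref{thm:thmsapplytoinsdp} likewise observes that the earlier proofs rely only on the $(\epsilon,\delta)$-DP inequality and Group DP, which are agnostic to the adjacency relation, so replacing ``user'' by ``instance'' carries every step through unchanged. Your extra remark on decomposing a $k$-instance replacement into $k$ valid single-instance replacements is a welcome (if minor) addition of rigor, not a departure from the paper's route.
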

\vspace{-2mm}
\textit{Remark.}
We analyze the underlying relationship between privacy and certified robustness under both user-level DPFL and instance-level DPFL, as well as the relationship between these two levels of privacy in FL.
From the privacy perspective, the same $\epsilon$ for these two different privacy levels signifies different privacy scopes. 
One straightforward way to convert instance-level DP to user-level DP is to use Group DP~\cite{dwork2014algorithmic} to incorporate all instances of a user, which could lead to a loose privacy bound.
On the other hand, a randomized mechanism that satisfies $(\epsilon, \delta)$ user-level DP also satisfies $(\epsilon, \delta)$ instance-level DP based on their definitions.
From the certified robustness perspective,
 the same $\epsilon$ on two different privacy levels  implies different levels of robustness.
 When considering the ability to tolerate  adversarial poisoning instances, instance-level DPFL provides rigorous certified robustness as a function of the number of poisoning instances, while
 user-level DPFL may indicate stronger robustness if we consider injecting all poisoning instances with one user. The user-level DPFL, however, might compromise the model utility when controlling per-user sensitivity during DP training.
Thus, different types of DPFL mechanisms and algorithms may be chosen to protect both privacy and robustness considering several factors such as adversarial strategies, data types, and trained model sensitivity. Our evaluation on diverse datasets and different DPFL algorithms in Section~\ref{sec_exp} will validate our analysis and findings on both user-level and instance-level DP, as well as provide more observational insights.

\vspace{-1mm}
\section{Experiments}
\label{sec_exp}

In this section, we conduct the evaluation on three datasets (both image and text data) for the certified robustness of different DPFL algorithms against various poisoning attacks to verify the insights from our theorems.
We highlight our main results and present some interesting findings and ablation studies. 

\vspace{-1mm}
\subsection{Experimental Setup}
\label{subsec:exp_setup}
\subsubsection{Datasets and Models.}
We consider three datasets: 
image classification on {\mnist{}}, {\cifar{}} and text sentiment analysis on tweets from {Sentiment140}~\cite{go2009twitter} (Sent140), which involves classifying Twitter posts as positive or negative. 
For \mnist{}, we use a CNN model with two Conv-ReLu-MaxPooling layers and two linear layers; for \cifar{}, we use the CNN architecture from PyTorch differential privacy library~\cite{opacus2021} which consists of four Conv-ReLu-AveragePooling layers and one linear layer. 
In line with previous work on 
DP ML~\cite{jagielski2020auditing,ma2019data} and 
backdoor attacks~\cite{NEURIPS2018_madry_sepctral,weber2020rab},
we mainly discuss the binary classification for \mnist{} (digit 0 and 1) and \cifar{} (airplane and bird) in the main text, and defer their 10-class results to \cref{sec:app_exp_details}.
For \sent{}, we use a two-layer LSTM classifier containing 256 hidden units with pretrained 300D GloVe embedding~\cite{pennington2014glove} following \cite{li2018federated}. 

\vspace{-0.5em}
\subsubsection{Training Setups}
\label{subsubsec:training_setups}
Unless otherwise specified, we split the training datasets for $N$ FL users in an i.i.d manner for \mnist{} and \cifar{}. For \sent{}, the local datasets are naturally non-i.i.d,  where each Twitter account corresponds to an FL user. We also study the effect of data heterogeneity degrees on certified robustness by simulating FL non-i.i.d setting based on Dirichlet distribution~\cite{hsu2019measuring} in \cref{sec:noniid_results}. 
FL users run SGD with learning rate $\eta$, momentum $0.9$, and weight decay $0.0005$  to update the local models. The training parameter setups, including the  number of total users $N$, the number of selected users per round $m$, local epochs $E$, the number of local SGD steps $V$, local learning rate $\eta$, batch size, etc., are summarized in \cref{tb:dataset_parameters}. 

\revise{
To simulate cross-device settings for \userdpfedavg{}, we follow the FL settings in previous studies and use Sent140 data with $\sim$ 800 clients~\cite{li2018federated}, and CIFAR/MNIST with 200 clients~\cite{mcmahan2016communication}. To simulate cross-silo FL settings for \insdpfedavg{}, we train  DPFL models}
on \mnist{} and \cifar{} with 10 users.
Following \cite{mcmahan2018learning} that use $\delta \approx \frac{1}{N^{1.1}}$ as privacy parameter, for \userdpfedavg{} we set $\delta=0.0029$ for \mnist{} and \cifar{}, and $\delta=0.000001$ for \sent{}   according to the total number of  users; for \insdpfedavg{} we set  $\delta=0.00001$ according the total number of training samples.
When training on CIFAR10, we follow the standard practice for differential privacy~\cite{abadi2016deep,jagielski2020auditing} that fine-tunes a whole model pre-trained non-privately on CIFAR100~\cite{cifar}.
We refer the readers to \cref{sec:app_exp_details} for more details about detailed hyperparameters for differential privacy.

{\begin{table}
\caption{Dataset description and parameters.}
\vspace{-4mm}
\begin{center}
\scalebox{0.7}{
\begin{tabular}{l  c c c c c c c c c c c}
\toprule
Algorithm & Dataset  &  $N$ &  $m$  & $E$ & $V$ & batch size & $\eta$ &  $S$ & $\delta$ & $\bar C$ \\
\midrule 
\userdpfedavg & \mnist{} & 200 & 20 & 10 & /& 60 &  0.02 & 0.7 & 0.0029& 0.5\\
\userdpfedavg & \cifar{}   & 200 & 40 & 5 & /& 50 &  0.05 & 1 & 0.0029& 0.2\\
{\userdpfedavg} & {\sent{}} &  {805} &  {10} &  {1} & / &  {10} &   {0.3} &  {0.5} &  {0.000001} & 1.4\\
\insdpfedavg & \mnist{}  & 10 & 10 & /& 25 & 50 &  0.02 & 0.7 & 0.00001& 0.5  \\
\insdpfedavg & \cifar{} & 10 & 10 & /&  100 & 50 &  0.05 & 1 & 0.00001& 2\\
\bottomrule 
\end{tabular}
}
\end{center}
\label{tb:dataset_parameters}   
\vspace{-7mm}
\end{table}
}

\vspace{-0.5em}
\subsubsection{Poisoning Attacks.}
\sloppy
We evaluate four poisoning attacks against our DPFL mechanisms, which represent the common threats in FL research.
We consider \textit{backdoor attacks (BKD)} on image datasets~\cite{bagdasaryan2020backdoor} and \textit{label flipping attacks (LF)} on image and text datasets~\cite{fung2020limitations} against both levels of DPFL. 
For \insdpfedavg{}, we evaluate the worst-case where $k$ backdoored or label-flipped instances are injected into the dataset of one user.
\revise{For \userdpfedavg{}, we additionally evaluate the 
\textit{static optimization attacks (STAT-OPT)}~\cite{shejwalkar2021manipulating}, which solve the adversarial optimization problem to find poisoning model updates, as well as}
\textit{distributed backdoor attack (DBA)}~\cite{xie2019dba}, which decomposes the backdoor pattern into several smaller ones and embeds them into different local training sets for different adversarial users. Moreover, we also consider BKD, LF, and DBA via \textit{model replacement} attack~\cite{bagdasaryan2020backdoor,bhagoji2018analyzing} where $k$ attackers train the local models using local datasets with $\alpha$ fraction of poisoned instances, and scale the malicious updates directly with hyperparameter $\gamma$, 
i.e., $\Delta w_{t}^i \gets  \gamma \Delta w_{t}^i$, 
before sending them to the server. 
This way, the malicious updates would have a stronger impact on the FL model. 
Note that even when attackers perform scaling after server clipping, the sensitivity of each model update is still upper-bounded by the clipping threshold $S$, so the privacy guarantee of user-level DPFL still holds under poisoning attacks via model replacement.

Specifically, for the attacks against {\userdpfedavg{}}, by default, the local poison fraction is $\alpha=100\%$, and the scale factor is $\gamma=50$. We use the same parameters setups for all $k$ attackers. In terms of label flipping attacks, the attackers swap the label of images in the source class (digit 1 for \mnist{}; bird for \cifar{}; positive for \sent{}) into the target label (digit 0 for \mnist{}; airplane for \cifar{}; negative for \sent ). 
In terms of backdoor attacks in \mnist{} and \cifar{}, the attackers add a triangle pattern in the right lower corner of the image as the backdoor pattern and swap the label of any sample with such pattern into the target label (digit 0 for \mnist{}; airplane for \cifar{}). In terms of distributed backdoor attacks, the triangle pattern is evenly decomposed and injected by the $k$ attackers. 
For the attacks against  {\insdpfedavg{}}, the same target classes and backdoor patterns are used as \userdpfedavg{}.

\vspace{-2mm}
\subsubsection{Evaluation Metrics.}\label{subsubsec:evalution_metrics}
We consider two evaluation metrics based on our robustness certification criteria. 
\squishlist
    \item \textbf{Certified Accuracy.} To evaluate the \textit{certified prediction}, we calculate {\ceracc{}}, which is the fraction of the test set for which the poisoned DPFL model makes correct and the same prediction compared with that of the clean model. \revise{The test set can be either poisoned or clean based on \cref{thm_pred_consist_k_client}. Given that the certifications are agnostic to the actual attack strategy, and certain attacks, such as model poisoning and label flipping, do not produce poisoned test input samples $x$,  we use the clean test samples to calculate the certification following the standard setting of certified robustness in centralized systems~\cite{cohen2019certifiedrandsmoothing}.}
    Given a test set of size $n$, for the $i$-th test sample \revise{$x_i$}, the ground truth label is $y_i$, the output prediction is $c_i$ , and the number of adversarial users/instances that can be certifiably tolerated is $\cerk_i$ \revise{based on Equation \ref{eq:cerk}}.
We calculate the \ceracc{} given $k$ adversarial users/instances as 
$\frac{1}{n} \sum_{i=1}^n \mathds{1}\{c_i=y_i$ and $\cerk_i  \ge k\}$. 

    \item \textbf{Lower bound of \attackcost.} To evaluate the \textit{certified \attackcost}, we calculate the {lower bound of \attackcost} in \cref{thm_costfunc_k_client}: $ \underline{J(D')} = \max \{e ^{-k\epsilon} J(D) - \frac{1-e^{-k \epsilon }}{e^\epsilon-1}\delta \bar C ,0 \} $. This lower bound represents the cost of the attacker for performing poisoning attacks. The lower the certified attack inefficacy is, the less robust the model is. We evaluate the tightness of $\underline{J(D')}$ by comparing it with the empirical \attackcost $J(D')$ under different attacks.
\squishend

\vspace{-1mm}
\subsubsection{Robustness Certification with Monte Carlo Approximation.}\label{subsubsec:certification_monte_carlo_approx}
The robustness certifications presented in our theorems depend on the expected confidence $F_{c}(\mathcal{M}(D), x)$ for class $c$ or expected attack inefficacy $J(D)$. We take $F_{c}(\mathcal{M}(D), x)$ as an example here, and denote $F_{c}(\mathcal{M}(D), x)$ as $F(\mathcal{M})$ for simplicity.
In practice, $F(\mathcal{M})$ is not directly used for prediction because the true expectation cannot be analytically computed for deep neural networks. \textit{To empirically verify the insights provided by our theorems}, we follow the convention in prior work on certified robustness~\cite{Lecuyer2019,ma2019data,weber2020rab, rosenfeld2020certifiedlableflip, cao2021provably,cohen2019certified} to use $\widetilde{F}(\mathcal{M})$, which is a Monte Carlo approximation of $F(\mathcal{M})$ by taking the average over $O$ models outputs for utility evaluation in our experiments.  
Note that \textbf{(1)} from the \textit{DP perspective}, increasing $O$ increases the overall privacy budget as the sampling process re-accesses the sensitive data and consumes the privacy budget. Based on standard DP composition theory~\cite{dwork2006our}, calculating $\widetilde{F}(\mathcal{M})$ costs $O\epsilon$ privacy budget, where $\epsilon$ is the privacy budget consumed by training one model; 
\textbf{(2)} From the \textit{robustness certification perspective}, the estimation of $\widetilde{F}(\mathcal{M})$  will be more accurate with higher confidence when we use larger $O$;
\textbf{(3)} Using a single model  for prediction is equivalent to computing $\widetilde{F}(\mathcal{M})$ with $O=1$, leading to strong privacy protection but low confidence for the robustness certification.

Specifically, we estimate the expected class confidence by $F_c(\Mcal(D),x) \approx \frac{1}{O} \sum_{s=1}^O f^s_c $ to evalute \cref{thm_pred_consist_k_client}, where  each $f^s_c = f_c(\Mcal(D),x)$ is obtained from one DPFL model.
Similarly, we approximate the \attackcost to evaluate \cref{thm_costfunc_k_client} and \cref{thm_k_clients_bound}.
\revise{We use a relatively large $O=1000$ for certified accuracy and $O=100$ for certified \attackcost in experiments so as to obtain an accurate approximation of the expectation following~\cite{ma2019data}} \revise{and precisely reveal the connections between the privacy parameters ($\epsilon$,$\delta$) and certified robustness under different criteria.}  In \cref{appendix_confidence_level}, we use Hoeffding's inequality~\cite{hoeffding1994probability} to calibrate the empirical estimation with confidence level parameter $\psi$.

{
\newlength{\accuserheightc}
\settoheight{\accuserheightc}{\includegraphics[width=.30\linewidth]{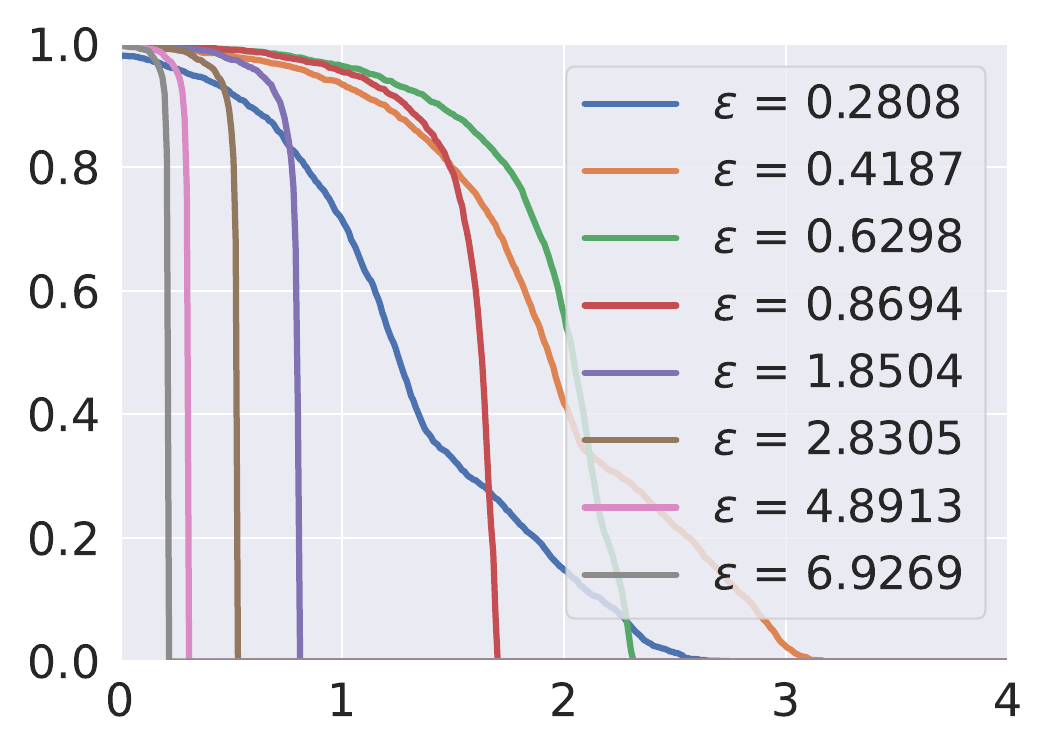}}

\newcommand{\rowname}[1]%
{\rotatebox{90}{\makebox[\accuserheightc][c]{\scriptsize #1}}}

\begin{figure}
\centering

{
\renewcommand{\tabcolsep}{10pt}
\begin{subtable}[]{\linewidth}
\centering
\begin{tabular}{@{}p{2mm}@{}c@{}c@{}c@{}c@{}c@{}c@{}c@{}}
        & \makecell{{\scriptsize (a) \mnist}}
        & \makecell{{\scriptsize (b) \cifar}}
        & \makecell{{\scriptsize (c) \sent }}\\
\rowname{\makecell{\ceracc}}&
\includegraphics[height=\accuserheightc]{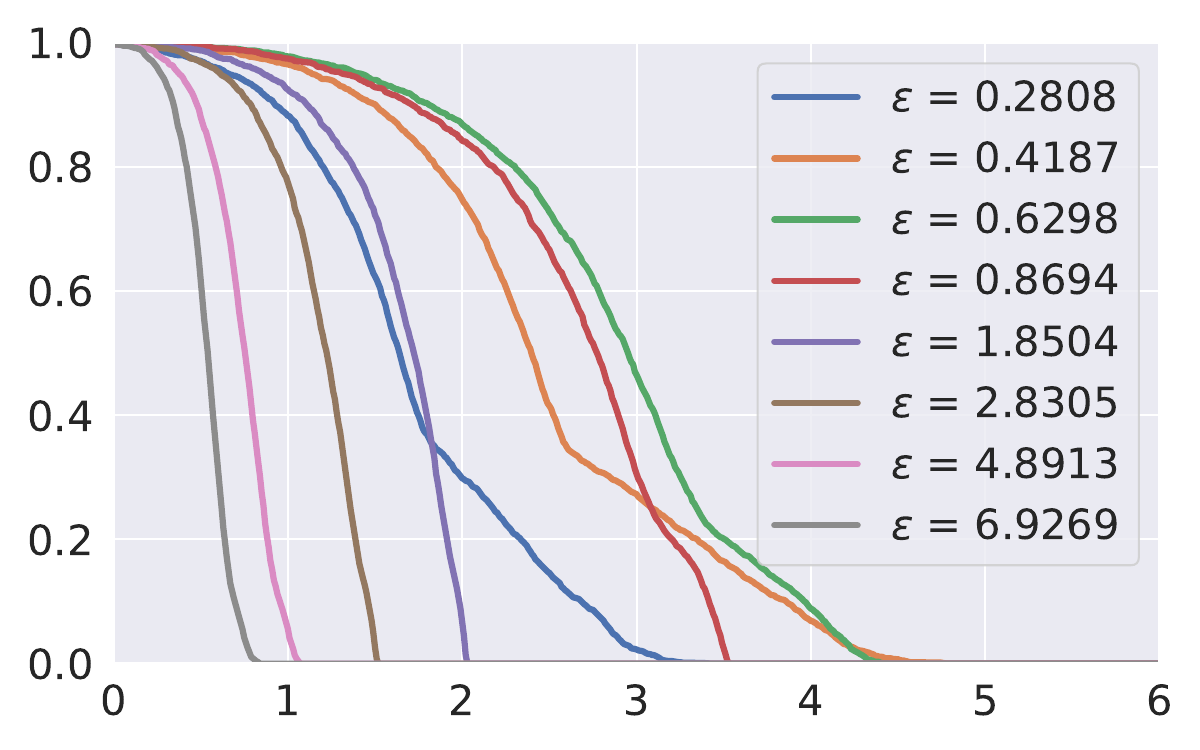}&
\includegraphics[height=\accuserheightc]{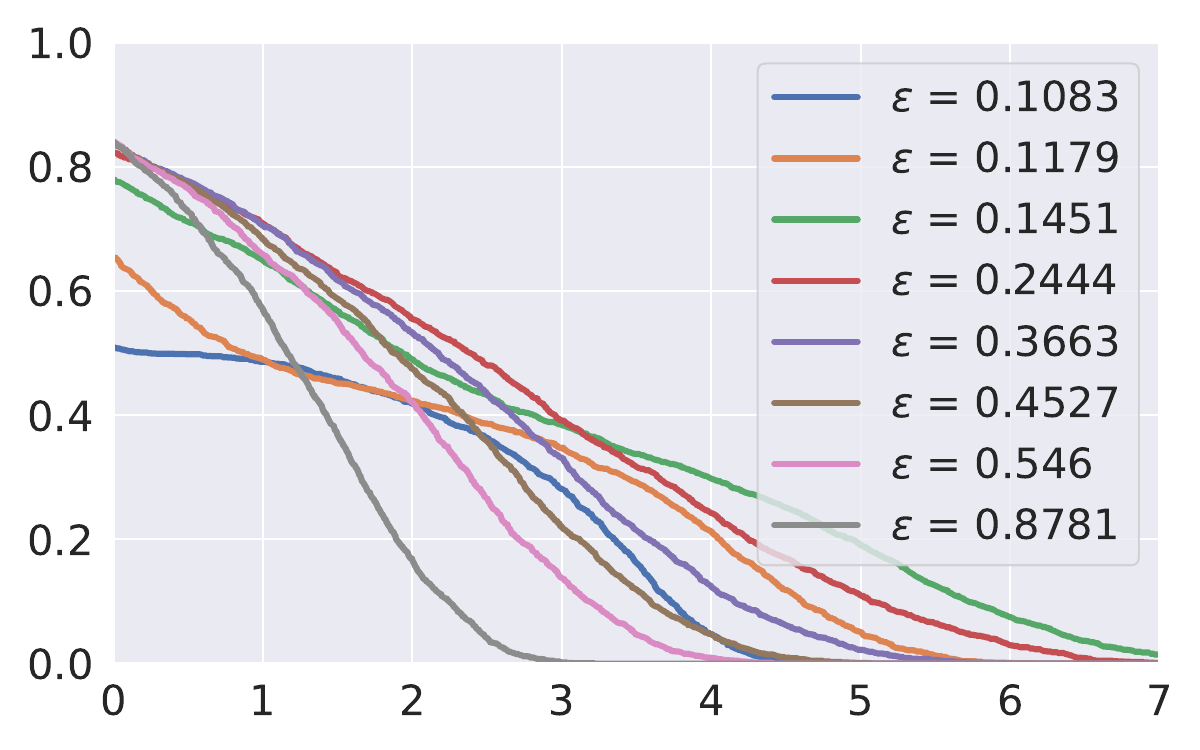}&
\includegraphics[height=\accuserheightc]{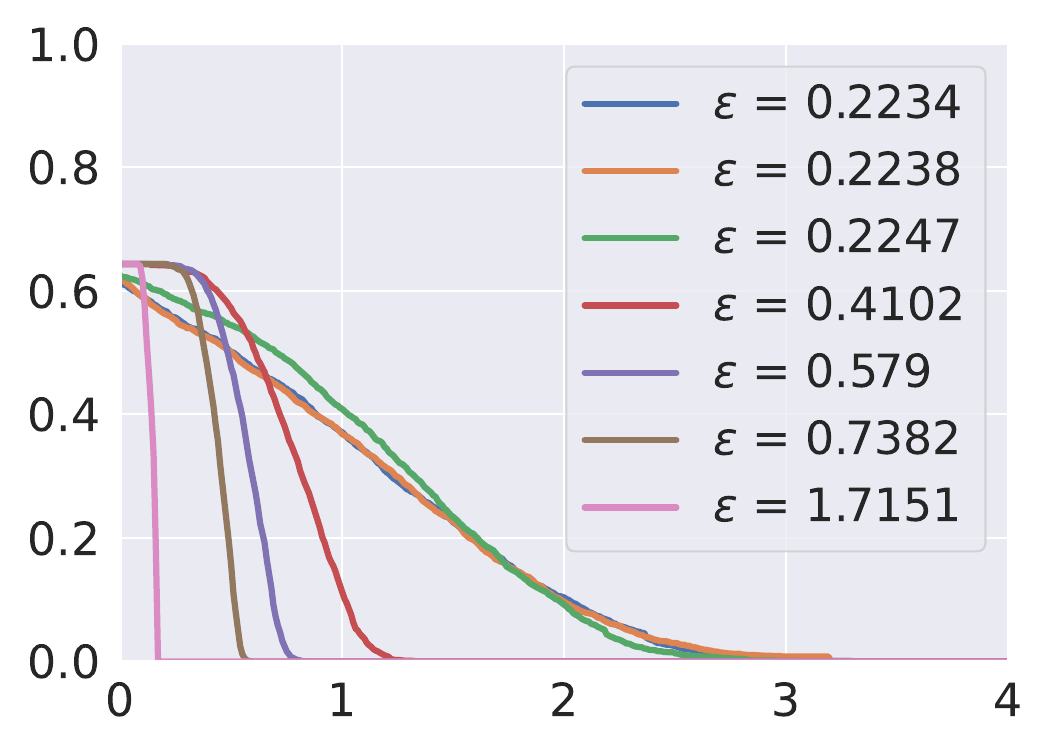}& \\[-2ex]
& \makecell{{\scriptsize $k$}}& \makecell{{\scriptsize $k$}}& \makecell{\scriptsize $k$} 
\end{tabular}
\end{subtable}
}
\vspace{-5mm}
\caption{\small Certified accuracy of \userdpfedavg{} under different privacy budgets $\epsilon$.}
\label{fig:userdp_ceracc}
\vspace{-2mm}
\end{figure}

}

\vspace{-2mm}
\subsection{Evaluation Results of User-level DPFL}

Here we present our main results on user-level DPFL based on the \textbf{certified accuracy} under different (1) privacy budget $\epsilon$, (2) DPFL algorithms, and (3) data heterogeneity degrees; \revise{\textbf{empirical accuracy} under (1) different poisoning attacks and (2) comparison to empirical FL defenses}; \textbf{certified \revise{and empircal} \attackcost} under (1) different $k$ and poisoning attacks, and (2) different $\epsilon$.
\vspace{-0.5em}
\subsubsection{Certified Accuracy under Different $\epsilon$.}
\label{sec:eval_user_level}
\cref{fig:userdp_ceracc} presents the user-level certified accuracy under different $\epsilon$ by training \userdpfedavg{} with different noise scale $\sigma$. \revise{(The uncertified accuracy of \userdpfedavg{} under non-DP training and DP training is deferred to \cref{subsubsec:training_details}.)}
\revise{Since each test sample {$x_i$} has its own certified {$\cerk_i$}, the largest $k$ that an FL model can reach is a threshold that none of the test samples have a larger \revise{$\cerk_i$} than it\revise{, i.e., $\cerk_i<k, \forall i$, which can be observed as the largest value on the x-axis of \cref{fig:userdp_ceracc}}.}  Note that here we calculate the certified \revise{$\cerk_i$} as the numerical upper bound in \cref{thm_pred_consist_k_client}, which could be fractional.

We observe that \revise{\textbf{(1)}} the largest number of adversaries $k$ can be certified when $\epsilon$ is around {$0.6298$ ($0.1451$, $0.2238$) on \mnist{} (\cifar{}, \sent{})}, which verifies the relationship between $\epsilon$ and certified accuracy as discussed in \cref{sec:cer_robust_user_level}.
In particular, when $\epsilon$ is too large, \revise{$\cerk_i$} decreases since $\epsilon$ is in the denominator of Equation~\ref{eq:cerk}; when $\epsilon$ is too small, large noise is added during training, which hurts the model utility, and the model is not confident in predicting the top-1 class, thus decreasing the margin between $F_{\Aclass}$ and $F_{\Bclass}$ and decreasing \revise{$\cerk_i$}. 
\revise{\textbf{(2)} Additionally, for each fixed $k$, there is an optimal $\epsilon$ that yields the maximum certified accuracy due to similar reasons. For example, to certify $k=2$ adversaries, the $\epsilon$ with highest certified accuracy is around $0.6298$ ($0.2444$, $0.2234$) on \mnist{} (\cifar{}, \sent{}). 
}
\revise{\textbf{(3)}} Given that there is a $\epsilon$ achieving maximal certified number of adversaries $k$ \revise{or yielding the maximum certified accuracy under a fixed $k$}, properly choosing $\epsilon$ would be important for certified accuracy. 
\revise{As the optimal $\epsilon$ is data/task-dependent,  one can find it automatically as hyperparameter tuning. Our evaluation can serve as a guide for similar data/tasks to narrow down the search space of $\epsilon$.}
\revise{\textbf{(4)}
We also notice that for certain datasets like \cifar{}, the ideal $\epsilon$ for certified accuracy can be small, primarily because the datasets are inherently difficult to learn. Nevertheless, on simpler datasets like \mnist{}, using $\epsilon$= 0.6298 to train DPFL models remains feasible (with 97\% clean accuracy) and yields the maximal certified $k \approx$ 4.  When DPFL algorithms offer improved utility and a larger confidence margin,  a larger  $\epsilon$ can be used to certify the same $k$, as indicated in \cref{thm_pred_consist_k_client}. Moreover, enhanced privacy accountants that produce a tighter DP bound naturally result in a smaller $\epsilon$ without impacting model utility. As our paper focuses on deciphering the privacy-robustness interplay, our findings — both theoretical and empirical — imply opportunities to further improve the utility of current DPFL algorithms or the tightness of privacy accountants in order to achieve higher certified robustness for FL.
}

{
\newlength{\accuserdpflheightc}
\settoheight{\accuserdpflheightc}{\includegraphics[width=.30\linewidth]{figures/ccsfinal_plots/cer_acc_conf/cer_acc_mnist.pdf}}

\newcommand{\rowname}[1]%
{\rotatebox{90}{\makebox[\accuserdpflheightc][c]{\scriptsize #1}}}

\begin{figure}
\centering
{
\renewcommand{\tabcolsep}{10pt}
\begin{subtable}[]{\linewidth}
\centering
\begin{tabular}{l}
\includegraphics[height=0.31\accuserdpflheightc]{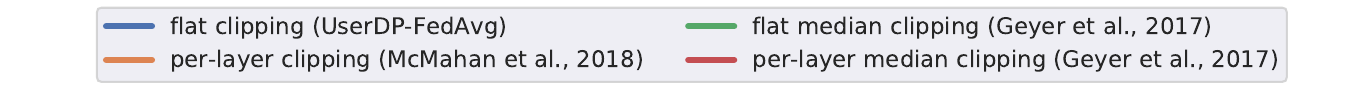}
\end{tabular}
\end{subtable}
\begin{subtable}[]{\linewidth}
\centering
\begin{tabular}{@{}p{2mm}@{}c@{}c@{}c@{}c@{}c@{}c@{}c@{}}
    & \makecell{\shortstack{\scriptsize (a) \mnist{}   ($\epsilon=0.63$)}}
        & \makecell{\shortstack{\scriptsize (b) \cifar{}   ($\epsilon=0.53$)}}
        & \makecell{\shortstack{\scriptsize (c) \sent{}  
 ($\epsilon=0.40$) }}\\
\rowname{\makecell{\ceracc}}&
\includegraphics[height=1.1\accuserdpflheightc]{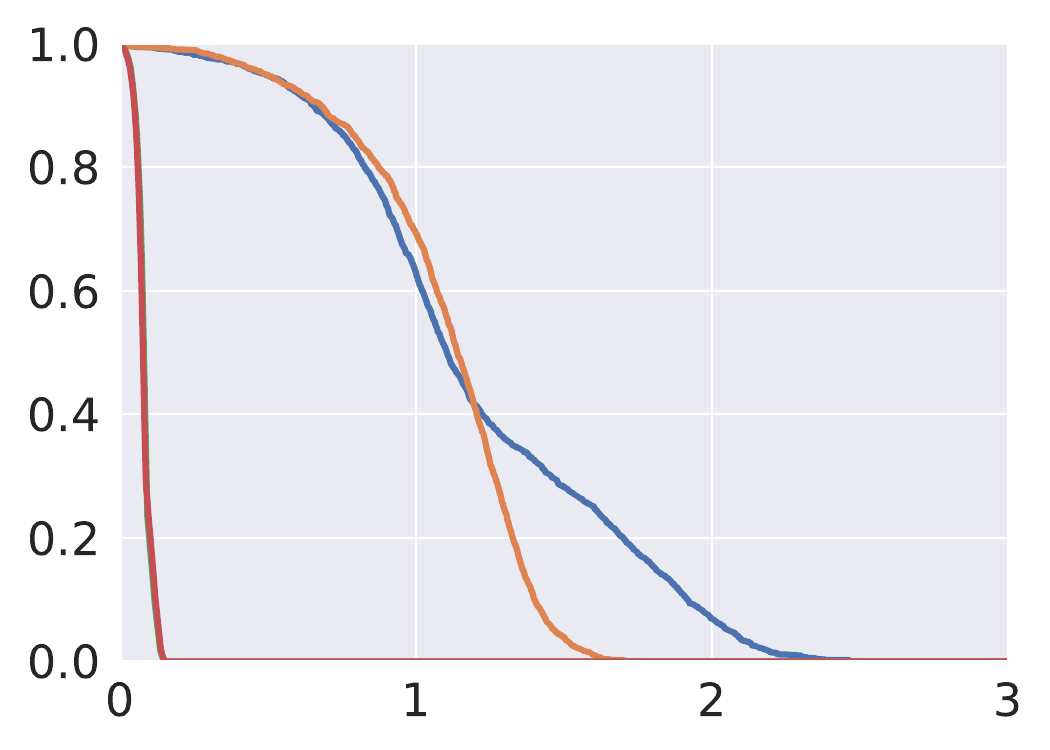}&
\includegraphics[height=1.1\accuserdpflheightc]{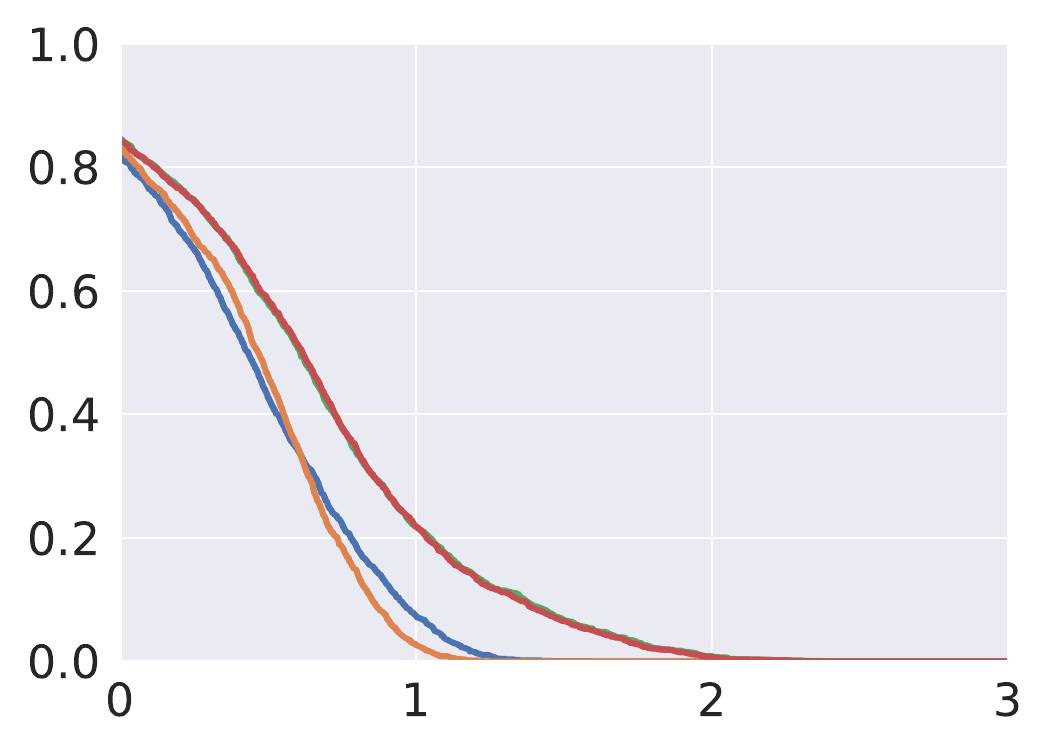}&
\includegraphics[height=1.1\accuserdpflheightc]{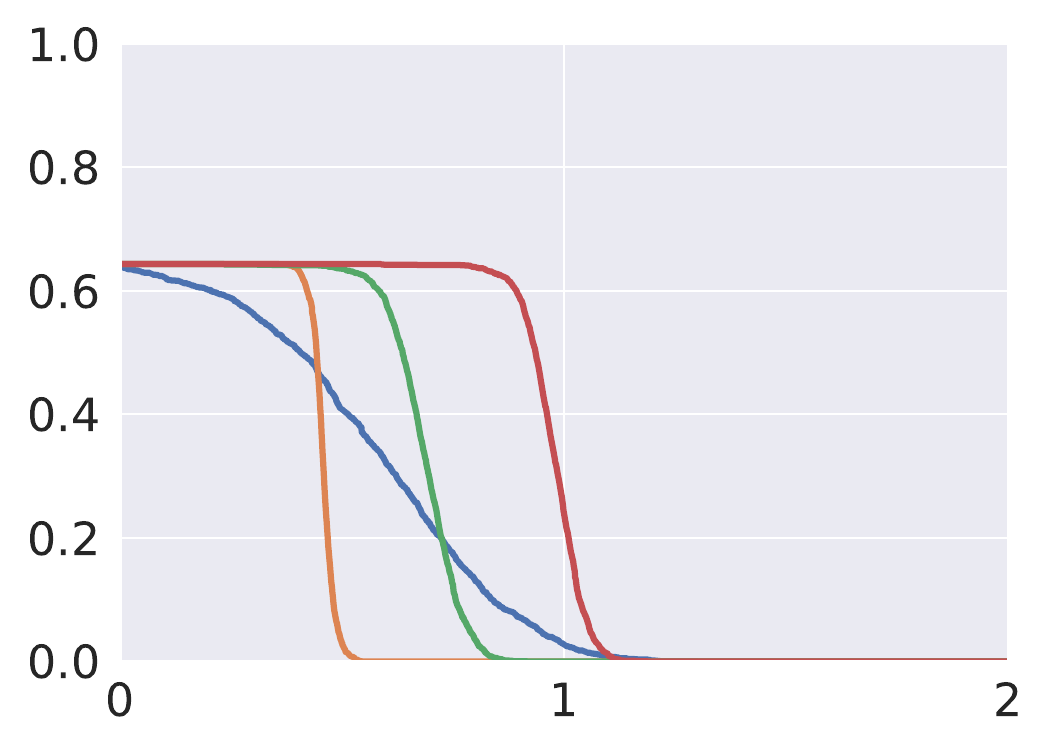}&\\[-1.2ex]
& \makecell{{\scriptsize $k$}}& \makecell{{\scriptsize $k$}}& \makecell{\scriptsize $k$} 
\end{tabular}
\end{subtable}
}
\vspace{-5mm}
\caption{\small Certified accuracy of \userdpfedavg{} under different user-level DPFL algorithms with the same $\epsilon$.}
\label{fig:userdp_ceracc_dpfl}
\vspace{-5mm}
\end{figure}

}

\vspace{-1mm}
\subsubsection{Certified Accuracy under Different DPFL Algorithms.}\label{subsubsec:cer_acc_under_diff_dpfl}
Given that our certifications are agnostic to DPFL algorithms (i.e., the certifications hold no matter how $(\epsilon, \delta)$ is achieved), we are able to compare the certified results of different DPFL algorithms given the same privacy budget $\epsilon$. 
Specifically, we consider the following four DPFL algorithms with different clipping mechanisms: 
\squishlist
    \item \textit{flat clipping} (\userdpfedavg{}) clips the concatenation of all the layers of model update with the L2 norm threshold $S$.
    \item \textit{per-layer clipping}~\cite{mcmahan2018learning} clips each layer of model update with the L2 norm threshold $S$.
    \item \textit{flat median clipping}~\cite{geyer2017differentially} uses the median\footnote{Strictly speaking, the median norm information can leak privacy \revise{and this slight looseness would extend to  robustness certifications which leverage the DP guarantee.} Nevertheless, the information leakage through the median is small, so median-clipping-based methods claimed to be DPFL in~\cite{geyer2017differentially}. 
} of the norms of clients’ model updates as the threshold $S$ for flat clipping.
    \item \textit{per-layer median clipping}~\cite{geyer2017differentially} uses the median of each layer’s norms of model updates as threshold $S$  for per-layer clipping. 
\squishend
\revise{We defer the detailed experimental parameters to \cref{app:exp_details_user_diff_dpfl_algos}.}

As shown in \cref{fig:userdp_ceracc_dpfl}, the models trained by different DPFL algorithms satisfying the same $\epsilon$ can have different certified robustness results. 
The flat clipping is able to certify the largest number of adversaries $k$ on \mnist{}; while on \cifar{} and \sent{}, the median clipping certifies the largest $k$ instead.
Moreover, flat clipping and per-layer clipping with the same $S$ lead to different certification results on all datasets, while the results of flat median clipping and per-layer median clipping are nearly identical on \mnist{} and \cifar{}.
\revise{We observe that no clipping mechanism is strictly better than others on all datasets. This is likely due to the significant difference in the norm of model updates when training on different datasets, which consequently affects the effectiveness of different clipping mechanisms, and thus the DP utility is dataset-dependent. Under the same DP guarantee $\epsilon$,  if one DPFL algorithm has higher utility and is more confident in predicting the ground-truth class, then it can increase the margin between the class confidences $F_{\Aclass}$ and $F_{\Bclass}$ in \cref{thm_pred_consist_k_client} and lead to a larger certified number of adversaries. }
Therefore, advanced DPFL protocols that have fewer clipping constraints or require less noise while achieving the same level of privacy  are favored  to improve \revise{both} utility and certified robustness. 
\revise{The practitioner can use our certifications to conduct offline comparisons of different DPFL algorithms under the same $\epsilon$, and better understand which DPFL algorithm provides better protection against poisoning attacks before real-world deployment. }

{

\begin{figure}
\newlength{\accnoniidheightc}
\settoheight{\accnoniidheightc}{\includegraphics[width=.48\linewidth]{figures/ccsfinal_plots/cer_acc_conf/cer_acc_mnist.pdf}}

\newcommand{\rowname}[1]%
{\rotatebox{90}{\makebox[\accnoniidheightc][c]{\footnotesize #1}}}

\centering

{
\renewcommand{\tabcolsep}{10pt}
\begin{subtable}[]{\linewidth}
\centering
\begin{tabular}{@{}p{5mm}@{}c@{}c@{}c@{}c@{}c@{}c@{}c@{}}
        & \makecell{{\scriptsize (a) \mnist{} i.i.d}}
        & \makecell{{\scriptsize (b) \cifar{} i.i.d}}
        \vspace{-1.7pt}\\
\rowname{\makecell{\ceracc}}&
\includegraphics[height=\accnoniidheightc]{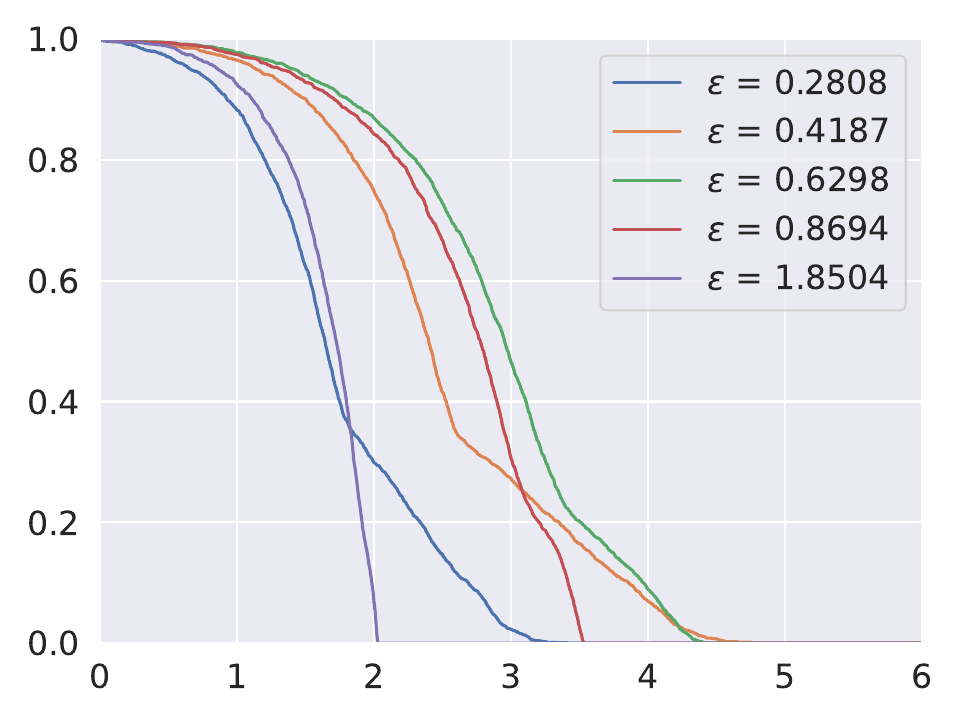}&
\includegraphics[height=\accnoniidheightc]{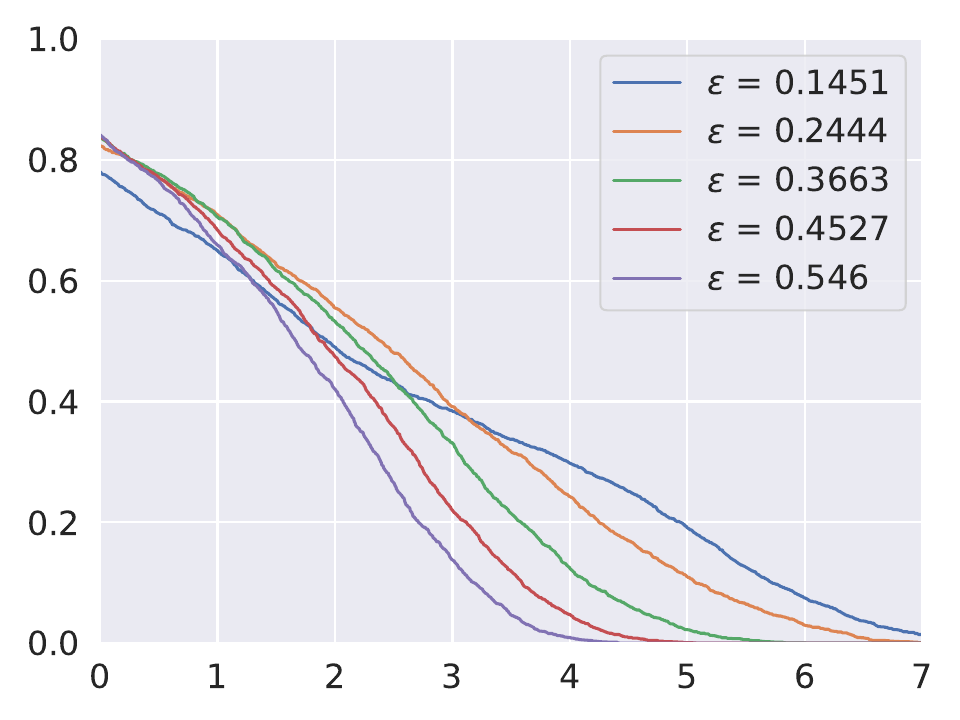}&
\\[-1.2ex]
& \makecell{\shortstack{\scriptsize (c) \mnist{}  $\operatorname{Dir}(1)$ }}
& \makecell{\shortstack{\scriptsize (d) \cifar{}    $\operatorname{Dir}(1)$ }}
 \vspace{-1.7pt}\\
\rowname{\makecell{\ceracc}}&
\includegraphics[height=\accnoniidheightc]{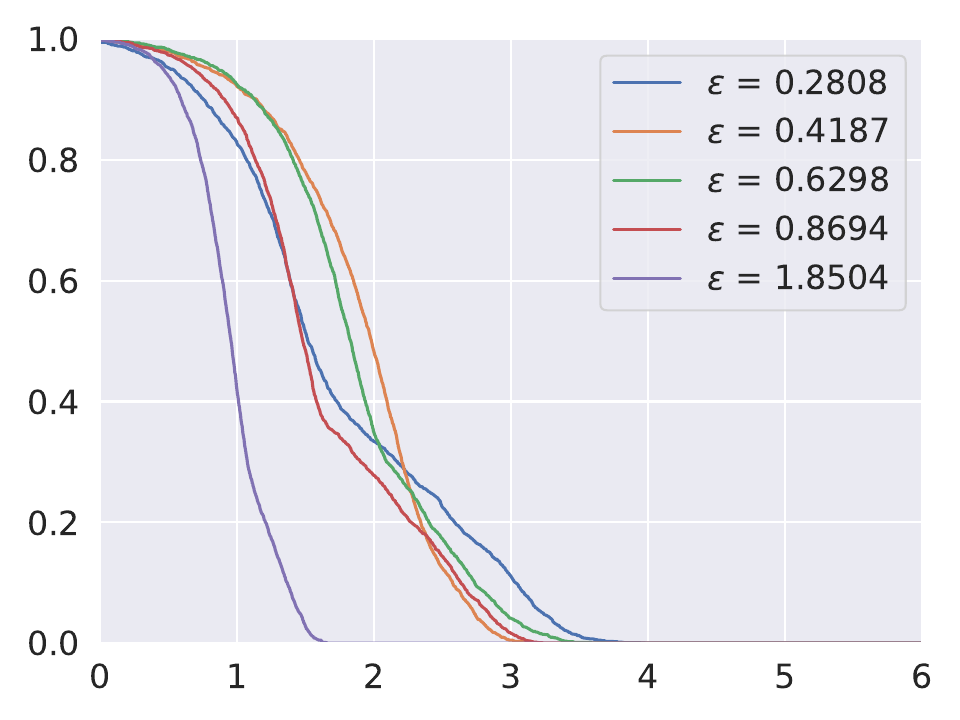}&
\includegraphics[height=\accnoniidheightc]{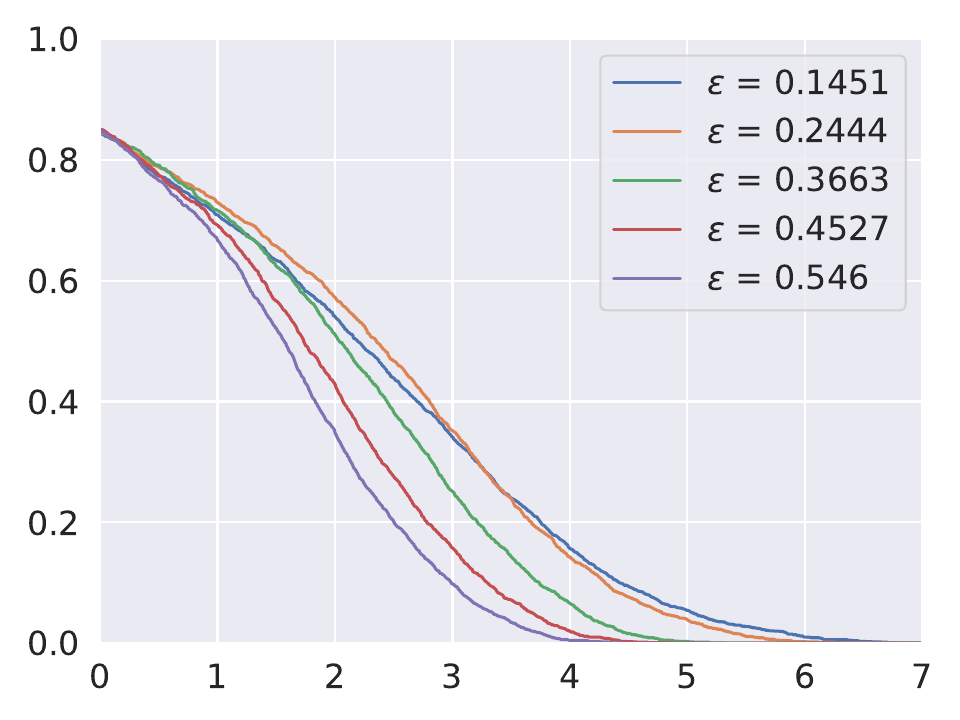}&
\\[-1.2ex]
& \makecell{\shortstack{\scriptsize (e) \mnist{}  $\operatorname{Dir}(0.5)$ }}
& \makecell{\shortstack{\scriptsize (f) \cifar{}    $\operatorname{Dir}(0.5)$ }}
 \vspace{-1.7pt}\\
\rowname{\makecell{\ceracc}}&
\includegraphics[height=\accnoniidheightc]{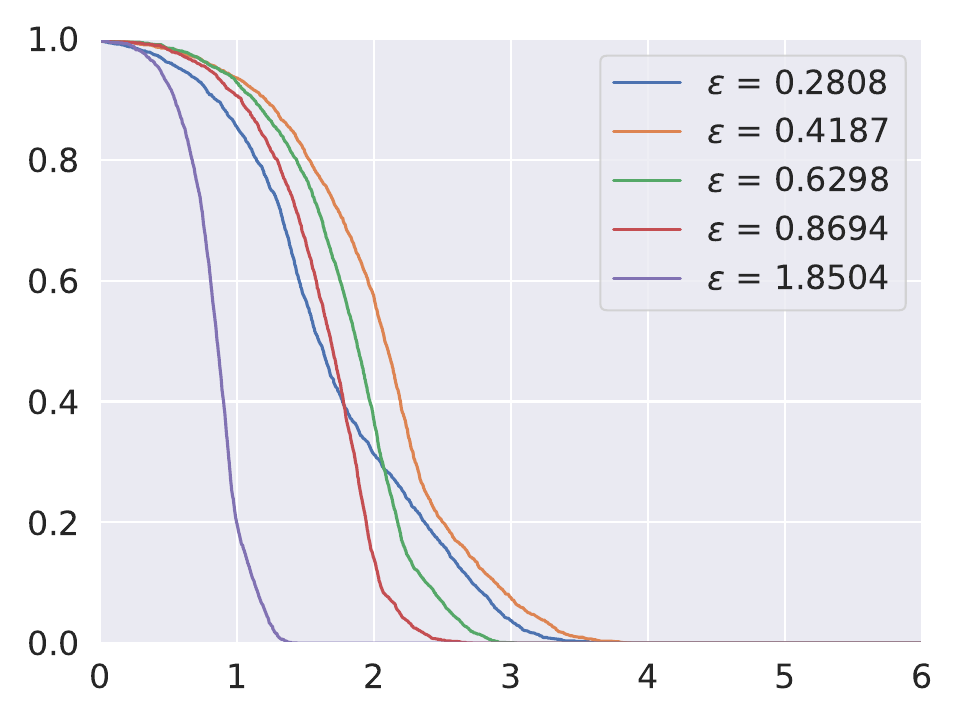}&
\includegraphics[height=\accnoniidheightc]{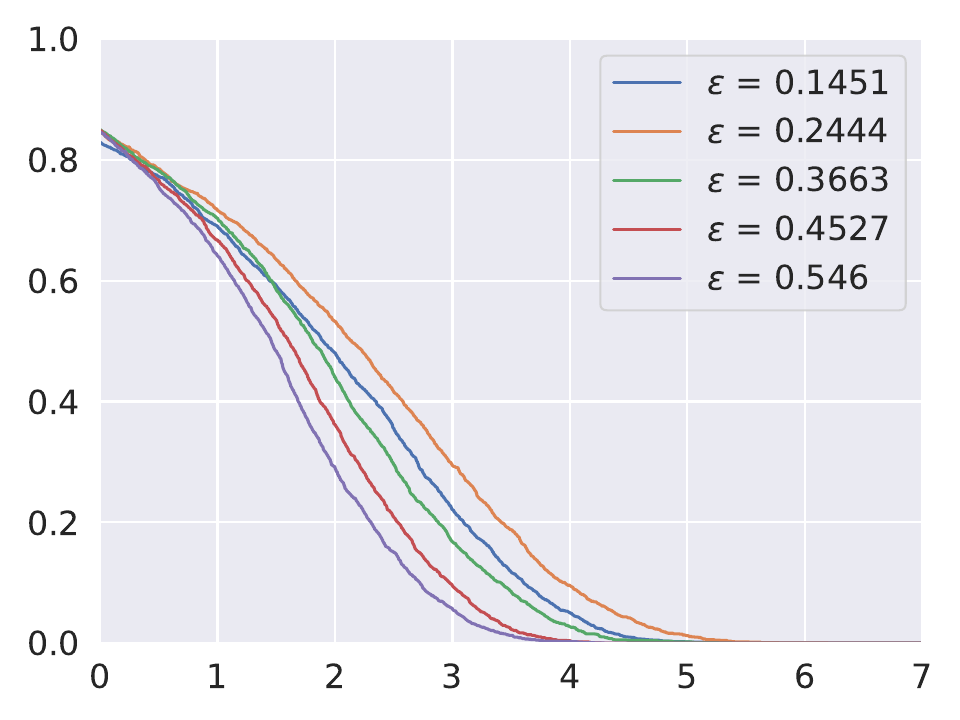}&
\\[-1.2ex]
& \makecell{{\footnotesize $k$}}& \makecell{{\footnotesize $k$}}
\end{tabular}
\end{subtable}
}
\vspace{-5mm}
\caption{Certified accuracy of \userdpfedavg{} under varying levels of data heterogeneity. We use Dirichlet distribution $\operatorname{Dir}(\alpha)$ to create FL heterogeneous data distributions, where smaller $\alpha$ indicates greater heterogeneity.}
\label{fig:cer_acc_noniid}
\vspace{-5mm}
\end{figure}

}

{
\begin{figure*}[!b]
\newlength{\costuserfigheightc}
\settoheight{\costuserfigheightc}{\includegraphics[width=0.17\linewidth]{figures/ccsfinal_plots/cer_acc_conf/cer_acc_mnist.pdf}}

\newcommand{\rowname}[1]%
{\rotatebox{90}{\makebox[\costuserfigheightc][c]{\scriptsize #1}}}

\centering

{
\renewcommand{\tabcolsep}{10pt}
\begin{subtable}[]{\linewidth}
\centering
\begin{tabular}{@{}p{5mm}@{}c@{}c@{}c@{}c@{}c@{}c@{}c@{}}
        & \makecell{\shortstack{\scriptsize (a) \mnist{} BKD  \scriptsize ($\epsilon=0.43$)}}
        & \makecell{\shortstack{\scriptsize (b) \cifar{} BKD  \scriptsize($\epsilon=0.53$)}}
        & \makecell{\shortstack{\scriptsize (c) \mnist{} LF \scriptsize ($\epsilon=0.40$)}}
        & \makecell{\shortstack{\scriptsize (d) \cifar{} LF \scriptsize ($\epsilon=0.59$)}}
         & \makecell{\shortstack{\scriptsize (e) \sent{} LF \scriptsize ($\epsilon=0.41$)}}
        \vspace{-1.7pt}\\
\rowname{\makecell{$J(D')$}}&
\includegraphics[height=\costuserfigheightc]{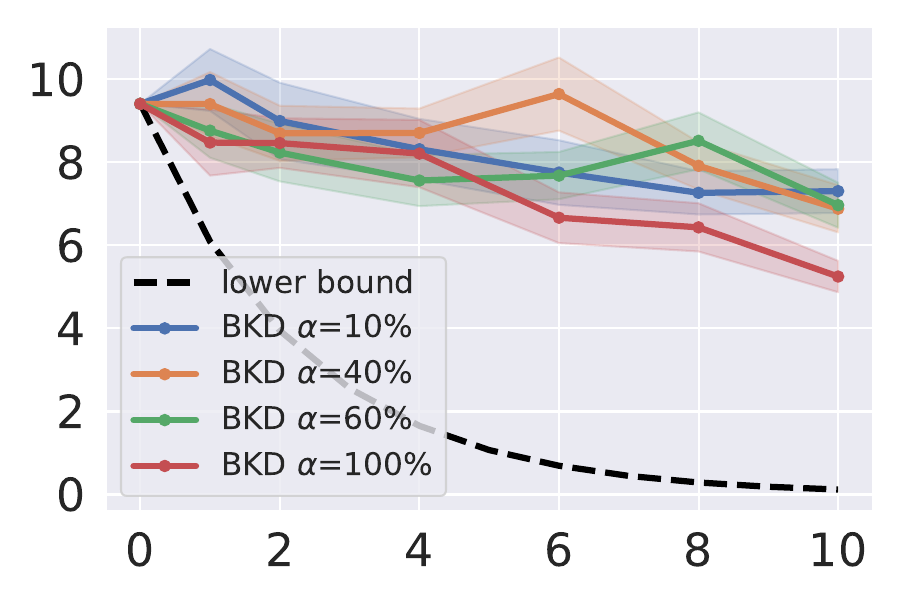}&
\includegraphics[height=\costuserfigheightc]{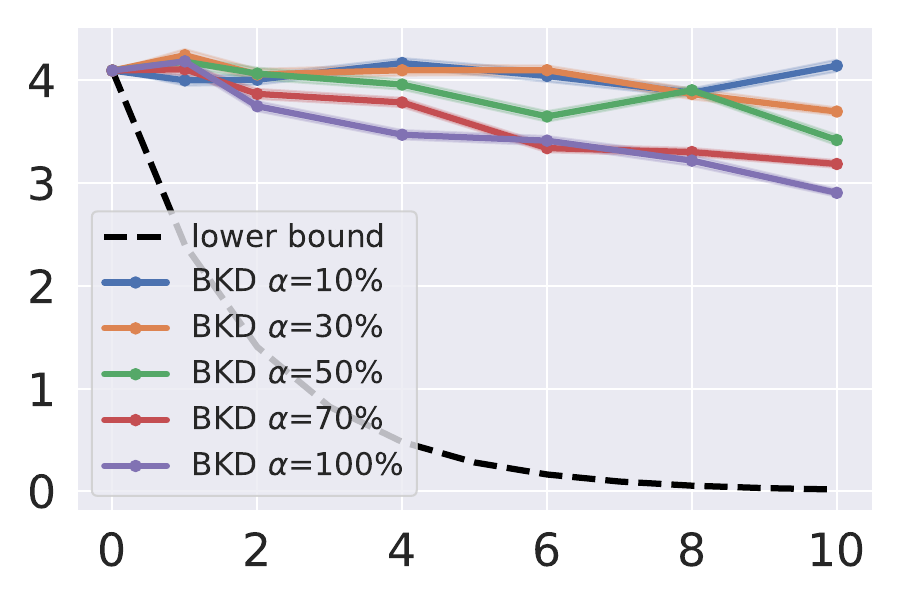}&
\includegraphics[height=\costuserfigheightc]{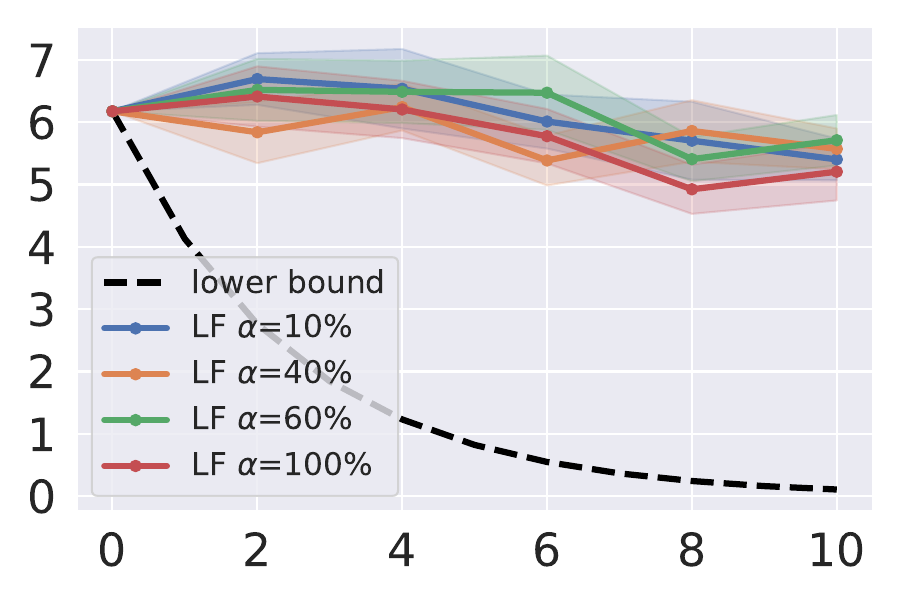}&
\includegraphics[height=\costuserfigheightc]{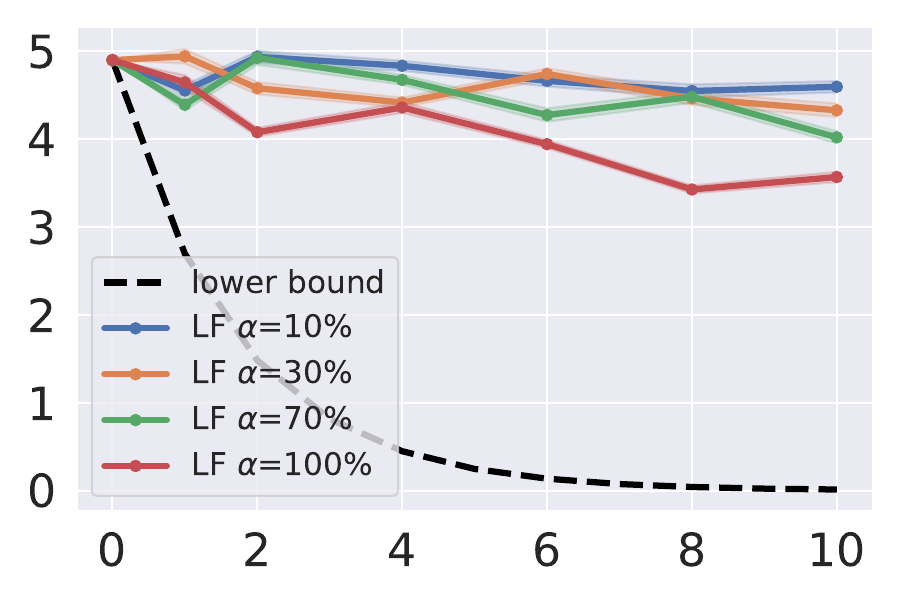}&
\includegraphics[height=\costuserfigheightc]{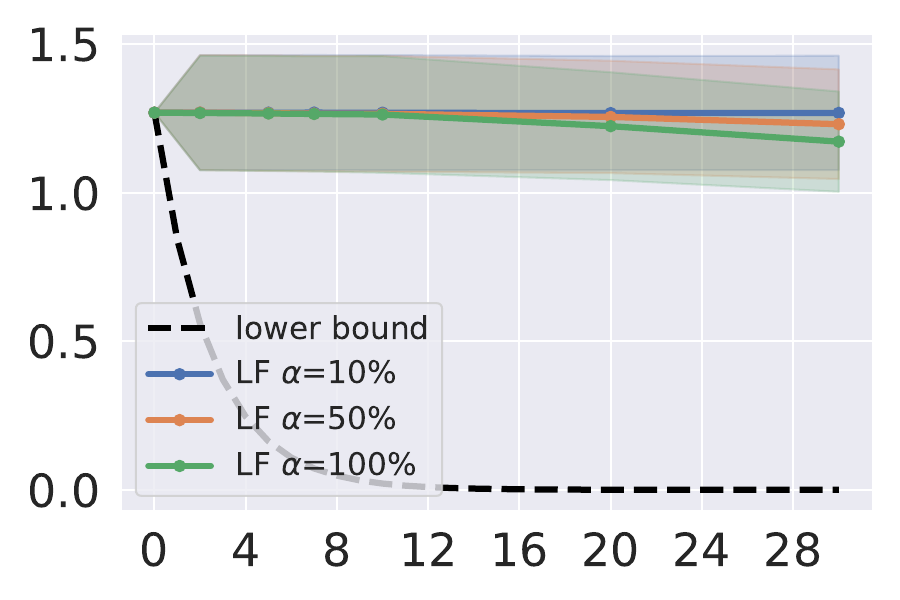}&
\\[-1.5ex]
        & \makecell{\shortstack{\scriptsize (f) \mnist{} BKD ($\epsilon=0.43$)}}
        & \makecell{\shortstack{\scriptsize (g) \cifar{} BKD ($\epsilon=0.53$)}}
        & \makecell{\shortstack{\scriptsize (h) \mnist{} LF ($\epsilon=0.40$)}}
        & \makecell{\shortstack{\scriptsize (i) \cifar{} LF ($\epsilon=0.59$)}}
         & \makecell{\shortstack{\scriptsize (j) \sent{} LF ($\epsilon=0.41$)}}
        \vspace{-1.7pt}\\
\rowname{\makecell{$J(D')$}}&
\includegraphics[height=\costuserfigheightc]{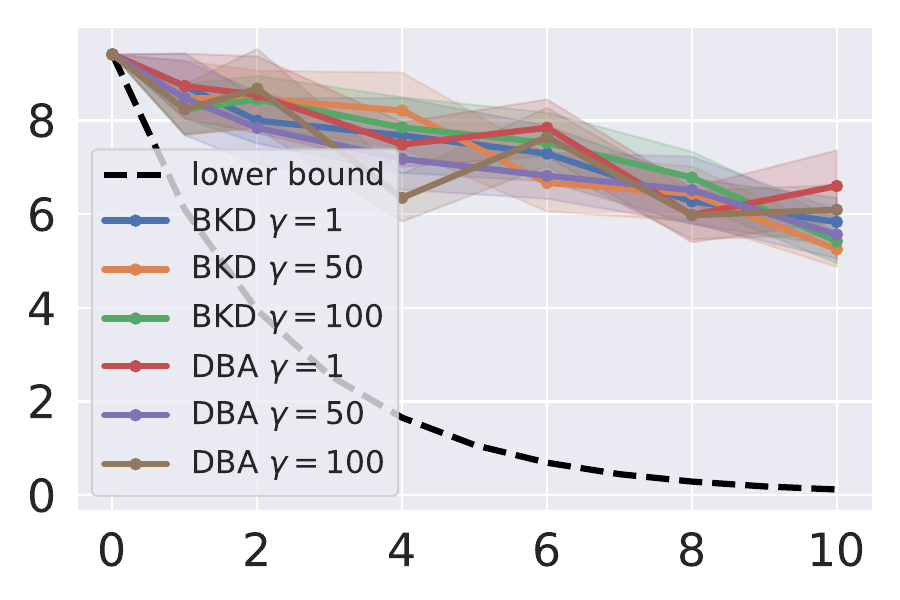}&
\includegraphics[height=\costuserfigheightc]{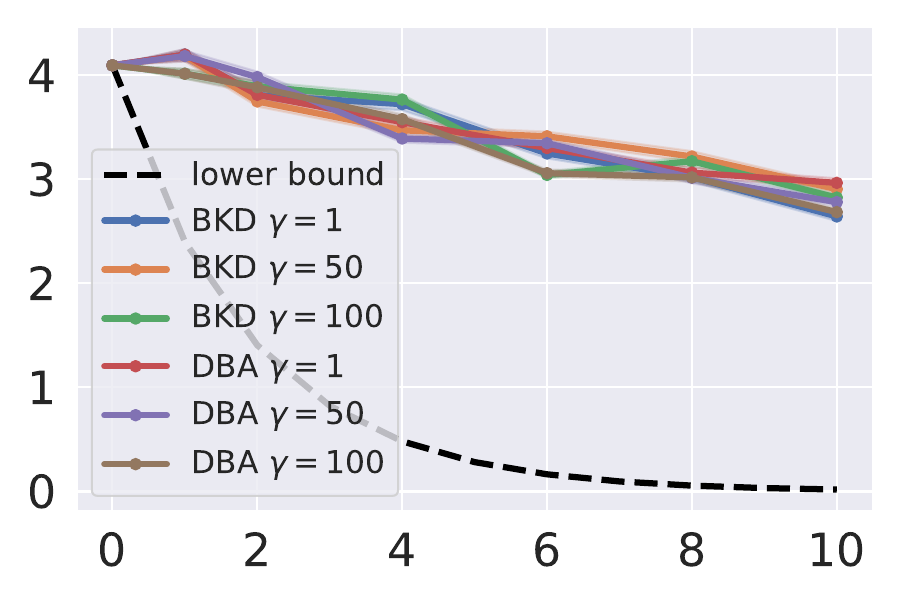}&
\includegraphics[height=\costuserfigheightc]{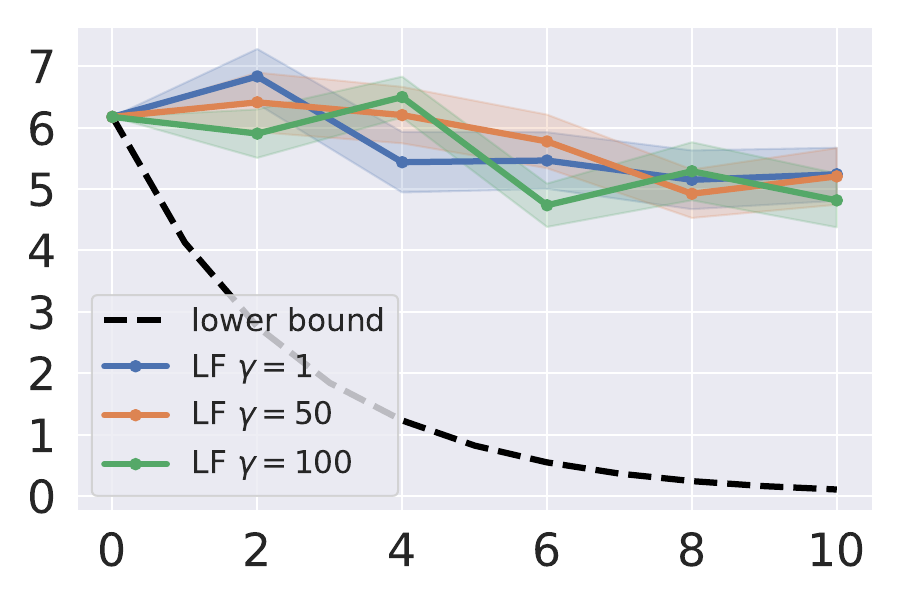}&
\includegraphics[height=\costuserfigheightc]{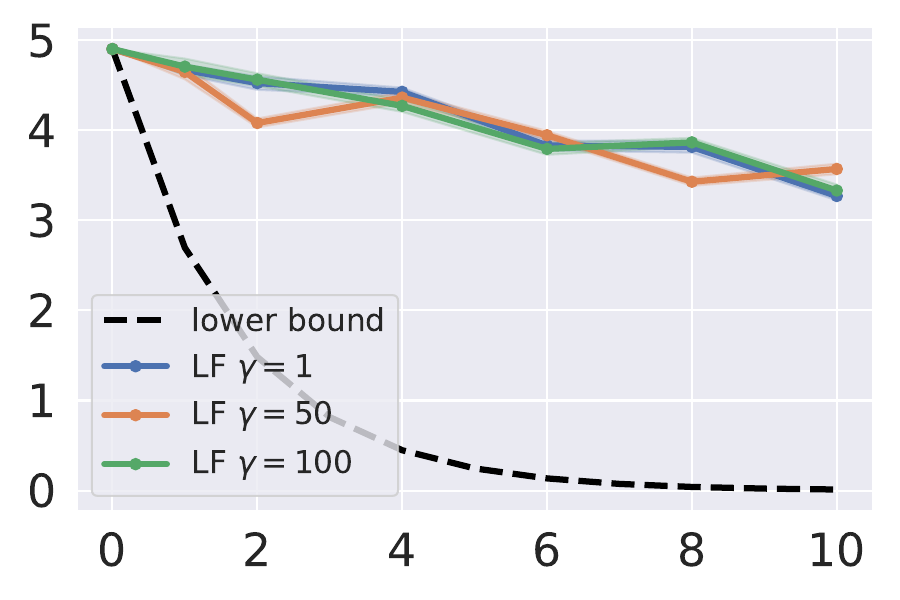}&
\includegraphics[height=\costuserfigheightc]{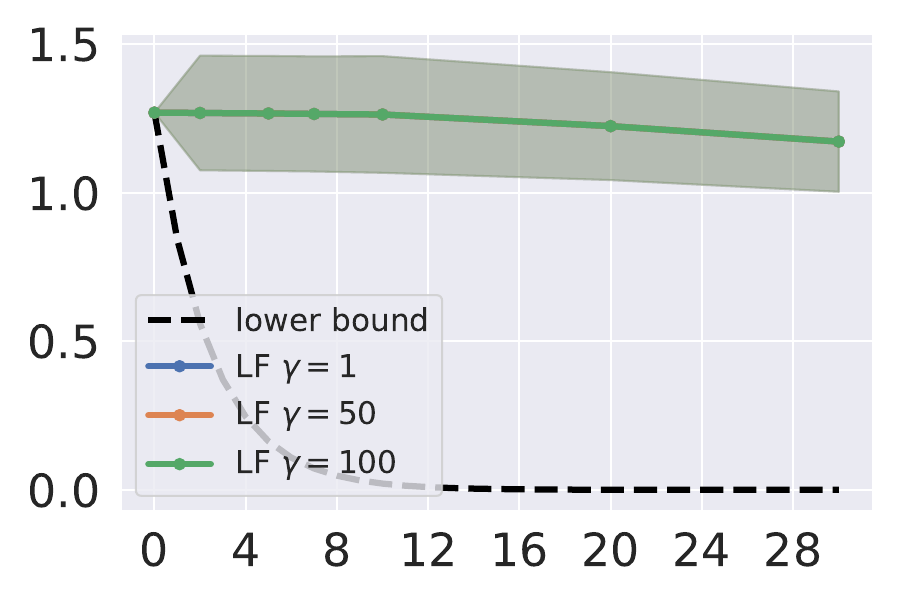}&
\\[-1.2ex]
& \makecell{{\scriptsize $k$}}& \makecell{{\scriptsize $k$}}& \makecell{\scriptsize $k$}& \makecell{\scriptsize $k$}& \makecell{\scriptsize $k$} 
\end{tabular}
\end{subtable}
}
\vspace{-5mm}
\caption{Certified \attackcost of \userdpfedavg{} given different $k$, under various attacks with different $\alpha$ or $\gamma$.}
\label{fig:user_k_gamma_alpha} 
\vspace{-5mm}
\end{figure*}

}

\vspace{-0.5em}
\subsubsection{Certified Accuracy under Different Data Heterogeneity Degrees.}
\label{sec:noniid_results}
Recent studies~\cite{yangprivatefl,noble2022differentially} show that DP makes the utility of the FL global model degraded more under heterogeneous data distributions among users, compared to the i.i.d data setting. Motivated by those findings, we study the impact of heterogeneity on the certified accuracy of DPFL models. 
We simulate varying levels of data heterogeneity on \mnist{} and \cifar{} using the Dirichlet distribution $\operatorname{Dir}(\alpha)$, which create FL heterogeneous data partitions with different local data sizes and label distributions for users, and smaller $\alpha$ indicates greater heterogeneity (more non-i.i.d). 

From the results in \cref{fig:cer_acc_noniid}, 
we find that \textbf{(1)} different non-i.i.d degrees have different optimal $\epsilon$ and the largest number of adversaries can be certified when $\epsilon$ is around 0.62, 0.28, 0.41 under the i.i.d, $\operatorname{Dir}(1)$, $\operatorname{Dir}(0.5)$ settings on MNIST, respectively. The optimal $\epsilon$ for CIFAR is around 0.14, 0.14, 0.24 under the i.i.d, $\operatorname{Dir}(1)$, $\operatorname{Dir}(0.5)$  settings, respectively. 
\textbf{(2)} Moreover, when FL data is more non-i.i.d, the largest number of adversaries that can be certified is smaller. This is mainly because the utility of the global model trained from the FedAvg-based DPFL degrades when FL data is more non-i.i.d, leading to a smaller confidence gap between  $F_{\Aclass}$ and  $ F_{\Bclass}$ in \cref{thm_pred_consist_k_client}.
This suggests that advanced FL algorithms designed for training more accurate FL models that tackle data heterogeneity issues can be applied to DPFL settings~\cite{noble2022differentially,liu2022privacy,yangprivatefl}. 
\revise{By doing so, it is possible to amplify the class confidences margin between  $F_{\Aclass}, F_{\Bclass}$ under non-i.i.d data and certify a larger $k$, subsequently improving both privacy-utility tradeoff and certified robustness. }

\begin{rev}

\vspace{-1mm}
\subsubsection{Empircal Robust Accuracy against State-of-the-Art Poisoning Attacks.}
\label{exp:poison}
In addition to the robustness certification, our DPFL certification process that produces prediction based on Equation~\ref{eq:prediction_certification}, 
 exhibits effective robustness \textit{empirically} against state-of-the-art poisoning attacks, even without theoretical guarantees.  
\cref{tb:empircal_cifar_sota_poisoning} in \cref{app:sota_poison_empirical} show that DPFL certification achieves high empirical robust accuracy on \cifar{} when $k=2,3,5,10$ against different attack strategies including STAT-OPT attacks~\cite{shejwalkar2021manipulating},  BKD and LF attacks boosted by the model replacement strategy~\cite{bagdasaryan2020backdoor,bhagoji2018analyzing}. Moreover, we see that the certified accuracy serves as the lower bound for the empirical robust accuracy. Details are deferred to \cref{app:sota_poison_empirical}.

{
\begin{figure*}
\newlength{\insmnistfigheightc}
\settoheight{\insmnistfigheightc}{\includegraphics[width=0.17\linewidth]{figures/ccsfinal_plots/cer_acc_conf/cer_acc_mnist.pdf}}

\newcommand{\rowname}[1]%
{\rotatebox{90}{\makebox[\insmnistfigheightc][c]{\footnotesize #1}}}

\centering

{
\renewcommand{\tabcolsep}{10pt}
\begin{subtable}[]{\linewidth}
\centering
\begin{tabular}{@{}p{2mm}@{}c@{}p{3mm}@{}c@{}c@{}c@{}c@{}c@{}c@{}}
        & \makecell{\shortstack{\footnotesize (a) \mnist{}}} &
        & \makecell{\shortstack{\footnotesize (b) \mnist{} BKD ($\epsilon=0.23$)}}
        & \makecell{\shortstack{\footnotesize (c) \mnist{} LF ($\epsilon=0.23$)}}
        & \makecell{\shortstack{\footnotesize (d) \mnist{} BKD ($k=10$)}}
         & \makecell{\shortstack{\footnotesize (e) \mnist{} LF  ($k=10$)}}
        \vspace{-1.7pt}\\
\rowname{\makecell{\ceracc}}&
\includegraphics[height=\insmnistfigheightc]{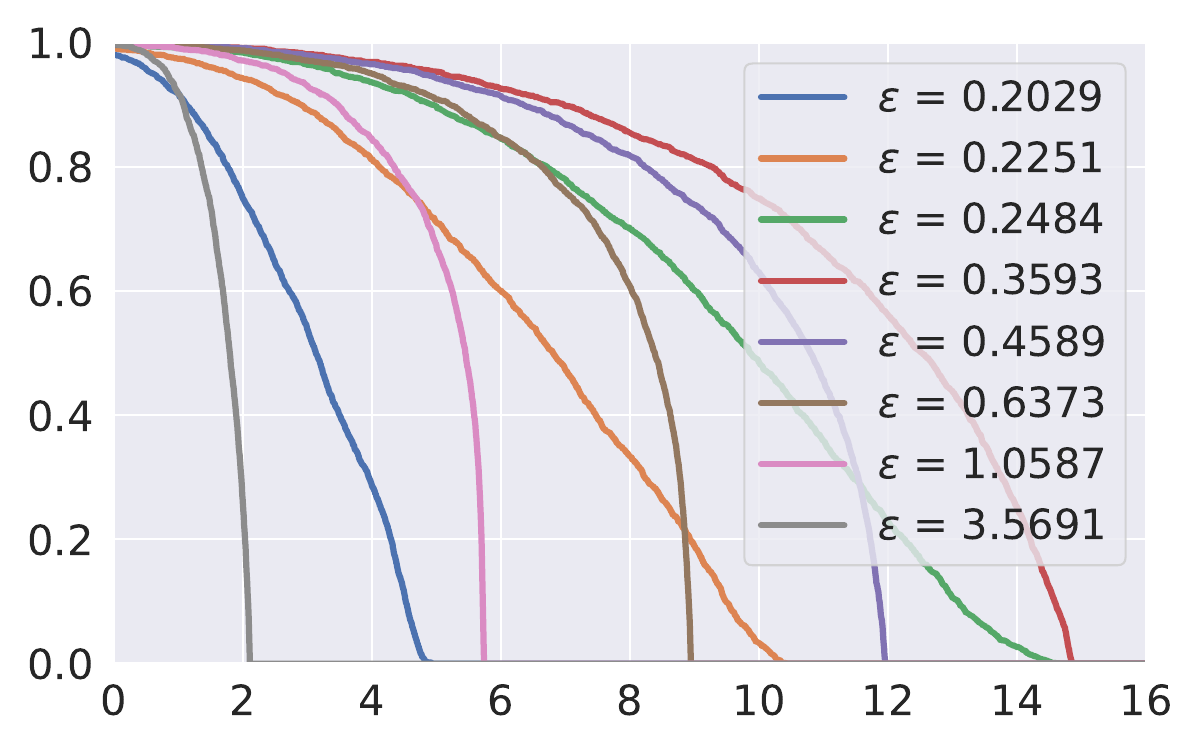}&
\rowname{\makecell{$J(D')$}}&
\includegraphics[height=\insmnistfigheightc]{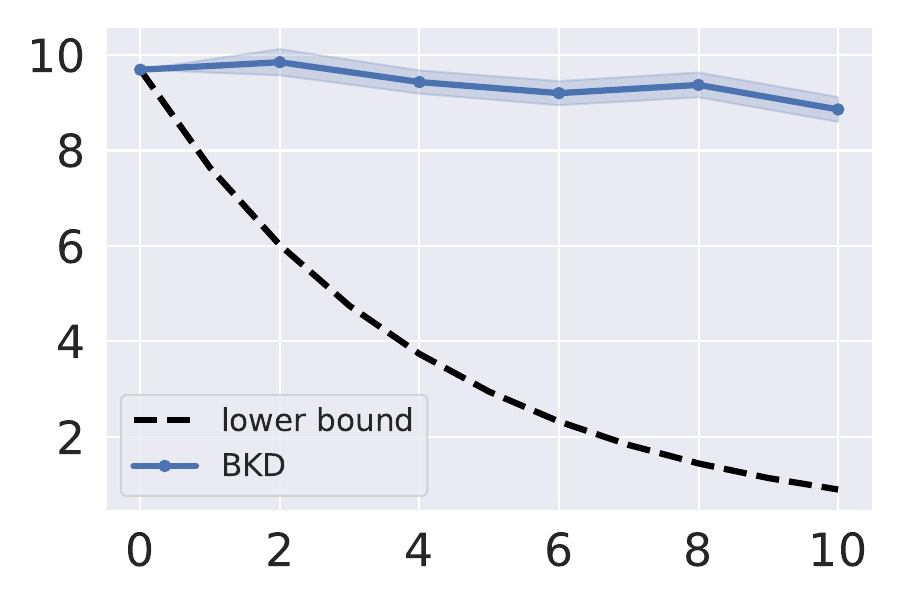}&
\includegraphics[height=\insmnistfigheightc]{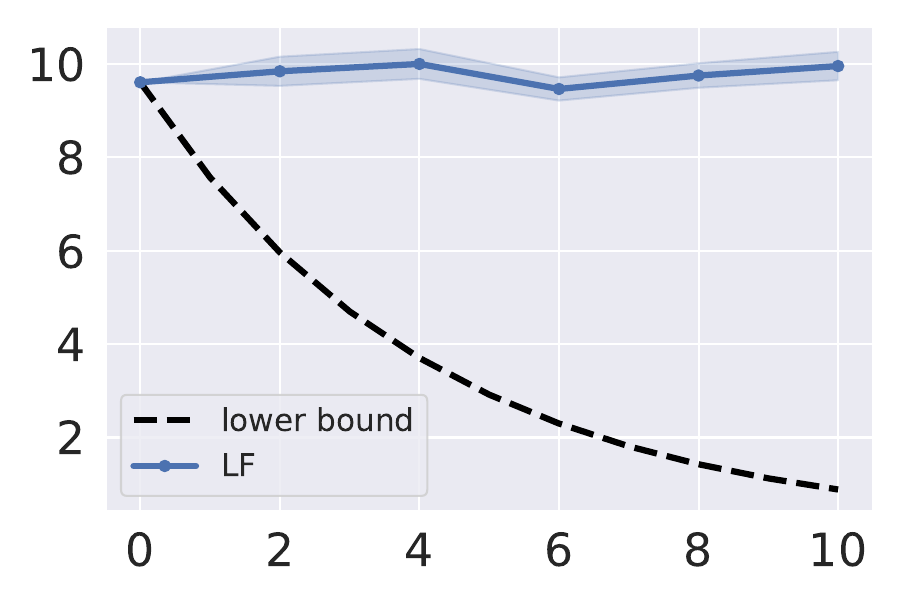}&
\includegraphics[height=\insmnistfigheightc]{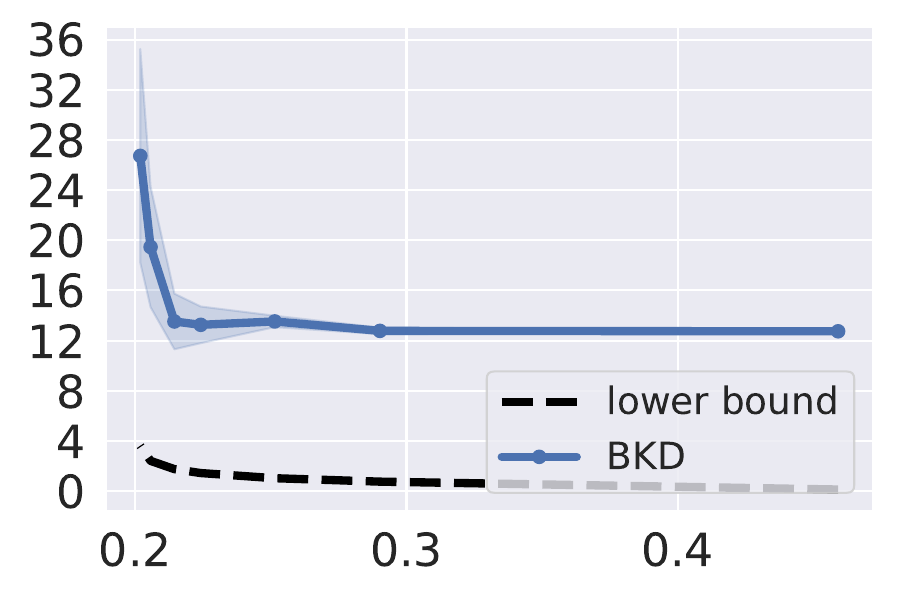}&
\includegraphics[height=\insmnistfigheightc]{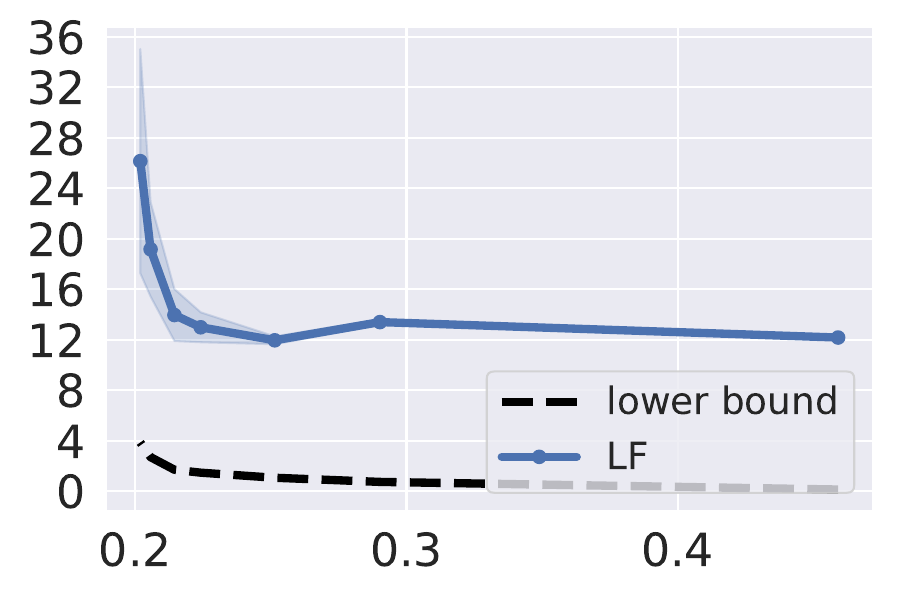}&
\\[-1ex]
& \makecell{{\footnotesize $k$}}&& \makecell{{\footnotesize  $k$}}& \makecell{\footnotesize{ $k$}}& \makecell{\footnotesize{ $\epsilon$}}& \makecell{\footnotesize{ $\epsilon$}} \\
\end{tabular}
\end{subtable}
}
\vspace{-2mm}
\caption{\small Certified accuracy (a) and certified \attackcost of \insdpfedavg{} on \mnist{} under different attacks given different $k$  (b-c)   and different $\epsilon$  (d-e).}
\label{fig:ins_ceracc_k_eps_mnist} 
\vspace{-2mm}
\end{figure*}

}
{
\begin{figure*}
\newlength{\inscifarfigheightc}
\settoheight{\inscifarfigheightc}{\includegraphics[width=0.17\linewidth]{figures/ccsfinal_plots/cer_acc_conf/cer_acc_mnist.pdf}}

\newcommand{\rowname}[1]%
{\rotatebox{90}{\makebox[\inscifarfigheightc][c]{\footnotesize #1}}}

\centering

{
\renewcommand{\tabcolsep}{10pt}
\begin{subtable}[]{\linewidth}
\centering
\begin{tabular}{@{}p{2mm}@{}c@{}p{3mm}@{}c@{}c@{}c@{}c@{}c@{}c@{}}
        & \makecell{\shortstack{\footnotesize (a) \cifar{}}} &
        & \makecell{\shortstack{\footnotesize (b) \cifar{} BKD ($\epsilon=0.31$)}}
        & \makecell{\shortstack{\footnotesize (c) \cifar{} LF ($\epsilon=0.31$)}}
        & \makecell{\shortstack{\footnotesize (d) \cifar{} BKD ($k=4$)}}
         & \makecell{\shortstack{\footnotesize (e) \cifar{} LF  ($k=4$)}}
        \vspace{-1.7pt}\\
\rowname{\makecell{\ceracc}}&
\includegraphics[height=\inscifarfigheightc]{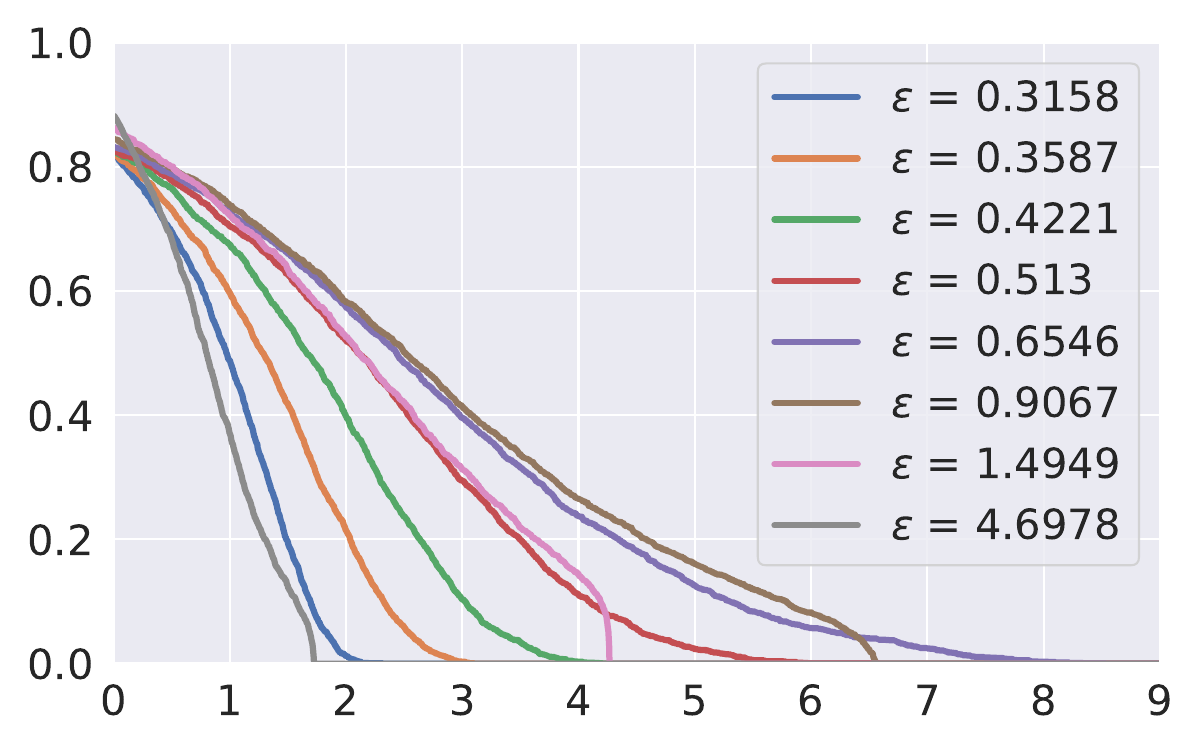}&
&
\includegraphics[height=\inscifarfigheightc]{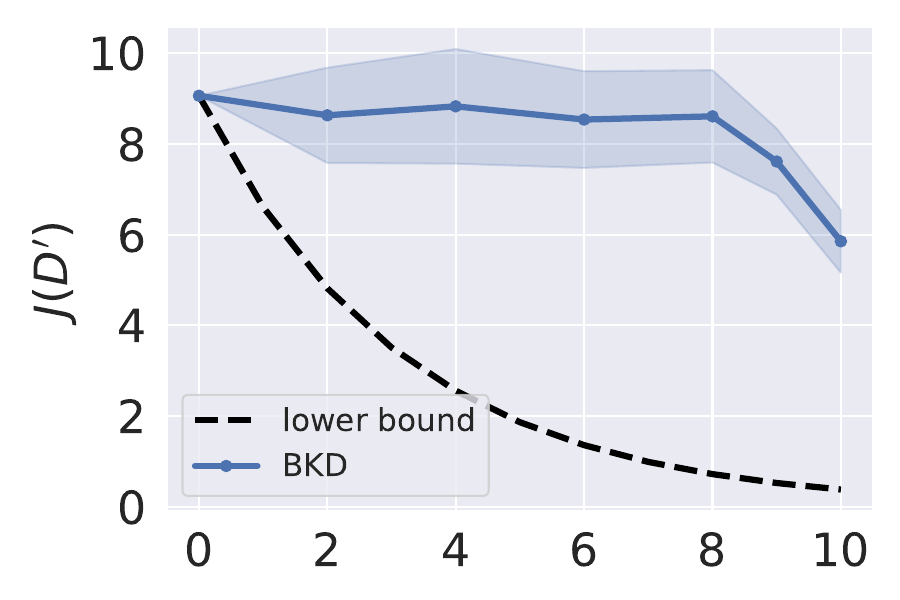}&
\includegraphics[height=\inscifarfigheightc]{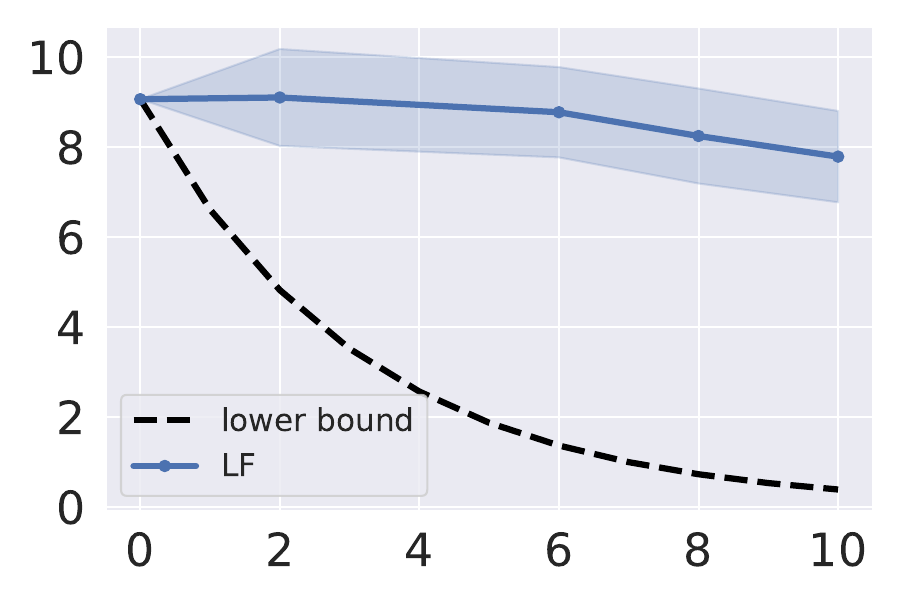}&
\includegraphics[height=\inscifarfigheightc]{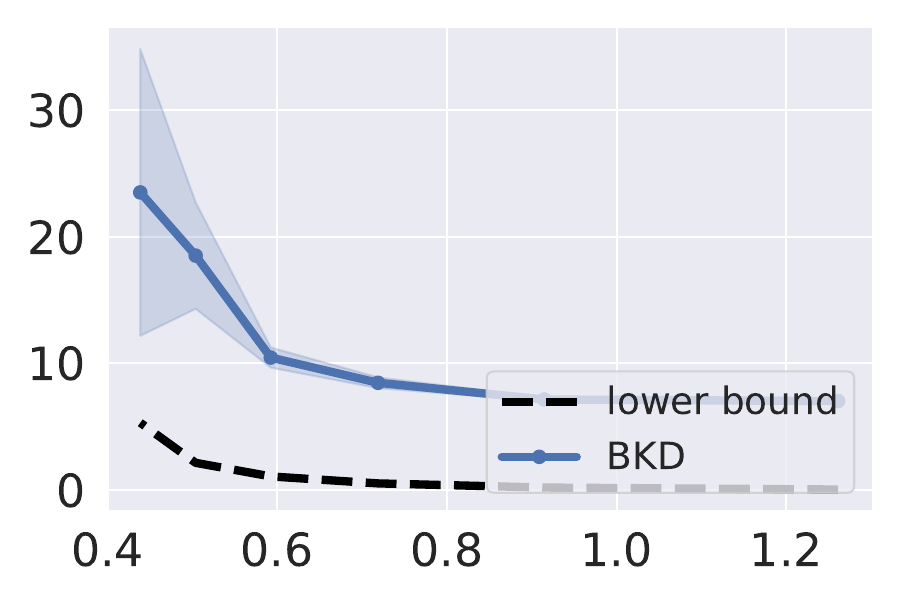}&
\includegraphics[height=\inscifarfigheightc]{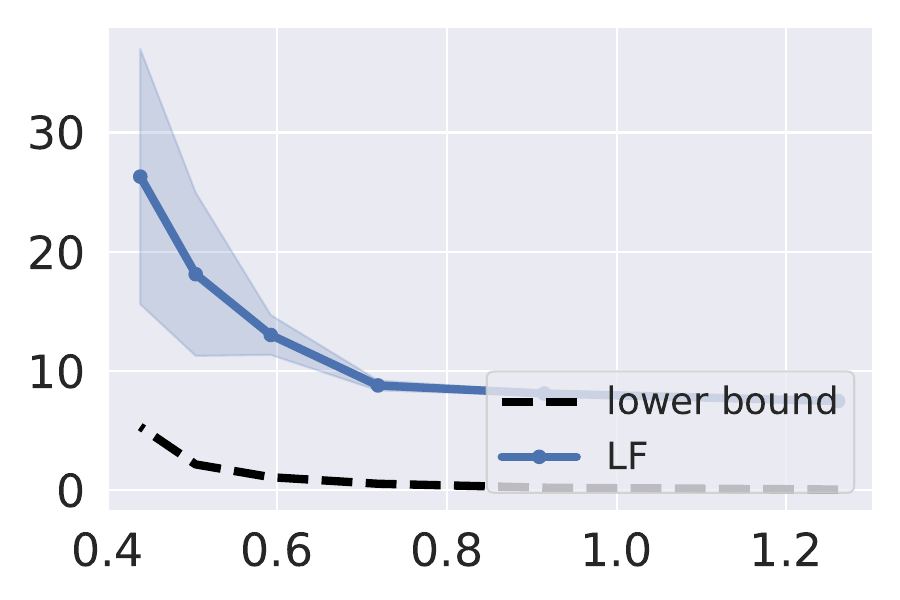}&
\\[-1ex]
& \makecell{{\footnotesize $k$}}&& \makecell{{\footnotesize  $k$}}& \makecell{\footnotesize{ $k$}}& \makecell{\footnotesize{ $\epsilon$}}& \makecell{\footnotesize{ $\epsilon$}} \\
\end{tabular}
\end{subtable}
}
\vspace{-2mm}
\caption{\small Certified accuracy (a) and certified \attackcost of \insdpfedavg{} on \cifar{} given different $k$ (b-c) and different $\epsilon$ (d-e).}
\label{fig:ins_ceracc_k_eps_cifar} 
\vspace{-2mm}
\end{figure*}

}
{
\begin{figure}
\newlength{\costuserepsfigheightc}
\settoheight{\costuserepsfigheightc}{\includegraphics[width=0.44\linewidth]{figures/ccsfinal_plots/cer_acc_conf/cer_acc_mnist.pdf}}

\newcommand{\rowname}[1]%
{\rotatebox{90}{\makebox[\costuserepsfigheightc][c]{\scriptsize #1}}}

\centering

{
\renewcommand{\tabcolsep}{10pt}
\begin{subtable}[]{\linewidth}
\centering
\begin{tabular}{@{}p{3mm}@{}c@{}@{}p{3mm}@{}c@{}c@{}c@{}c@{}c@{}c@{}}
        & \makecell{\shortstack{\scriptsize (a) \mnist{} ($k=4$) }}&
        & \makecell{\shortstack{\scriptsize (d) \mnist{} }}
        \vspace{-1.7pt}\\
\rowname{\makecell{\scriptsize $J(D')$}}&
\includegraphics[height=\costuserepsfigheightc]{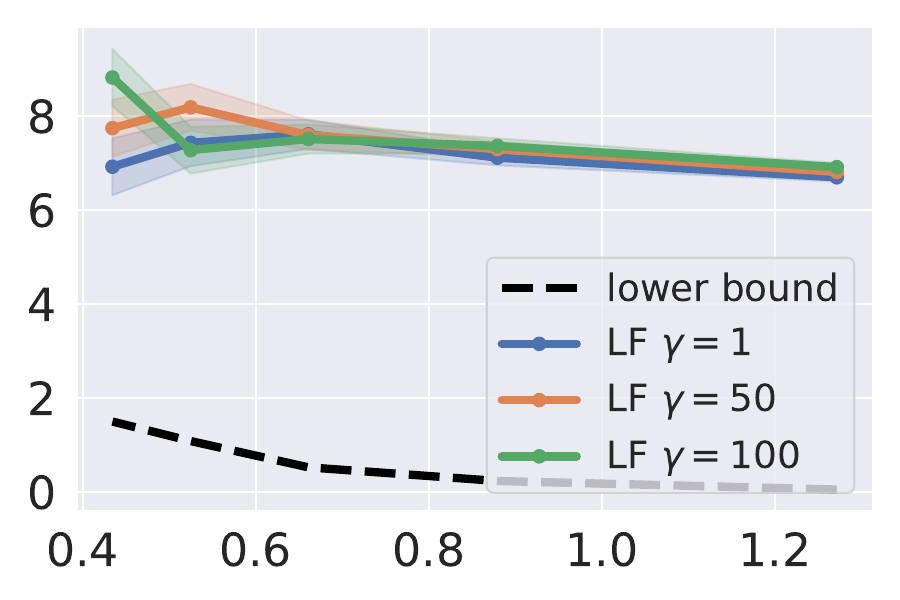}&
\rowname{\makecell{\scriptsize $k$}}&
\includegraphics[height=\costuserepsfigheightc]{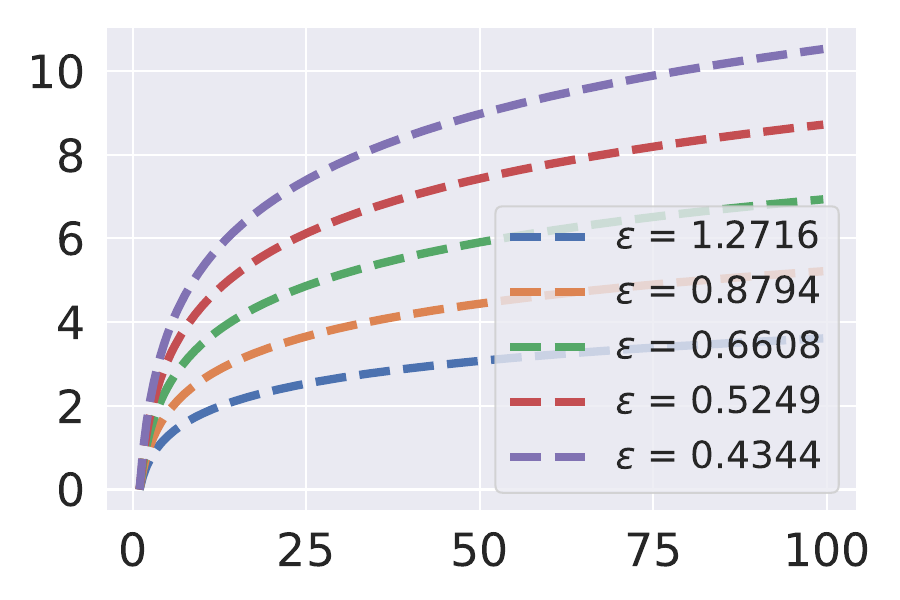}
\\[-2.5ex]
& \makecell{{\scriptsize $\epsilon$}}& &\makecell{{\scriptsize $\tau$}} \\[-1ex] 
 & \makecell{\shortstack{\scriptsize (b) \cifar{} ($k=4$) }} &
        & \makecell{\shortstack{\scriptsize (e) \cifar{} }}
        \vspace{-2.5pt}\\
\rowname{\makecell{\scriptsize $J(D')$}}&
\includegraphics[height=\costuserepsfigheightc]{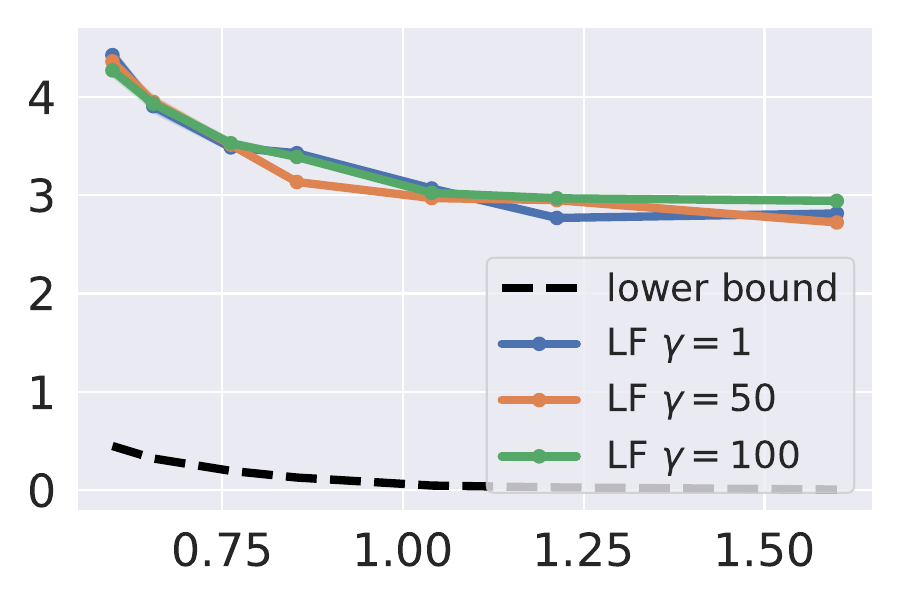}&
\rowname{\makecell{\scriptsize $k$}}&
\includegraphics[height=\costuserepsfigheightc]{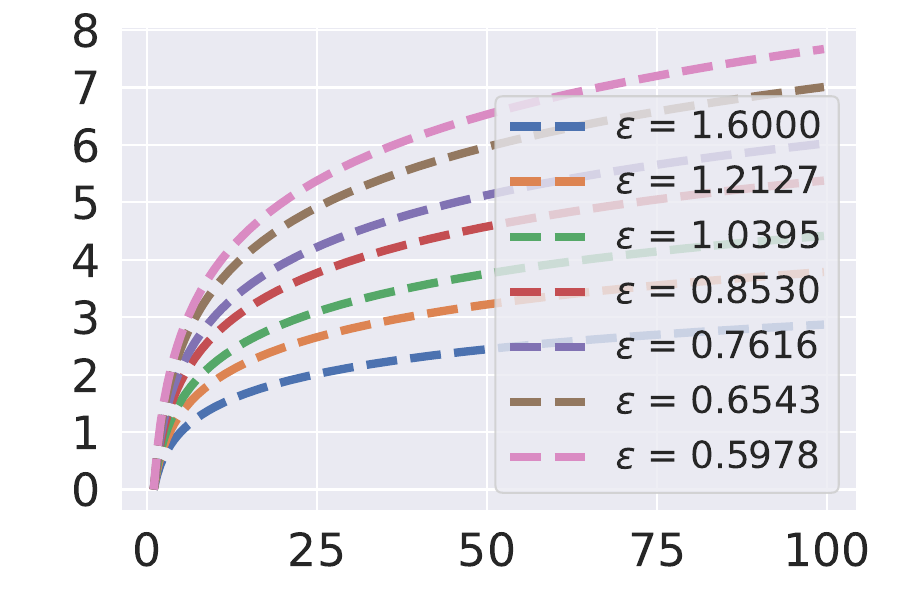}
\\[-2.5ex]
& \makecell{{\scriptsize $\epsilon$}}& &\makecell{{\scriptsize $\tau$}} \\[-1ex]  
 & \makecell{\shortstack{\scriptsize (c) \sent{} ($k=4$) }} &
        & \makecell{\shortstack{\scriptsize (f) \sent{} }}
        \vspace{-1.7pt}\\
\rowname{\makecell{\scriptsize $J(D')$}}&
\includegraphics[height=\costuserepsfigheightc]{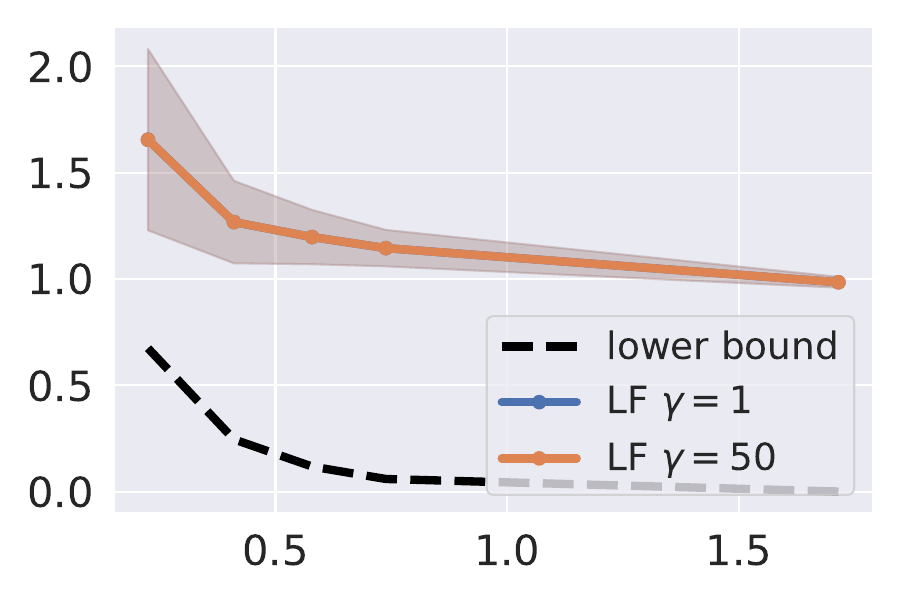}&
\rowname{\makecell{\scriptsize $k$}}&
\includegraphics[height=\costuserepsfigheightc]{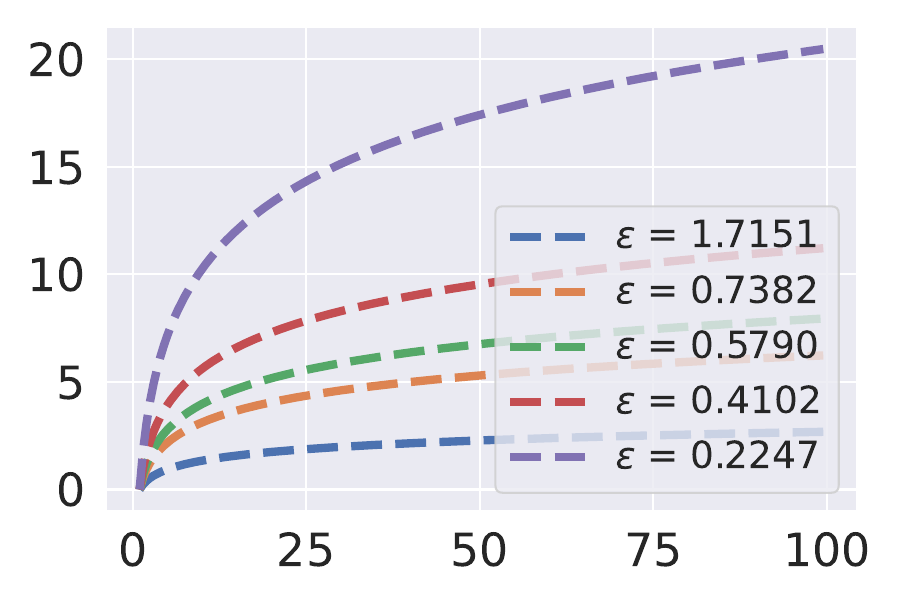}
\\[-2.5ex]
& \makecell{{\scriptsize $\epsilon$}}& &\makecell{{\scriptsize $\tau$}} \\[-0ex] 
\end{tabular}
\end{subtable}
}
\vspace{-3mm}
\caption{\small Certified \attackcost of  \userdpfedavg{}  with different $\epsilon$  (a-c), and the lower bound of $k$ given different  $\epsilon$ under different attack effectiveness $\tau$ (d-f).}
\label{fig:user_eps_tau} 
\vspace{-2mm}
\end{figure}

}

\vspace{-1mm}
\subsubsection{Comparison to Empirical FL Defenses.}
\label{exp:defense}
Another line of research is to develop empirical defenses such as robust aggregation mechanisms
~\cite{nguyen2022flame,fung2020limitations,blanchard2017machinekrum,el2018hidden}
to detect and remove malicious users. 
Compared to empirical FL defenses, our work provides robustness \textit{certifications}, while existing studies only offer \textit{empirical} robustness. One key advantage of our analysis is that our robustness certifications provide lower bounds for model accuracy or attack inefficiency against constrained attacks, and such certification is agnostic to actual attack strategies, which means  there are  no future attacks that can break the certification as long as the $k$  is within the certified range. Conversely, empirical countermeasures are typically designed against specific types of attacks, leaving them potentially vulnerable to stronger or adaptive attacks in unknown environments~\cite{wang2020attackthetails,fang2020local}.
Moreover, our certifications are general and uncover the inherent relations between DPFL and certified robustness, and DPFL algorithms with better utility or tighter privacy accountants can further enhance the certification results. 

As existing FL defenses do not provide robustness guarantees and hence cannot be directly compared under our certified criteria, we compare the \textit{empirical} robust accuracy of our certification method with \textit{six} FL robust aggregations, including Krum~\cite{blanchard2017machinekrum}, Multi-krum~\cite{blanchard2017machinekrum}, Trimmed-mean~\cite{yin2018byzantine}, Median~\cite{yin2018byzantine}, Bulyan~\cite{el2018hidden}, RFA~\cite{pillutla2019robustrfa}.  \cref{tb:empircal_cifar_sota_poisoning} in \cref{app:sota_fl_defenses_empirical}  shows that our certification method achieves similar and even higher robust accuracy than empirical defenses under the state-of-the-art poisoning attacks on \cifar{}, while our approach can further provide robustness guarantees under different criteria.  We defer detailed results and discussion to \cref{app:sota_fl_defenses_empirical}.

Moreover, it is worth noting that our certifications still hold when DPFL is combined with other empirical defense strategies. Theoretically, in the presence of such defensive mechanisms, the  $(\epsilon,\delta)$ privacy guarantee holds due to the post-processing property of DP, and therefore certified robustness guarantee given 
$(\epsilon,\delta)$-DP still holds.  
Combining DPFL with other robust aggregations would further enhance the empirical robustness, which remains an interesting future direction.

\end{rev}

{
\begin{figure*}
\newlength{\accfigheightc}
\settoheight{\accfigheightc}{\includegraphics[width=.21\linewidth]{figures/ccsfinal_plots/cer_acc_conf/cer_acc_mnist.pdf}}

\newcommand{\rowname}[1]%
{\rotatebox{90}{\makebox[\accfigheightc][c]{\footnotesize #1}}}

\centering

{
\renewcommand{\tabcolsep}{10pt}
\begin{subtable}[]{\linewidth}
\centering
\begin{tabular}{@{}p{5mm}@{}c@{}c@{}c@{}c@{}c@{}c@{}c@{}}
        & \makecell{{\footnotesize (a) \mnist}}
        & \makecell{{\footnotesize (b) \cifar}}
        & \makecell{{\footnotesize (c) \mnist }}
        & \makecell{{\footnotesize (d) \cifar}}
        \vspace{-1.7pt}\\
\rowname{\makecell{\ceracc}}&
\includegraphics[height=\accfigheightc]{figures/ccsfinal_plots/cer_acc_conf/cer_acc_mnist.pdf}&
\includegraphics[height=\accfigheightc]{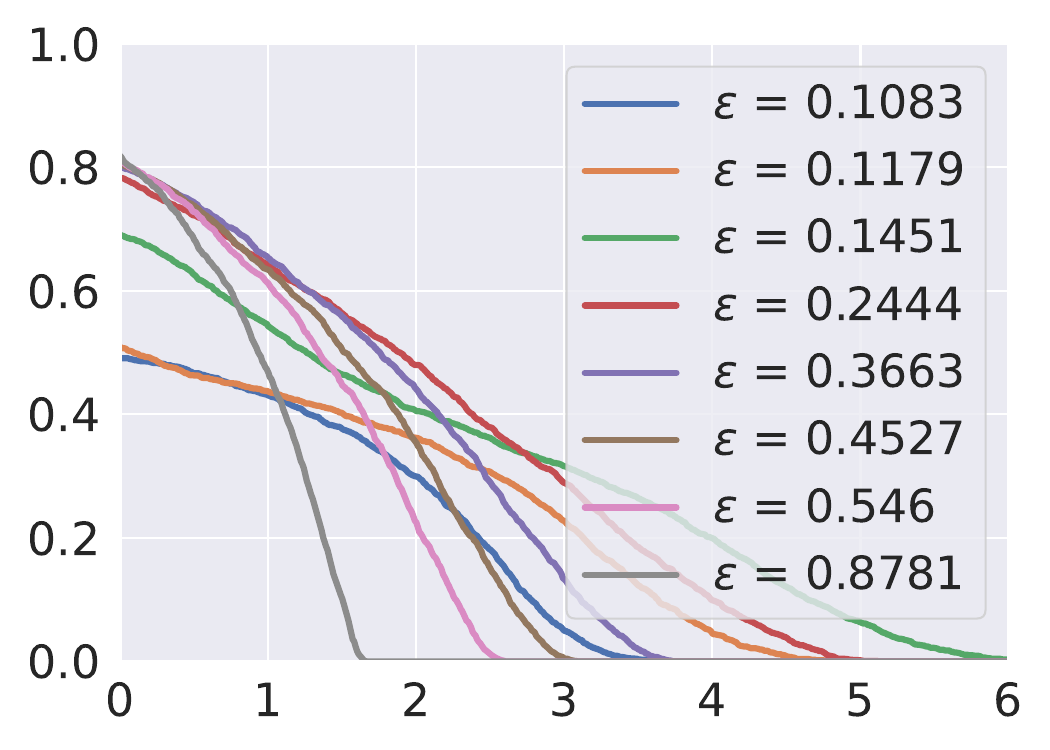}&
\includegraphics[height=\accfigheightc]{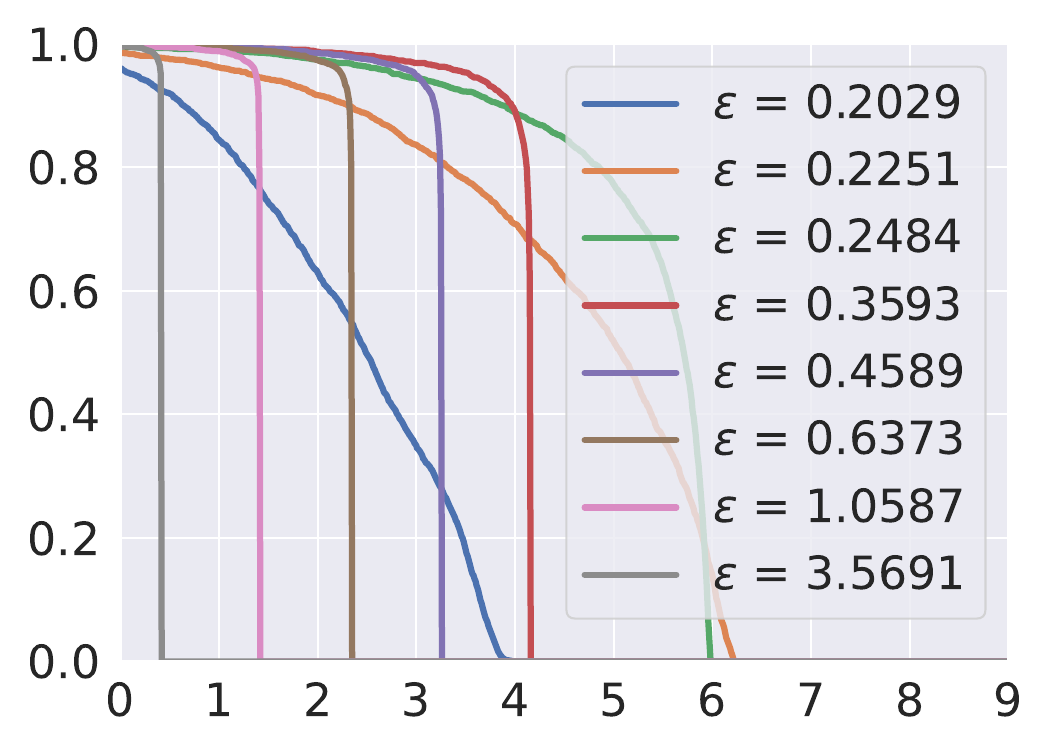}&
\includegraphics[height=\accfigheightc]{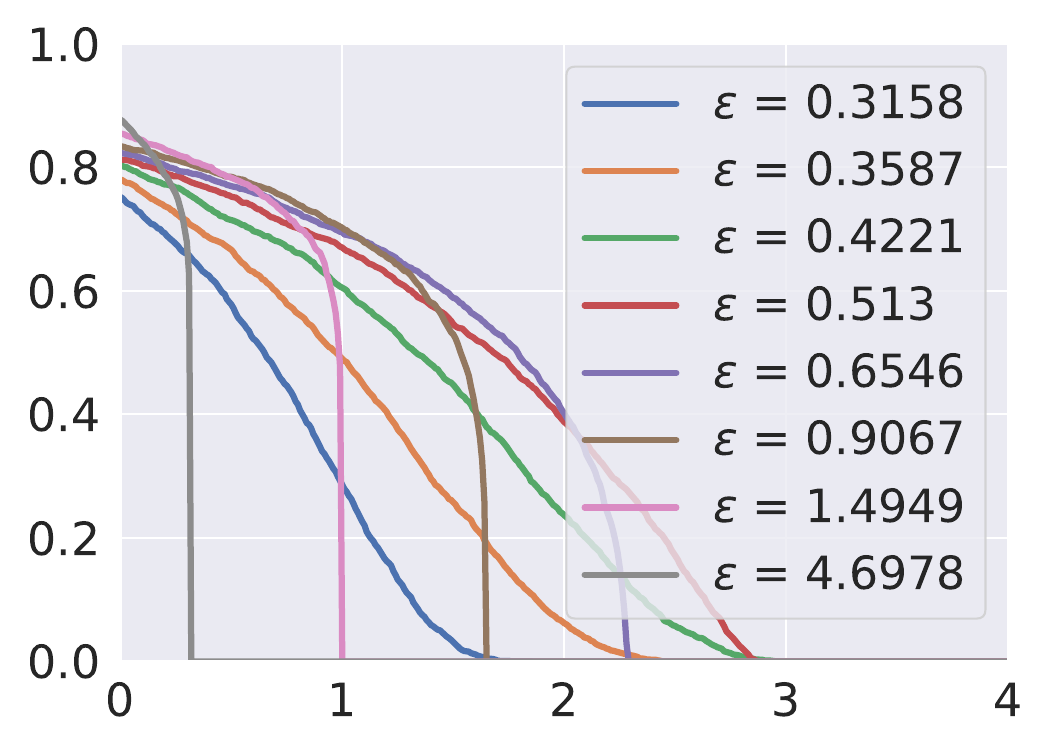}\\[-2ex]
& \makecell{{\footnotesize $k$}}& \makecell{{\footnotesize $k$}}& \makecell{\footnotesize $k$}& \makecell{\footnotesize $k$} \\
\end{tabular}
\end{subtable}
}
\vspace{-5mm}
\caption{\small Certified accuracy under $99\%$ confidence of FL satisfying user-level DP (a,b), and instance-level DP (c,d).}
\label{fig:cer_acc_conf}
\vspace{-4mm}
\end{figure*}

}

{
\begin{figure}[t]
\newlength{\fedsgdfigheightc}
\settoheight{\fedsgdfigheightc}{\includegraphics[width=0.45\linewidth]{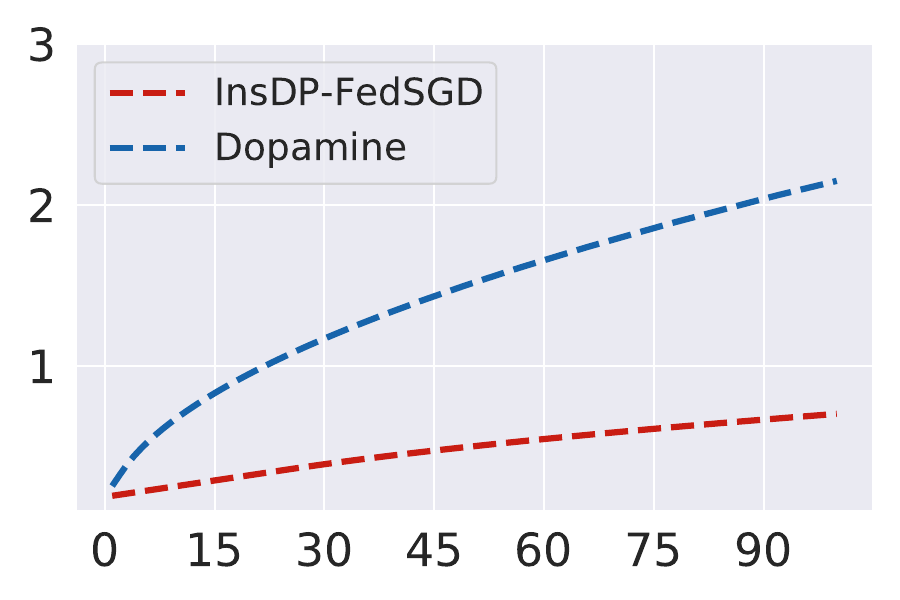}}

\newcommand{\rowname}[1]%
{\rotatebox{90}{\makebox[\fedsgdfigheightc][c]{\footnotesize #1}}}

\centering

{
\renewcommand{\tabcolsep}{10pt}
\begin{subtable}[]{\linewidth}
\centering
\begin{tabular}{@{}p{5mm}@{}c@{}c@{}c@{}c@{}c@{}c@{}c@{}}
        & \makecell{\shortstack{\footnotesize (a)~$ q=10/30$ }}
             & \makecell{\shortstack{\footnotesize (b)~$ q=20/30$}}
        \vspace{-1.7pt}\\
\rowname{\makecell{$\epsilon$}}&
\includegraphics[height=\fedsgdfigheightc]{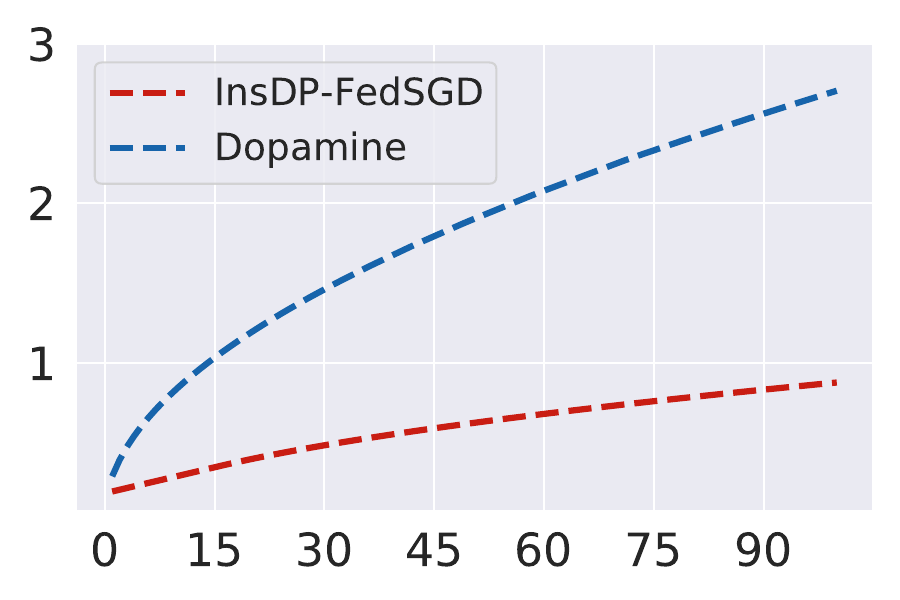}&
\includegraphics[height=\fedsgdfigheightc]{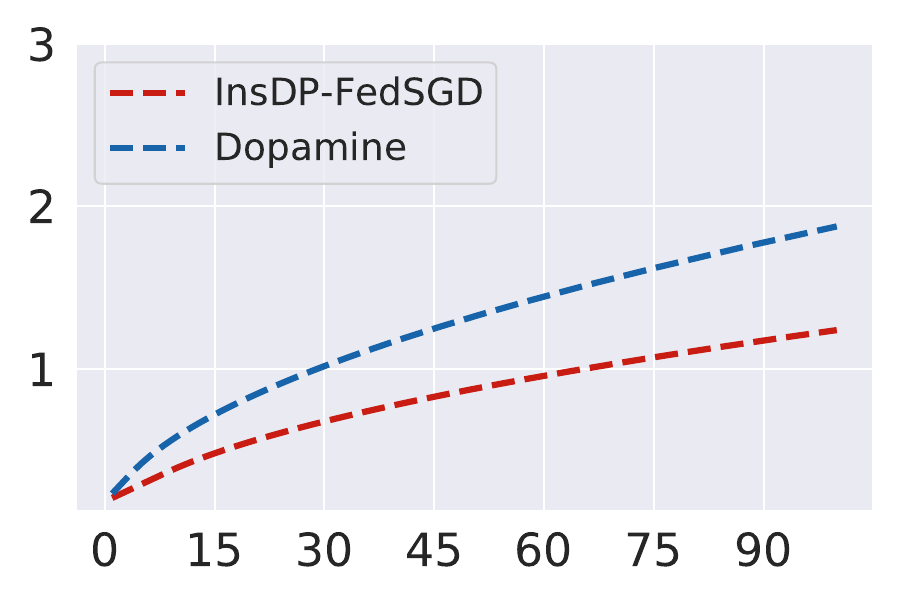}&\\
\\[-5ex]
& \makecell{{\footnotesize Round}}& \makecell{{\footnotesize Round}} 
\end{tabular}
\end{subtable}
}
\vspace{-4mm}
\caption{\small Privacy budget $\epsilon$ under differnt user subsampling  probability $q$.  \insdpfedsgd{} achieves a tighter bound than Dopamine.}
\vspace{-4mm}
\label{fig:privacybound_compare} 
\end{figure}

}

{
\begin{figure}[t]
\newlength{\tauinsfigheightc}
\settoheight{\tauinsfigheightc}{\includegraphics[width=0.43\linewidth]{figures/ccsfinal_plots/cer_acc_conf/cer_acc_mnist.pdf}}

\newcommand{\rowname}[1]%
{\rotatebox{90}{\makebox[\tauinsfigheightc][c]{\footnotesize #1}}}

\centering

{
\renewcommand{\tabcolsep}{10pt}
\begin{subtable}[]{\linewidth}
\centering
\begin{tabular}{@{}p{5mm}@{}c@{}c@{}c@{}c@{}c@{}c@{}c@{}}
        & \makecell{\shortstack{\footnotesize (a) \mnist{}}}
             & \makecell{\shortstack{\footnotesize (b) \cifar{}}}
        \vspace{-1.7pt}\\
\rowname{\makecell{$k$}}&
\includegraphics[height=\tauinsfigheightc]{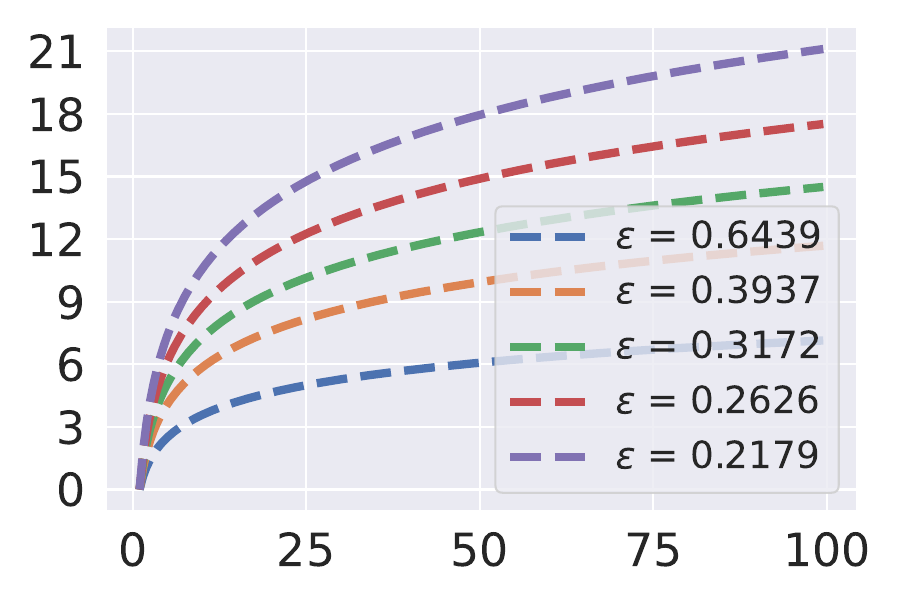}&
\includegraphics[height=\tauinsfigheightc]{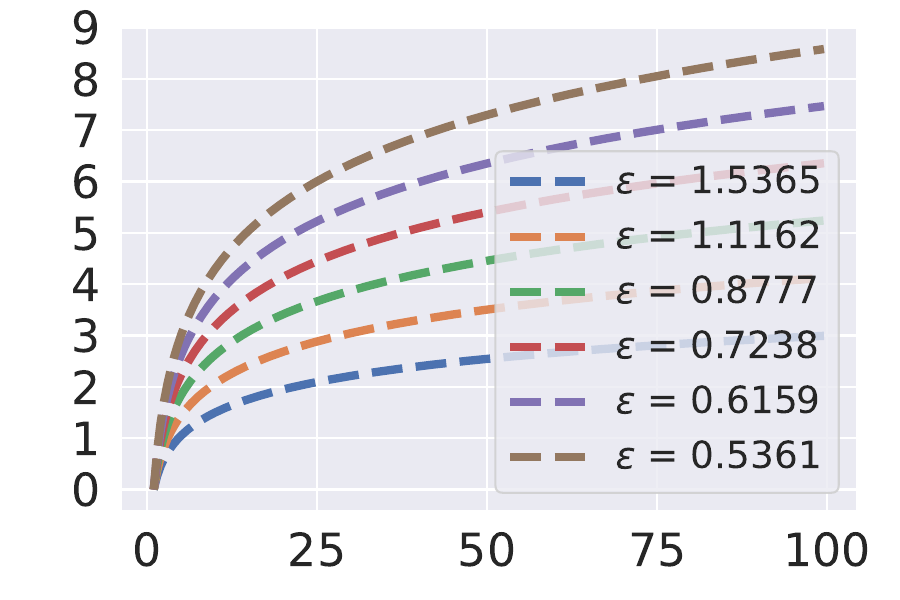}&\\
\\[-5ex]
& \makecell{{\footnotesize $\tau$}}& \makecell{{\footnotesize $\tau$}} 
\end{tabular}
\end{subtable}
}
\vspace{-5mm}
\caption{\small Lower bound of $k$ under instance-level $\epsilon$  given attack effectiveness $\tau$.}
\vspace{-5mm}
\label{fig:ins_tau} 
\end{figure}

}

\begin{rev}

\vspace{-1mm}
\subsubsection{Computational Overhead and Overall Privacy Costs of Robustness Certifications}
\label{exp:overhead_privacycost}
Our robustness certifications are based on DPFL, and we do not impose additional operations for DPFL, so the certifications are applicable for practical FL scenarios where the DPFL algorithm is implemented~\cite{googledpfl}. 
The major overhead of our certifications comes from re-training the DPFL algorithm $O$ times for Monte-Carlo approximation (see \cref{subsubsec:certification_monte_carlo_approx}). Notably,  retraining is a common requirement when providing certifications against poisoning attacks~\cite{weber2020rab,rosenfeld2020certifiedlableflip}. In addition, the multiple runs of re-training are parallelizable and can be speeded up with multiple GPUs. We report the running time for certifications  on \sent{} in \cref{app:running_time_overhead}. 
The re-training for Monte-Carlo approximation also increases the overall privacy costs, as discussed in ~\cref{subsubsec:certification_monte_carlo_approx}. In practice, one can adjust $O$ to prioritize robustness (i.e., a larger 
$O$ for higher certification confidence), or privacy (i.e., a smaller $O$ for fewer times of re-training).  As a result, certified robustness can be achieved by balancing the privacy budget and robustness confidence. For example, as shown in \cref{app:add-practical-consideration}, the maximal certified number of adversaries on \cifar{} is $k=4$  with the overall privacy cost  $10.15$ (calculated by $\epsilon O$) under a confidence level of 80\% (details about confidence level are deferred to \cref{appendix_confidence_level}).

\end{rev}

\looseness=-1
\vspace{-0.5em}
\subsubsection{Certified Attack Inefficacy under Different $k$ and Different Poisoning Attacks.} 
\label{subsubsec:cer_cost_different_k_attack_userdp}
To evaluate \cref{thm_costfunc_k_client} and characterize the tightness of our theoretical lower bound $\underline{J(D')}$, we compare it with the empirical \attackcost $J(D')$ under different local poison fraction $\alpha$, attack methods and scale factor $\gamma$ in \cref{fig:user_k_gamma_alpha}. Note that when $k=0$, the model is benign, so the empirical attack inefficacy equals the certified one.  
\begin{enumerate}[leftmargin=*]
\item When $k$ increases, the attack ability grows, and both the empirical  \attackcost and theoretical lower bound decrease.
\item In \cref{fig:user_k_gamma_alpha} row 1, given the same $k$, higher $\alpha$, i.e., poisoning more local instances for each attacker, achieves a stronger attack, under which the empirical $J(D)$ can be achieved and is closer to the certified lower bound. This indicates that the lower bound appears tighter when the poisoning attack is stronger. 
\item In \cref{fig:user_k_gamma_alpha} row 2, we fix $\alpha=100\%$ and evaluate \userdpfedavg{} under different $\gamma$ and attack methods. It turns out that DP serves as a strong defense empirically for FL, given that $J(D)$ did not vary much under different  $\gamma$~(1,50,100) and  different attack methods (BKD, DBA, LF). This is because the clipping operation restricts the magnitude of malicious updates, rendering the model replacement ineffective; the Gaussian noise perturbs the malicious updates and makes the DPFL model stable, and thus the FL model is less affected by poisoning instances.  
\item In both rows, the lower bounds are tight when $k$ is small. When $k$ is large, there remains a gap between our lower bounds  and  empirical  \attackcost under different attacks, suggesting that there is room for improvement in either devising more effective poisoning attacks or developing tighter robustness certification techniques.
\end{enumerate}

\vspace{-1mm}
\subsubsection{Certified Attack Inefficacy under Different $\epsilon$.} 
We further explore the impacts of different factors on the certified \attackcost.
\cref{fig:user_eps_tau} (a-c) present the empirical \attackcost and the certified \attackcost lower bound given different $\epsilon$ of user-level DP.
As the privacy guarantee becomes stronger (smaller $\epsilon$), the model is more robust, achieving higher $J(D')$ and $\underline{J(D')}$. The results under the BKD attack are omitted to \cref{app:add-user-dpfl_exp_results}.

In \cref{fig:user_eps_tau} (d-f), we train user-level ($\epsilon$, $\delta$) DPFL models, calculate corresponding $J(D)$, and plot the lower bound of $k$ given different attack effectiveness hyperparameter $\tau$  according to \cref{thm_k_clients_bound}.
It shows that
\textbf{(1)} when the required attack effectiveness is higher (larger $\tau$), more attackers are required. 
\textbf{(2)} To achieve the same effectiveness of the attack, a fewer number of attackers is needed for larger $\epsilon$, which means a DPFL model with weaker privacy is more vulnerable to poisoning attacks.

\vspace{-2mm}
\subsection{Evaluation Results of Instance-level DPFL}
\label{app:add-ins-dpfl_exp_results}
Here, we start by comparing the privacy protection  between our \insdpfedsgd{} and Dopamine, and then present certified robustness for \insdpfedavg{} based on \textbf{certified accuracy} under (1) different $\epsilon$, (2)  given confidence level; and \textbf{certified \attackcost} under (1) different $k$ and attacks, and (2) different $\epsilon$.

\sloppy
\vspace{-2mm}
\subsubsection{Privacy Bound Comparison.} \label{subsubsec:privacy_bound_compare_dopamine}
We compare \insdpfedsgd{} with Dopamine\revise{, both under RDP accountant~\cite{mironov2017renyi} for convenience of comparison, to validate the privacy amplification of \insdpfedsgd{} provided by user subsampling. }
With the same noise level ($\sigma=3.0$), clipping threshold ($S=1.5$), and batch sampling probability ($p=0.4$), we calculate the privacy budget under different user sampling probability $q=m/N$. 
\cref{fig:privacybound_compare} shows that \insdpfedsgd{} achieves tighter privacy bound over training rounds.
For instance, at round 200, with $q=10/30$, our method ($\epsilon=0.87$) achieves a much tighter privacy guarantee than Dopamine ($\epsilon=2.70$)\revise{, which comes from user subsampling $q<1$, while Dopamine neglects it. }

\vspace{-1mm}
\subsubsection{Certified Accuracy under Different $\epsilon$.}\label{subsubsec:cer_acc_under_insdp_eps}
We report the certified accuracy of \insdpfedavg{} under different $\epsilon$ on \mnist{} and \cifar{} in \cref{fig:ins_ceracc_k_eps_mnist} (a) and \cref{fig:ins_ceracc_k_eps_cifar} (a). 
We note that the optimal $\epsilon$ that is able to certify the largest number of poisoned instances $k$ is around $0.3593$ for \mnist{} and  $0.6546$ for \cifar{}.
Despite the different FL setups (e.g., the total number of users) under user/instance DP, we can approximately compare the \textit{certified robustness} in terms of the number of tolerable poisoned \textit{instances} for the two DP levels under the same $\epsilon$. When $\epsilon\approx0.4$ on \mnist, \userdpfedavg{} can certify a maximum of $k\approx5$ attackers\revise{, translating to a total of roughly 1250 poisoned instances}, while \insdpfedavg{} can certify up to $k\approx12$ poisoned instances.
Therefore, \userdpfedavg{} can certify many more poisoned instances under the same $\epsilon$ than \insdpfedavg{}, though with a different privacy scope.
We report the (uncertified) accuracy of \insdpfedavg{} in \cref{sec:app_exp_details}. 

\vspace{-1mm}
\subsubsection{{Certified Accuracy with a Confidence Level.}}
\label{appendix_confidence_level}
Here, we present the certified accuracy with the confidence level for both user and instance-level DPFL.
We use Hoeffding's inequality~\cite{hoeffding1994probability} to calibrate the empirical estimation with one-sided error tolerance $\psi$, i.e., one-sided confidence level $1-\psi$.
We denote the empirical estimation of the class confidence for class $c$ as $ \widetilde{F}_c(\Mcal(D),x) = \frac{1}{O} \sum_{o=1}^O f^s_c $.
For a test input $x$, suppose $\Aclass, \Bclass \in [C]$ satisfy $ \Aclass=\arg \max_{c \in [C]} \widetilde{F}_c(\mathcal{M}(D),x)$ and $\Bclass=\arg \max_{c \in [C]:c\neq \Aclass} \widetilde{F}_c(\mathcal{M}(D),x)$.
For a given error tolerance $\psi$, we use Hoeffding's inequality to compute a lower bound $\underline{F_{\Aclass}(\mathcal{M}(D),x)}=\widetilde{F}_{\Aclass}(\mathcal{M}(D),x) - \sqrt{\frac{\log(1/\psi)}{2O}}$ for $\Aclass$,
as well as  a upper bound $\overline{F_{\Bclass}(\mathcal{M}(D),x)}=\widetilde{F}_{\Bclass}(\mathcal{M}(D),x) + \sqrt{\frac{\log(1/\psi)}{2O}}$ for $\Bclass$.
We use $\psi=0.01$ (i.e., 99\% confidence).

From the results in \cref{fig:cer_acc_conf}, we observe similar trends between $\epsilon$ and certified accuracy as in \cref{fig:userdp_ceracc}, \cref{fig:ins_ceracc_k_eps_mnist} (a) and \cref{fig:ins_ceracc_k_eps_cifar} (a). In general, the largest number of certified adversarial users in \cref{fig:cer_acc_conf} is smaller than the previous results because we calibrate the empirical estimation, leading to the narrowed class confidence gap between classes $\Aclass$ and $\Bclass$.

\vspace{-2mm}
\subsubsection{Certified Attack Inefficacy under Different $k$.} 
We report the certified \attackcost  of \insdpfedavg{} on \mnist{} and \cifar{} in   \cref{fig:ins_ceracc_k_eps_mnist} and \cref{fig:ins_ceracc_k_eps_cifar}.  We see that from \cref{fig:ins_ceracc_k_eps_mnist} (b)(c) and \cref{fig:ins_ceracc_k_eps_cifar} (b)(c),  poisoning more instances (i.e.,  a larger $k$) induces lower theoretical and empirical \attackcost lower bounds.

\vspace{-2mm}
\subsubsection{Certified Attack Inefficacy under Different $\epsilon$.} From \cref{fig:ins_ceracc_k_eps_mnist} (d)(e) and \cref{fig:ins_ceracc_k_eps_cifar} (d)(e), it is clear that instance-level DPFL  with a stronger privacy guarantee ensures higher \attackcost both empirically and theoretically, meaning that it is more robust against poisoning attacks. 
In \cref{fig:ins_tau}, we train instance-level $(\epsilon,\delta)$ DPFL models, calculate corresponding $J(D)$, and plot the lower bound of $k$ given different attack effectiveness hyperparameter $\tau$ according to \cref{thm_k_clients_bound}.
We can observe that fewer poisoned instances are required to reduce the $J(D')$ to a similar level for a less private DPFL model, indicating that the model is easier to be attacked.

\vspace{-2mm}
\section{{Discussion \& Conclusion}}
\label{sec:dis_con}
In this work, we take the first step to characterize the connections between \textit{certified} robustness against poisoning attacks and DP in FL. 
We introduce two certification criteria, based on which we prove that an FL model satisfying user-level (instance-level) DP is certifiably robust against a bounded number of adversarial users (instances). We also provide formal privacy analysis to achieve improved instance-level privacy. 
{Through comprehensive evaluations, we validate our theories and establish a general measurement framework to assess the certified robustness yielded by DPFL.}

\begin{rev}
    
\textbf{Limitations \& Future Work.}
One limitation of our work is that we focus on the ``central'' DP with a trusted server for user-level DPFL, where the FL server clips and adds noise, as opposed to a ``local'' DP setting, where each client clips and adds their noise locally~\cite{naserilocal}. While we  follow~\cite{geyer2017differentially,mcmahan2018learning} to consider a trusted server in the central DP regime, it offers weaker privacy protection than local DP, since the privacy guarantee does not hold against the server who can see raw client updates. 
It would be interesting to further extend the analysis to FL with local DP guarantees. 
Another limitation is that our certifications could add computational overhead. Certifying training-time robustness necessitates training multiple models, demonstrated in prior certification studies~\cite{weber2020rab,rosenfeld2020certifiedlableflip}, though this can be accelerated using parallelization and multiple GPUs.

\end{rev}
{The future directions include 
\revise{(1) extending our analysis to more complicated DP settings, such as scenarios where only non-attackers apply local DP in FL while attackers do not~\cite{naserilocal};
(2) combining DPFL with robust FL aggregations to further boost robustness;}
(3) investigating the certified robustness of advanced FL algorithms~\cite{shi2023make,cheng2022differentially,noble2022differentially}
that would maintain higher utility under DP in non-IID data settings; 
(4) developing tighter privacy accountant techniques over FL training to improve the certified robustness from the DP theory perspective;
(5) investigating advanced model architectures 
and pretraining techniques 
to further improve the certified robustness of DPFL.
We hope our work will help provide more insights into the relationships between privacy and certified robustness in the context of FL, paving the way for more secure and privacy-preserving FL applications in practice.}

\vspace{-1mm}
\section{Acknowledgements}

The authors thank Linyi Li and the anonymous reviewers for their valuable feedback and suggestions. This work is partially supported by the NSF grant No.1910100,
No.2046726, No. 2046795, No. 2205329, NSF ACTION Institute, DARPA GARD (HR00112020007), C3AI, Alfred P. Sloan Foundation, Amazon Research Award, DARPA contract \#N66001-15-C-4066,
and the National Research Foundation Singapore and DSO National Laboratories under the AI Singapore Programme (AISG Award No: AISG2-RP-2020-018).

\bibliographystyle{ACM-Reference-Format}
\bibliography{reference}
\appendix

The Appendix is organized as follows:
\begin{itemize}
    \item \cref{sec:dpfl_algo} provides the proofs for the privacy guarantees of our DPFL algorithms.
    \item  \cref{sec:app_exp_details} provides more details on experimental setups and the additional experimental results on  robustness certifications.
    \item   Appendix~\ref{sec:app_proofs} provides the proofs for the certified robustness-related analysis, including  \cref{lemma_group_dp}, \cref{thm_pred_consist_one_client}, \cref{thm_pred_consist_k_client},~\cref{thm_costfunc_k_client}, \cref{thm:thmsapplytoinsdp} and \cref{thm_k_clients_bound}.
     \item   \revise{\cref{app:renyi_certifications} provides the theoretical results and corresponding proofs for certified robustness against FL poisoning attacks derived from R\'enyi DP and Randomized Smoothing via R\'enyi Divergence. }
\end{itemize}

\section{Differentially Private Federated Learning}
\label{sec:dpfl_algo}

\begin{table*}[htbp]
    \caption{{Table of notations.}}
    \label{tb:notations}   
    \centering
\scalebox{1}{
\begin{tabular}{l| l}
\toprule
Notation & Description  \\
\midrule 
$N$ & number of FL users\\
$D_1,\ldots, D_N$ & local datasets of $N$ users \\
$D$ &  $\{D_1, \ldots, D_N \}$ clean FL dataset\\
$T$ & total number of communication rounds \\ 
$\eta$ & learning rate \\
$E$ & local epochs \\
$q$ &  user sampling probability  \\ 
$m$ &  number of selected users at each round \\ 
$U_t$ &  selected user set at round $t$\\
$w_{t}$ &  global model at round $t$\\ 
$\Delta w_{t}^i$ &  local update of client $i$ at round $t$\\ 
$D'$ &  poisoned FL dataset\\
$k$ &  number of adversarial users or adversarial instances \\
$S$ & clipping threshold \\
$\sigma$& noise level\\
$\delta$ & DP privacy parameter\\
$\epsilon$ & DP privacy budget\\
$\mathcal{M}$ &  DPFL training protocol  \\
$\mathcal{M}(D)$ & clean DPFL model at round $T$ \\ 
$\mathcal{M}(D')$ & poisoned DPFL model at round $T$ \\ 
$f_{c}(\mathcal{M}(D), x)$ &  confidence for class $c$ on test sample $x$ \\
$F_{c}(\mathcal{M}(D), x)$ &  expected confidence for class $c$ on test sample $x$\\
$H(\mathcal{M}(D),x)$ & prediction, i.e.,  top-1 class based on the expected confidence\\
$C(\mathcal{M}({D'}))$ & attack cost on the poisoned model $\mathcal{M}({D'})$\\
$J({D'})$ & expected attack cost on the poisoned model $\mathcal{M}({D'})$\\
$ \bar C$ & bound on attack cost $C(\cdot)$\\
$\bar g(x_j)$ & clipped gradient for sample $x_j$ in \insdpfedsgd{}\\
$\widetilde g$ & noise-perturbed and clipped gradient for sample $x_j$ in \insdpfedavg{}\\
$\gamma$ & scale factor in model replacement attack \\
$O$ & number of Monte Carlo samples \\
$\psi$& one-sided error tolerance in Monte-Carlo sampling \\
$\cerk$ & theoretical upper bound for the number of adversarial users/instances that can satisfy the certified prediction\\
$\underline{J(D')}$ & theoretical lower bound of the attack cost for poisoned DPFL model based on the certified cost \\
$\epsilon (\alpha)$ & RDP parameter \\
$\alpha$ & RDP order \\
\bottomrule 
\end{tabular}}
\end{table*}

We first present all the notations  used in our paper in \cref{tb:notations}.

\subsection{ \userdpfedavg{}}
\begin{small}
\begin{algorithm2e}
\caption{\small \userdpfedavg. }
\label{algo:Fedavg_client_privacy}
\scalebox{1}{
\begin{minipage}{0.8\linewidth}

\SetKwFunction{funone}{UserUpdate}
\SetKwFunction{funtwo}{Clip}
\KwIn{Initial model $w_0$, user sampling probability $q$, privacy parameter $\delta$, clipping threshold $S$, noise level $\sigma$, local datasets $D_1,...,D_N$, local epochs $E$, learning rate $\eta$. }
 
\KwOut{ FL model $w_T$ and privacy cost $\epsilon$}

\textbf{Server executes:}\\
\For{each round $t=1$ \KwTo $T$}
    {   
         $m \gets \max(q\cdot N, 1)$\;
          $U_t \gets$ (random subset of $m$ users)\;
        \For{each user $i \in U_t$ in parallel}
        {   
         $\Delta w_{t}^i  \gets $ \funone{$i, w_{t-1}$} \;
        }
   $w_{t} \gets w_{t-1 } + \frac{1}{m}\left(\sum_{i\in U_t}  \funtwo{$\Delta w^i_{t} ,S$}+  \mathcal{N}\left(0, \sigma^{2} S^{2}\right) \right); $ 

   $\mathcal{M}.$\textit{accum\_priv\_spending}$(\sigma,q,\delta)$ \;
    }
    $ \epsilon = \mathcal{M}.$\textit{get\_privacy\_spent}() \;
   \KwRet{$ w_T, \epsilon $ }
   
\SetKwProg{foo}{Procedure}{}{}
\foo{\funone{$i, w_{t-1}$}}{
  $w \gets w_{t-1}$ \; 
  \For{local epoch  $e =1$ \KwTo $E$}
    {   
        \For{batch $b \in $  local dataset $D_i$}
        {   
            $w \gets w- \eta \nabla l(w;b)$
        }
    }
  $\Delta w_{t}^i  \gets w -w_{t-1}$ \;
    \KwRet{$\Delta w_{t}^i$  }
}
\SetKwProg{fooo}{Procedure}{}{}
\fooo{\funtwo{$\Delta  , S$}}{
   \KwRet{$ \Delta  / \max \left(1, \frac{\left\|\Delta\right\|_{2}}{S}\right)$   }
}
\end{minipage}}
\end{algorithm2e}
\end{small}

To calculate the privacy costs for \cref{algo:Fedavg_client_privacy}, existing works utilize  moments accountant~\cite{abadi2016deep} for privacy analysis~\cite{geyer2017differentially,mcmahan2018learning}.
We note that R\'enyi Differential Privacy (RDP)~\cite{mironov2017renyi} supports a tighter composition of privacy budget than the moments accounting technique for DP~\cite{mironov2017renyi}. Therefore,  we utilize RDP~\cite{mironov2017renyi} to perform the privacy analysis in Algorithm~\ref{algo:Fedavg_client_privacy}. Specifically,  $\mathcal{M}.$\textit{accum\_priv\_spending}() is the call on RDP accountant~\cite{wang2019subsampled}, and $\mathcal{M}.$\textit{get\_privacy\_spent}() transfers RDP guarantee to DP guarantee based on the RDP to DP conversion theorem of \cite{balle2020hypothesis}.

\subsection{ \insdpfedsgd{}}

Here, we present the algorithm \insdpfedsgd{}.
\begin{small}
\begin{algorithm2e}
\caption{\small \insdpfedsgd. }
\label{algo:Fedsgd_sample_privacy}
\scalebox{0.9}{
    \begin{minipage}{1.0\linewidth}
    \SetKwFunction{funone}{UserUpdate}
    \SetKwFunction{funtwo}{Clip}
    \KwIn{
    Initial model $w_0$,  user sampling probability $q$, privacy parameter $\delta$, local clipping threshold $S$, local noise level $\sigma$, local datasets $D_1,...,D_N$, learning rate $\eta$, batch sampling probability $p$. }

    \KwOut{ FL model $w_T$ and privacy cost $\epsilon$}
    \textbf{Server executes:}\\
    \For{each round $t=1$ \KwTo $T$}
        {   
             $m \gets \max(q\cdot N, 1)$\;
              $U_t \gets$ (random subset of $m$ clients)\;
            \For{each user $i \in U_t$ in parallel}
            {   
             $\Delta w_{t}^i  \gets $ \funone{$i, w_{t-1}$} \;
            }
       $w_{t} \gets w_{t-1} + \frac{1}{m} \sum_{i\in U_t}  \Delta w^i_{t}   $ \;
        $\mathcal{M}.$\textit{accum\_priv\_spending}$(\sqrt{m} \sigma,pq,\delta)$
        }
        $ \epsilon = \mathcal{M}.$\textit{get\_privacy\_spent}() \;
       \KwRet{$ w_T, \epsilon $ }
       
    \SetKwProg{foo}{Procedure}{}{}
    \foo{\funone{$i, w_{t-1}$}}{
      $w \gets w_{t-1}$ \; 
     
        $b_t^i \gets $(uniformly sample a batch from $D_i$ with probability $p =L/|D_i|$)\;
        \For{each $x_j  \in b_t^i$}
        {
                $g(x_j) \gets \nabla l(w;x_j)$\; 
            $ \bar g(x_j) \gets $  \funtwo{$g(x_j)  , S$} \;
        }
            
  		$ \widetilde g \gets \frac{1}{L}\left(\sum_j \bar g(x_j) + \mathcal{N}\left(0, \sigma^{2} S^{2}\right) \right )$\;
         $w \gets w- \eta  \widetilde g  $ \;
      $\Delta w_{t}^i  \gets w - w_{t-1} $ \;
        \KwRet{$\Delta w_{t}^i$ }
    }
    \SetKwProg{fooo}{Procedure}{}{}
    \fooo{\funtwo{$\Delta  , S$}}{
       \KwRet{$ \Delta  / \max \left(1, \frac{\left\|\Delta\right\|_{2}}{S}\right)$   }
    }

        \end{minipage}}
    \end{algorithm2e}
\end{small}

Next, we recall \cref{proposition:insdpfedsgd_privacy_guarantee} and present its proof.
\propinsdpfedsgd*

\begin{proof}
\textbf{(1)} In instance-level DP, we consider the sampling probability of each instance under the combination of user-level sampling and batch-level sampling. Since the user-level sampling probability is $q$ and the batch-level sampling probability is $p$, each instance is sampled with probability $pq$. 
\textbf{(2)} Additionally, since the sensitivity of instance-wise gradient w.r.t one instance is $S$, after local gradient descent and server FL aggregation, the equivalent sensitivity of global model w.r.t one instance is $S'=\frac{\eta S}{Lm}$ according to Eq~(\ref{eq:onestep_globalupdate_fedsdg}). 
\textbf{(3)} Moreover, since the local noise is $n_i \sim \mathcal{N}(0,\sigma^2 S^2)$ ,  the  ``virtual'' global noise is $n= \frac{\eta}{mL}\sum_{i \in U_t} n_i $ according to Eq~(\ref{eq:onestep_globalupdate_fedsdg}), so  $n \sim \mathcal{N}(0,\frac{ \eta^2 \sigma^2 S^2}{mL^2})$. Let $\frac{ \eta^2 \sigma^2 S^2}{mL^2}=   {\sigma'}^2 {S'}^2$ such that $n \sim \mathcal{N}(0,  {\sigma'}^2 {S'}^2 )$.  
Since $S'=\frac{\eta S}{Lm}$,  the equivalent global noise level is ${\sigma'}^2 = \sigma^2m$, i.e.,  $\sigma'= \sigma \sqrt{m}$.  Then, we use $pq$ to represent \emph{instance-level} sampling probability, $T$ to represent FL training rounds, $\sigma \sqrt{m}$ to represent the equivalent global noise level. 
The rest of the proof follows \textbf{(1)} RDP subsampling amplification~\cite{wang2019subsampled}, \textbf{(2)} RDP composition for privacy budget accumulation over $T$ rounds based on the RDP composition~\cite{mironov2017renyi}
and \textbf{(3)} transferring RDP guarantee to DP guarantee based on the conversion theorem~\cite{balle2020hypothesis}.
\end{proof}

\subsection{ \insdpfedavg{}}

\begin{small}
\begin{algorithm2e}
\caption{\small \insdpfedavg. }
\label{algo:Fedavg_sample_privacy}
\scalebox{0.9}{
    \begin{minipage}{1.0\linewidth}
    \SetKwFunction{funone}{UserUpdate}
    \SetKwFunction{funtwo}{Clip}
    \KwIn{Initial model $w_0$,  user sampling probability $q$, privacy parameter $\delta$, local clipping threshold $S$, local noise level $\sigma$, local datasets $D_1,...,D_N$, local steps $V$, learning rate $\eta$, batch sampling probability $p$. }
    \KwOut{ FL model $w_T$ and privacy cost $\epsilon$}
    \textbf{Server executes:}\\
    \For{each round $t=1$ \KwTo $T$}
        {   
             $m \gets \max(q\cdot N, 1)$\;
              $U_t \gets$ (random subset of $m$ users)\;
            \For{each user $i \in U_t$ in parallel}
            {   
             $\Delta w_{t}^i , \epsilon^i_{t}  \gets $ \funone{$i, w_{t-1}$} \;
            }
            \For{each user $i \notin  U_t$}
            {   
              $\epsilon^i_{t}  \gets \epsilon^i_{t-1}  $ \;
            }
       $w_{t} \gets w_{t-1} + \frac{1}{m} \sum_{i \in U_t}  \Delta w^i_{t}   $ \;
       $ \epsilon_t =  \mathcal{M}.$\textit{parallel\_composition}$(\{\epsilon^i_{t}\}_{i\in [N]})$
        }
        $ \epsilon = \epsilon_T $ \;
       \KwRet{$ w_T, \epsilon $ }
       
    \SetKwProg{foo}{Procedure}{}{}
    \foo{\funone{$i, w_{t-1}$}}{
      $w \gets w_{t-1}$ \; 
      \For{each local step  $v =1$ \KwTo $V$}
        {   
            $b \gets $(uniformly sample a batch from $D_i$ with probability $p =L/|D_i|$)\;
          \For{each $x_j  \in b$}
        {
                $g(x_j) \gets \nabla l(w;x_j)$\; 
            $ \bar g(x_j) \gets $  \funtwo{$g(x_j)  , S$} \;
        }
            
  		$ \widetilde g \gets \frac{1}{L}(\sum_j \bar g(x_j) + \mathcal{N}\left(0, \sigma^{2} S^{2}\right)  )$\;
         $w \gets w- \eta \widetilde g   $ \;
       $ \mathcal{M}^i.$\textit{accum\_priv\_spending}$(\sigma,p,\delta)$ \;
        }
      $ \epsilon^i_{t} = \mathcal{M}^i.$\textit{get\_privacy\_spent}() \;
      $\Delta w_{t}^i  \gets w - w_{t-1} $ \;
          
        \KwRet{$\Delta w_{t}^i, \epsilon^i_{t}$ }
    }
    \SetKwProg{fooo}{Procedure}{}{}
    \fooo{\funtwo{$\Delta  , S$}}{
       \KwRet{$ \Delta  / \max \left(1, \frac{\left\|\Delta\right\|_{2}}{S}\right)$   }
    }
    \end{minipage}}
\end{algorithm2e}
\end{small}

Next, we will first consider the special case of one FL training round (i.e., $T=1$) to showcase the privacy cost \textit{aggregation}. Then, we will combine \textit{local privacy cost accumulation} in each user and the privacy cost aggregation in the server for the general case with any  $t$ FL rounds. When $T=1$, the relationship between  the privacy cost of the local model $\epsilon^i, {i \in [N]}$  and the privacy cost of global model $\epsilon$ for one FL training round is characterized in \cref{lemma:insdp_one_round}. 
For the general case of any  $t$ FL rounds, we  provide the privacy guarantee by combing the RDP accountant for the local model and the parallel composition for the global model in \cref{thm:insdp_t_round}.

\begin{lemma}[\insdpfedavg{} Privacy Guarantee when $T=1$]
\label{lemma:insdp_one_round}
In Algorithm~\ref{algo:Fedavg_sample_privacy}, when $T=1$, suppose local mechanism $\Mcal^i$ satisfies $(\epsilon^i, \delta^i)$-DP,  then global mechanism $\Mcal$ satisfies $(\max_{i \in [N]} \epsilon^i,\delta^i )$-DP.
\end{lemma}

\begin{proof}
\sloppy
We can regard FL as partitioning a dataset $D$ into $N$ disjoint subsets $\{D_1, D_2,\ldots,D_N\}$. $N$ local randomized mechanisms $\{\Mcal^1, \ldots, \Mcal^N \}$ are operated on these $N$ parts separately and each $\Mcal^i$ satisfies its own $\epsilon^i$-DP for $i \in [1,N]$. 
Without loss of generality, we assume the modified data sample $x'$ ($x \rightarrow x'$ causes $D \rightarrow D'$) is in the local dataset of $k$-th client $D_k$. Then $D,D'$ are two neighboring datasets, and $D_k,D_k'$ are also two neighboring datasets.
Consider  a sequence of outcomes (i.e., local model updates) from local mechanisms  $\{ z_1= \Mcal^1(D_1), z_2= \Mcal^2(D_2;z_1),z_3 = \Mcal^3(D_3;z_1,z_2) , \ldots \}$. The global mechanism consists of a series of linear operators on the sequence $z=\Mcal(D)=w_0+ \frac{1}{m}\sum_{i=1}^N {z_i}$.  Note that if $i$-th user is not selected,  $z_i=0$.
According to the parallel composition~\cite{lecnote2013}, we have

{\small
\begin{align*}
	&\Pr[\Mcal(D)=z]\\
	&= \Pr[\Mcal^1(D_1)=z_1] \cdot \Pr[\Mcal^2(D_2;  z_1)=z_2]\cdots \\
	&\quad \quad \quad \cdot \Pr[\Mcal^N(D_N;  z_1,\ldots,z_{N-1})=z_N]\\
	&\leq \left( \exp(\epsilon^k) \Pr[\Mcal^k(D_k';  z_1,\ldots,z_{k-1})=z_k] + \delta^k \right)\\
	&\quad \quad \quad \cdot
	\prod_{i \neq k}\Pr[\Mcal^i(D_i;  z_1,\ldots,z_{i-1})=z_i]\\
&=  \exp(\epsilon^k) \Pr[\Mcal^k(D_k';  z_1,\ldots,z_{k-1})=z_k] \prod_{i \neq k}\Pr[\Mcal^i(D_i;  z_1,\ldots,z_{i-1})=z_i]  \\
	&\quad \quad \quad + 
	\prod_{i \neq k}\Pr[\Mcal^i(D_i;  z_1,\ldots,z_{i-1})=z_i]   \delta^k  \\
	&=\exp(\epsilon^k)  \Pr[\Mcal(D')=z]  + \prod_{i \neq k}\Pr[\Mcal^i(D_i;  z_1,\ldots,z_{i-1})=z_i]   \delta^k \\
 & \leq \exp(\epsilon^k)  \Pr[\Mcal(D')=z]   + \delta^k
\end{align*}
}	 
So  $\Mcal$ satisfies $\epsilon^k$-DP when the modified data sample lies in the subset $D_k$. Considering the worst case where the modified data samples are sampled, we derive that $\Mcal$ satisfies $(\max_{i \in [N]} \epsilon^i)$-DP.
\end{proof}

Next, we recall \cref{thm:insdp_t_round} and present its proof.
\thminsdpfedavgprivacytround* 

\begin{proof}
Again, without loss of generality, we assume the modified data sample $x'$ ($x \rightarrow x'$ causes $D \rightarrow D'$) is in the local dataset of $k$-th user $D_k$. We first consider the case when all users are selected. At each round $t$, $N$ mechanisms are operated on $N$ disjoint parts, and each $\Mcal^i_t$ satisfies its own $\epsilon^i$-DP where $\epsilon^i$ is the privacy cost for accessing the local dataset $D_i$ \emph{for one round} (not accumulating over previous rounds). 
 Let $D,D'$ be two neighboring datasets ($D_k,D_k'$ are also two neighboring datasets).
Suppose $z_0=\Mcal_{t-1}(D)$ is the aggregated randomized global model at round $t-1$, and $\{z_1,\ldots, z_N\}$ are the randomized local updates at round $t$, we have a sequence of computations $\{ z_1= \Mcal^1_t(D_1;z_0), z_2= \Mcal^2_t(D_2;z_0,z_1),z_3 = \Mcal^3_t(D_3;z_0,z_1,z_2) \ldots \}$ and  $z= \Mcal_t(D)=z_0+ \frac{1}{m}\sum_{i}^N {z_i}$.  
We first consider the sequential composition~\cite{dwork2014algorithmic} to accumulate the privacy cost over FL rounds to gain intuition. According to parallel composition, we have

\begin{align*}	
     &\Pr[\Mcal_t(D)=z]\\
    &= \Pr[\Mcal_{t-1}(D)=z_0] \cdot \prod_{i=1}^{N}\Pr[\Mcal^i_{t}(D_i; z_0, z_1,\ldots,z_{i-1})=z_i]\\
    &=\Pr[\Mcal_{t-1}(D)=z_0] \cdot  \Pr[\Mcal^k_{t}(D_k; z_0, z_1,\ldots,z_{k-1})=z_k] \\
        &   \quad \quad \cdot \prod_{i \neq k}\Pr[\Mcal^i_{t}(D_i; z_0, z_1,\ldots,z_{i-1})=z_i]\\
    &\leq \exp(\epsilon_{t-1})\Pr[\Mcal_{t-1}(D')=z_0]  \\
    &	\quad \quad \cdot  \exp(\epsilon^k) \cdot \Pr[\Mcal^k_{t}(D_k'; z_0,z_1,\ldots,z_{k-1})=z_k]   \\
     &	\quad \quad \cdot  \prod_{i\neq k}\Pr[\Mcal^i_{t}(D_i; z_0, z_1,\ldots,z_{i-1})=z_i]\\
    &= \exp(\epsilon_{t-1}+\epsilon^k) \cdot \Pr[\Mcal_t(D')=z]	
\end{align*}

Therefore, $\mathcal{M}_t$ satisfies  
$\epsilon_t$-DP, where $\epsilon_t= \epsilon_{t-1}+ \epsilon^k$.
Because the modified data sample always lies in $D_k$ over $t$ rounds and $\epsilon_0=0$, we can have $\epsilon_t = t\epsilon^k$, which means that the privacy guarantee of global mechanism $\Mcal_t$ is only determined by the local mechanism of $k$-th user over $t$ rounds.

Moreover, RDP accountant~\cite{wang2019subsampled} is known to reduce the privacy cost from $\mathcal{O}(t)$ to $\mathcal{O}(\sqrt{t})$.  We can use this advanced composition, instead of the sequential composition, to accumulate the privacy cost for local mechanism $\Mcal^k$ over $t$ FL rounds. In addition, we consider user selection. As described in Algorithm~\ref{algo:Fedavg_sample_privacy}, if the user $i$ is not selected at round $t$, then its local privacy cost is kept unchanged at this round.

Take the worst case of where $x'$ could lie in, at round $t$, $\mathcal{M}$ satisfies 
$\epsilon_t$-DP, where $\epsilon_t= \max_{i \in [N]} \epsilon_t^i$, 
local mechanism $\mathcal{M}^i$ satisfies $ \epsilon_t^i$-DP, and the local privacy cost  $ \epsilon_t^i$ is accumulated via local RDP accountant in $i$-th user over $t$ rounds.

\end{proof}

\section{Experimental Details and Additional Results}
\label{sec:app_exp_details}
\subsection{Experimental Details}
\label{subsec:exp_details}
\subsubsection{Additional Implementation Details}

\begin{table*}[!htbp]
    \centering
    \caption{  {Clean accuracy of \userdpfedavg{} on \mnist{}}}
\scalebox{1}{
    \begin{tabular}{R || R R R R R R R R R}
    \toprule
      $\sigma$ & 0 & 0.5 & 0.6 & 0.8 & 1.0 & 1.5 & 1.8 & 2.3 & 3.0  \\ 
      \midrule
     $\epsilon$ &    $\infty$ & 6.9269 & 4.8913 & 2.8305 & 1.8504 & 0.8694 & 0.6298 & 0.4187 & 0.2808  \\ 
     \midrule
     Clean Acc. &    99.66\%  & 99.72\%  & 99.69\%  & 99.71\%  & 99.59\%  & 98.86\%  & 97.42\%  & 89.15\% & 72.79\% \\ 
    \bottomrule
    \end{tabular}}
    \label{tb:userdp_mnist_cleanacc}
\end{table*}

\begin{table*}[!htbp]
    \centering
     \caption{ Clean accuracy of \userdpfedavg{} on \cifar{}} 
\scalebox{1}{
    \begin{tabular}{R || R R R R R R R R R}
    \toprule
    $\sigma$ &  0 & 1.7 & 2.3 & 2.6 & 3.0 & 4.0 & 6.0 \\ \midrule
       $\epsilon$ &   $\infty$ & 0.8781 & 0.546 & 0.4527 & 0.3663 & 0.2444 & 0.1451  
     \\      \midrule
     Clean Acc. &   81.90\%  & 81.82\%  & 80.09\% & 79.27\% & 77.89\% & 73.07\%  & 64.36\%  
    \\ 
    \bottomrule
    \end{tabular}}

    \label{tb:userdp_cifar_cleanacc} 
\end{table*}

\begin{table*}[!htbp]
    \centering
     \caption{ {Clean accuracy of \userdpfedavg{} on \sent{}}}
\scalebox{1}{
    \begin{tabular}{R || R R R R R R R R R}
    \toprule
    $\sigma$ &  0 & 1 & 1.5& 1.7 & 2.0 & 3.0    \\ \midrule
     $\epsilon$ &   $\infty$ &  1.7151 & 0.7382 & 0.579  &  0.4102 &0.2247   \\      \midrule
    Clean Acc. &  64.33\%  & 62.64 \%  & 60.76 \%  & 59.57\% & 58.00\% & 55.28\% \\
    \bottomrule
    \end{tabular}}
    \label{tb:userdp_sent_cleanacc}
\end{table*}

\begin{table*}[!ht]
    \centering
    \caption{  {Clean accuracy of \insdpfedavg{} on \mnist{}}}
    \label{tb:insdp_mnist_cleanacc}
\scalebox{1}{
    \begin{tabular}{R || R R R R R R R R R}
    \toprule
  $\sigma$ & 0 & 1 & 2 & 3 & 4 & 5 & 8 & 10 & 15  \\ \midrule
   $\epsilon$ &    $\infty$  & 3.5691 & 1.0587 & 0.6373 & 0.4589 & 0.3593 & 0.2484 & 0.2251 & 0.2029   \\ \midrule
   Clean Acc. &   99.85\%  & 99.73\%  & 99.73\%  & 99.70\% & 99.65\%  & 99.57\%  & 97.99\%  & 93.30\%  & 77.12\%   \\ 
    \bottomrule
    \end{tabular}}
     
\end{table*}

\begin{table*}[!ht]
    \centering
    \caption{  {Clean accuracy of \insdpfedavg{} on \cifar{}}}
\scalebox{1}{
    \begin{tabular}{R || R R R R R R R R R}
    \toprule
     $\sigma$ & 0 & 1 & 2 & 3 & 4 & 5 & 6 & 7 & 8  \\ \midrule
     $\epsilon$ &   $\infty$  & 4.6978 & 1.4949 & 0.9067 & 0.6546 & 0.513 & 0.4221 & 0.3587 & 0.3158  \\ \midrule
        Clean Acc. &  91.15\%  & 87.91\%  & 86.02\%  & 83.85\%  & 81.43\%  & 77.59\%  & 72.69\%  & 66.47\%  & 62.26\%  \\ 
    \bottomrule
    \end{tabular}}
    \label{tb:insdp_cifar_cleanacc}
\end{table*}

We simulate the federated learning setup (1 server and N users) on a Linux machine with Intel\textregistered{} Xeon\textregistered{} Gold 6132 CPUs and 8 NVidia\textregistered{} 1080Ti GPUs. All code is implemented in Pytorch~\cite{NEURIPS2019_pytorch}.

\subsubsection{Training Details}
\label{subsubsec:training_details}
\sloppy
Next, we summarize the privacy guarantees and clean accuracy offered when we study the certified prediction and certified \attackcost, which are also the training parameters setups when $k=0$ in Figure~\ref{fig:userdp_ceracc},~\ref{fig:user_k_gamma_alpha},~\ref{fig:user_eps_tau},~\ref{fig:ins_ceracc_k_eps_cifar},~\ref{fig:user_eps_backdoor},~\ref{fig:ins_tau},~\ref{fig:ins_ceracc_k_eps_mnist}.
\paragraph{User-level DPFL}
In order to study the user-level certified prediction under different privacy guarantees, for \mnist{}, we set $\epsilon$ to be $0.2808, 0.4187, 0.6298, 0.8694, 1.8504, 2.8305, 4.8913, 6.9269$, which are obtained by training \userdpfedavg{} FL model for $3$ rounds with noise level $\sigma=3.0, 2.3, 1.8, 1.5, 1.0, 0.8, 0.6, 0.5$, respectively (Figure~\ref{fig:userdp_ceracc}(a)). 
For \cifar{}, we set  $\epsilon$ to be $0.1083, 0.1179, 0.1451, 0.2444,  0.3663, 0.4527, 0.5460, 0.8781 $, which are obtained by  training \userdpfedavg{} FL model for one round with noise level $\sigma=10.0, 8.0, 6.0, 4.0, 3.0,2.6,2.3, 1.7$, respectively (Figure~\ref{fig:userdp_ceracc}(b)).
{For \sent{}, we set $\epsilon$ to be $0.2234, 0.2238, 0.2247, 0.4102, 0.579, 0.7382, 1.7151$, which are obtained by  training \userdpfedavg{} FL model for three rounds with noise level $\sigma=5, 4, 3, 2, 1.7,1.5, 1$, respectively (Figure~\ref{fig:userdp_ceracc}(c)).}

 {The clean accuracy (average over 1000 runs) of \userdpfedavg{} under non-DP training ($\epsilon=\infty$) and DP training (varying  $\epsilon$) on \mnist{}, \cifar{}, and \sent{} are reported in Table.~\ref{tb:userdp_mnist_cleanacc}, Table.~\ref{tb:userdp_cifar_cleanacc} and Table.~\ref{tb:userdp_sent_cleanacc} respectively. 
We note that smaller $\epsilon$ results in lower accuracy, but we evaluate small $\epsilon$ only to study the relationship between  privacy and certified accuracy in Figure~\ref{fig:userdp_ceracc}, so as to show the tradeoff. Such extreme cases are not required for certification. 
For other evaluations on our \revise{paper} (such as \cref{fig:user_k_gamma_alpha}, \cref{fig:user_eps_tau}), we use normal $\epsilon$ with reasonable clean accuracy.
}

To certify the \attackcost under the different number of adversarial users $k$ (Figure~\ref{fig:user_k_gamma_alpha}), for \mnist{},  we set the noise level $\sigma$ to be 2.5. When $k=0$, after training \userdpfedavg{} for $T=3,4,5$ rounds, we obtain   FL models with privacy guarantee  $\epsilon = 0.3672, 0.4025, 0.4344$  and clean accuracy (average over $O$ runs) $86.69\%, 88.76\%, 88.99\%$. 
For \cifar{},  we set the noise level  $\sigma$ to be $3.0$. After training \userdpfedavg{} for $T=3,4$ rounds under $k=0$, we obtain   FL models with privacy guarantee $\epsilon = 0.5346, 0.5978 $  and clean accuracy $78.63\%, 78.46\%$.
For \sent{},  we set the noise level  $\sigma$ to be $2.0$. After training \userdpfedavg{} for $T=3$ rounds under $k=0$, we obtain   FL models with privacy guarantee $\epsilon =0.4102 $  and clean accuracy $58.00\%$.

With the interest of certifying \attackcost under different user-level DP guarantees (Figure~\ref{fig:user_eps_tau}, Figure~\ref{fig:user_eps_backdoor}), we explore the empirical \attackcost, and the certified \attackcost lower bound given different $\epsilon$. For \mnist{}, we set  the privacy guarantee $\epsilon $ to be $1.2716,0.8794, 0.6608, 0.5249,0.4344 $, which are obtained by training \userdpfedavg{} FL models for five rounds under noise level $\sigma= 1.3, 1.6, 1.9, 2.2, 2.5$, respectively, and the clean accuracy for the corresponding models are $99.50\%, 99.06\%, 96.52\%, 93.39\%,88.99\%$.
For \cifar{},  we set  the privacy guarantee $\epsilon $ to be $1.600, 1.2127, 1.0395. 0.8530, 0.7616,0.6543,0.5978 $, which are obtained by training \userdpfedavg{} FL models for four rounds under noise level $\sigma= 1.5,1.8,2.0,2.3,2.5,2.8,3.0$, respectively, and the clean accuracy for the corresponding models are $85.59\% ,84.52\%,  83.23\% , 81.90\%, 81.27\%, 79.23\% ,78.46\%$.
For \sent{},  we use the same set of  $\epsilon $ as in certified prediction.

\paragraph{Instance-level DPFL}
To certify the prediction for instance-level DPFL under different privacy guarantees, for \mnist{}, we set privacy cost $\epsilon$ to be $0.2029, 0.2251, 0.2484, 0.3593, 0.4589, 0.6373, 1.0587, 3.5691$, which are obtained by training \insdpfedavg{} FL models for 3 rounds with noise level $\sigma= 15, 10, 8, 5, 4, 3, 2, 1$, respectively (Figure~\ref{fig:ins_ceracc_k_eps_mnist}(a)). For \cifar{},  we set privacy cost  $\epsilon$ to be $0.3158, 0.3587, 0.4221, 0.5130, 0.6546, 0.9067, 1.4949, 4.6978$, which are obtained by training \insdpfedavg{} FL models for one round with noise level $\sigma= 8, 7, 6, 5, 4, 3, 2, 1$, respectively (\cref{fig:ins_ceracc_k_eps_cifar}(a)). 
 {
The clean accuracy (average over 1000 runs) of \insdpfedavg{} under non-DP training ($\epsilon=\infty$) and DP training (varying  $\epsilon$) on \mnist{} and \cifar{} are reported in \cref{tb:insdp_mnist_cleanacc} and \cref{tb:insdp_cifar_cleanacc} respectively. 
}

With the aim to study certified \attackcost under the different number of adversarial instances $k$, for \mnist{},  we set the noise level $\sigma$ to be 10. When $k=0$, after training \insdpfedavg{} for $T=4$ rounds, we obtain   FL models with privacy guarantee  $\epsilon = 0.2383 $  and clean accuracy (average over $O$ runs) $96.40\%$ (\cref{fig:ins_ceracc_k_eps_mnist}(b)(c)). For \cifar{},  we set the noise level  $\sigma$ to be $8.0$. After training \insdpfedavg{} for one round under $k=0$, we obtain   FL models with privacy guarantee $\epsilon =  0.3158$  and clean accuracy $61.78\%$ (\cref{fig:ins_ceracc_k_eps_cifar}(b)(c)).

In order to study the empirical \attackcost and certified \attackcost lower bound under different instance-level DP guarantees, we set the privacy guarantee $\epsilon$ to be $0.5016, 0.311, 0.2646 , 0.2318, 0.2202,0.2096,0.205 $ for \mnist{}, which are obtained by training \insdpfedavg{} FL models for six rounds under noise level $\sigma= 5,8,10,13,15,18,20$, respectively, and the clean accuracy for the corresponding models are $99.60\% , 98.81\%,97.34\%, 92.29\%,88.01\% , 80.94\%,  79.60\% $ (Figure~\ref{fig:ins_ceracc_k_eps_mnist} (d)(e)).
For \cifar{},  we set  the privacy guarantee $\epsilon $ to be $ 1.261,0.9146 , 0.7187, 0.5923, 0.5038 ,0.4385   $, which are obtained by training \insdpfedavg{} FL models for two rounds under noise level $\sigma= 3,4,5,6,7,8 $, respectively, and the clean accuracy for the corresponding models are $84.47\% , 80.99\%,76.01\%, 68.65\%,63.07\% , 60.65\%$ (Figure~\ref{fig:ins_ceracc_k_eps_cifar} (d)(e)).

With the intention of exploring the upper bound for $k$ given $\tau$ under different instance-level DP guarantee, for \mnist{}, we set noise level $\sigma$ to be $5,8,10,13,20 $, respectively, to obtain instance-DP FL models after ten rounds with  privacy guarantee $\epsilon= 0.6439,0.3937,0.3172, 0.2626 ,0.2179 $ and  clean accuracy $ 99.58\%, 98.83\%, 97.58\%,  95.23 \%,  85.72\%$ (Figure~\ref{fig:ins_tau}(a)). For \cifar{}, 
 we set noise level $\sigma$ to be $3,4,5,6,7,8$ and train \insdpfedavg{} for $T=3$ rounds to obtain FL models with privacy guarantee $\epsilon=1.5365, 1.1162,0.8777, 0.7238 , 0.6159 ,0.5361 $ and clean accuracy $ 84.34\%, 80.27\%, 74.62\%,66.94\% ,62.14\% ,59.75\%$ (Figure~\ref{fig:ins_tau}(b)).

\begin{rev}

\subsubsection{Detailed Setup for Different User-level DPFL Algorihtms}
\label{app:exp_details_user_diff_dpfl_algos}
For \mnist{} (\cifar{}, \sent{}), we set $\epsilon$ to be 0.6319 (0.5346, 0.4089), which is obtained by training all DPFL algorithms with the same noise level $\sigma= 2.3$ ($\sigma=  3.0$, $\sigma=  2.0$) for same number of rounds. For flat clipping and per-layer clipping, we set $S=0.7$ ($S=1$, $S=0.5$) on \mnist{} (\cifar{}, \sent{}). Except for local epoch $E= 1$, other FL parameter setups are the same as in \cref{tb:dataset_parameters}. 
We set $E= 1$ because we find that the FL model in our experiments can  be trained with median norm clipping approaches~\cite{geyer2017differentially} only when the number of the local epoch is small.
Recall that in the server aggregation step, the noise is sampled from $\mathcal{N}(0, \sigma^2S^2)$, so $S$ cannot be too large in order to keep the amount of noise reasonable and preserve a good model utility. As more local epoch leads to a larger norm of model updates, we set the local epoch as 1 to keep the median norm small.
    
\end{rev}

\subsection{Additional Experimental Results}

\begin{rev}

\subsubsection{Running Time Analysis for the Certifications.}\label{app:running_time_overhead}
Compared to non-DP FL, the mechanisms introduced by DPFL, i.e., clipping and noise addition, are low-cost and easy to implement. In our experiments, the averaged running time for each communication round on \sent{} dataset is 6.06s for FedAvg and 6.11s for  \userdpfedavg{} (averaged over 1000 times), based on a Linux machine with Intel 8 Core i7-7820X CPU and 4 NVidia 2080Ti GPUs.  
The major overhead of our certifications comes from re-training the DPFL algorithm $O$ times for Monte-Carlo approximation (see \cref{subsubsec:certification_monte_carlo_approx}). Notably,  re-training is a common requirement when providing certifications against poisoning attacks~\cite{weber2020rab,rosenfeld2020certifiedlableflip}. Also, multiple runs of training are parallelizable and can be speeded up with multiple GPUs. 
Given all trained models and the inference results from each model, running the certifications (e.g., averaging class confidence, and making predictions) has negligible costs, which is 0.04s on the \sent{} dataset. 

\subsubsection{Certifications with Moderate Overall Privacy Budget.}
\label{app:add-practical-consideration}
Certified robustness can be achieved under a moderate overall privacy budget and robustness confidence. As shown in the \cref{fig:practical_eps_cifar},  on \cifar{}, when $\epsilon=0.1451$ and $O=70$, the overall privacy cost is about $\epsilon O =10.15$. Under the confidence level of 80\%, the maximal number of adversaries that can be certified is about $k=4$. 

\begin{table*}[!htbp]
\caption{\revise{Comparison of empirical robust accuracy between our certification approach and  empirical FL defenses against state-of-the-art poisoning attacks on \cifar{}.  
``\userdpfedavg{}-cert'' denotes our certification approach based on \userdpfedavg{}. 
\userdpfedavg{}-cert provides similar or even higher empirical robust accuracy than empirical defenses. The certified accuracy of \userdpfedavg{}-cert serves as the lower bound for its empirical robust accuracy.}}
\label{tb:empircal_cifar_sota_poisoning}
\scalebox{0.7}{
\begin{tabular}{ccccccccccccccc}
  \toprule
 &  & \multicolumn{5}{c}{{$k$=2}} & \multicolumn{5}{c}{{$k$=3}} \\
  \cmidrule(lr){3-7}  \cmidrule(lr){8-12} 
 &  & \multicolumn{4}{c}{Empirical Robust Acc.} & \multicolumn{1}{l}{\multirow{2}{*}{\makecell{Certified\\Robust Acc.}}} & \multicolumn{4}{c}{Empirical Robust Acc.} & \multicolumn{1}{l}{\multirow{2}{*}{\makecell{Certified\\Robust Acc.}}} \\
  \cmidrule(lr){3-6} \cmidrule(lr){8-11} 
 & No Attack & \makecell{STAT-OPT~\cite{shejwalkar2021manipulating}\\(Min-Max)}& \makecell{STAT-OPT~\cite{shejwalkar2021manipulating}\\(Min-Sum)} & \makecell{BKD~\cite{bagdasaryan2020backdoor}\\($\gamma=100$)} & \makecell{LF~\cite{bhagoji2018analyzing}\\($\gamma=100$)} &  & \makecell{STAT-OPT~\cite{shejwalkar2021manipulating}\\(Min-Max)}& \makecell{STAT-OPT~\cite{shejwalkar2021manipulating}\\(Min-Sum)} & \makecell{BKD~\cite{bagdasaryan2020backdoor}\\($\gamma=100$)} & \makecell{LF~\cite{bhagoji2018analyzing}\\($\gamma=100$)} &  \\
  \midrule
FedAvg~\cite{mcmahan2016communication} & \textbf{88.08}\% & 87.29\% & 87.35\% & 65.73\% & 65.47\% & / & 86.36\% & 86.55\% & 58.39\% & 58.07\% & / \\
Median~\cite{yin2018byzantine} & 87.76\% & 87.09\% & 87.16\% & 87.73\% & 87.74\% & / & 86.22\% & 86.42\% & 87.74\% & 87.75\% & / \\
Trimmed-mean~\cite{yin2018byzantine} & \textbf{88.08}\% & 87.28\% & 87.35\% & 87.98\% & 87.98\% & / & 86.36\% & 86.55\% & 87.94\% & 87.94\% & / \\
Krum~\cite{blanchard2017machinekrum} & 85.97\% & 85.84\% & 85.96\% & 85.87\% & 85.87\% & / & 85.12\% & 85.4\% & 85.85\% & 85.85\% & / \\
Multi-Krum~\cite{blanchard2017machinekrum}\% & 88.02\% & 87.23\% & 87.29\% & 87.99\% & \textbf{87.99}\% & / & 86.31\% & 86.51\% & \textbf{87.98}\% & \textbf{87.98}\% & / \\
Bulyan~\cite{el2018hidden} & 88.02\% & 87.24\% & 87.3\% & 87.93\% & 87.94\% & / & 86.31\% & 86.52\% & 87.89\% & 87.89\% & / \\
RFA~\cite{pillutla2019robustrfa} & 87.97\% & 87.21\% & 87.28\% & 87.94\% & 87.94\% & / & 86.29\% & 86.49\% & 87.96\% & 87.95\% & / \\
\rowcolor{lightgray}\userdpfedavg{}-cert ($\epsilon=0.7693$) & 88.05\% & 87.65\% & \textbf{88}\% & \textbf{88.05}\% & 87.8\% & 17.65\% & \textbf{87.15}\% & 87.5\% & 87.8\% & 87.85\% & 1.4\% \\
\rowcolor{lightgray}\userdpfedavg{}-cert ($\epsilon=0.648$) & 87.35\% & \textbf{87.8}\% & 87.6\% & 87.9\% & 87.5\% & 28.15\% & 86.45\% & \textbf{87.6}\% & 87.2\% & 87.6\% & 4.3\% \\
\rowcolor{lightgray}\userdpfedavg{}-cert ($\epsilon=0.5346$) & 86.45\% & 86.5\% & 87\% & 87.15\% & 86.8\% & 37.75\% & 87.05\% & 86.65\% & 87.15\% & 87.15\% & 11.45\% \\
\rowcolor{lightgray}\userdpfedavg{}-cert ($\epsilon=0.3205$) & 85.2\% & 85.15\% & 86.05\% & 85.1\% & 85.7\% & \textbf{48.5}\% & 83.9\% & 85.85\% & 85.8\% & 84.95\% & \textbf{21.85}\%\\
\midrule
\midrule
\end{tabular}
}
\bigskip
\scalebox{0.7}{
\begin{tabular}{ccccccccccccccc}
 &  & \multicolumn{5}{c}{{$k$=5}} & \multicolumn{5}{c}{{$k$=10}} \\
 \cmidrule(lr){3-7}  \cmidrule(lr){8-12} 
 &   & \multicolumn{4}{c}{Empirical Robust Acc.} & \multicolumn{1}{l}{\multirow{2}{*}{\makecell{Certified\\Robust Acc.}}} & \multicolumn{4}{c}{Empirical Robust Acc.} & \multicolumn{1}{l}{\multirow{2}{*}{\makecell{Certified\\Robust Acc.}}} \\ 
 \cmidrule(lr){3-6} \cmidrule(lr){8-11} 
 & No Attack  & \makecell{STAT-OPT~\cite{shejwalkar2021manipulating}\\(Min-Max)}& \makecell{STAT-OPT~\cite{shejwalkar2021manipulating}\\(Min-Sum)} & \makecell{BKD~\cite{bagdasaryan2020backdoor}\\($\gamma=100$)} & \makecell{LF~\cite{bhagoji2018analyzing}\\($\gamma=100$)} &  & \makecell{STAT-OPT~\cite{shejwalkar2021manipulating}\\(Min-Max)}& \makecell{STAT-OPT~\cite{shejwalkar2021manipulating}\\(Min-Sum)} & \makecell{BKD~\cite{bagdasaryan2020backdoor}\\($\gamma=100$)} & \makecell{LF~\cite{bhagoji2018analyzing}\\($\gamma=100$)} &  \\
 \midrule
FedAvg~\cite{mcmahan2016communication} & \textbf{88.08}\% & 84.58\% & 85.75\% & 54.69\% & 54.35\% & / & 80.89\% & 84.52\% & 51.17\% & 51.21\% & / \\
Median~\cite{yin2018byzantine} & 87.76\% & 84.5\% & 85.67\% & 87.69\% & 87.69\% & / & 80.86\% & 84.5\% & 87.56\% & 87.56\% & / \\
Trimmed-mean~\cite{yin2018byzantine} & \textbf{88.08}\% & 84.58\% & 85.75\% & 87.8\% & 87.8\% & / & 80.89\% & 84.52\% & 87.44\% & 87.43\% & / \\
Krum~\cite{blanchard2017machinekrum} & 85.97\% & 83.78\% & 85\% & 85.85\% & 85.85\% & / & 80.62\% & 84.29\% & 85.89\% & 85.88\% & / \\
Multi-Krum~\cite{blanchard2017machinekrum} & 88.02\% & 84.54\% & 85.72\% & \textbf{87.94}\% & \textbf{87.95}\% & / & 80.88\% & 84.52\% & \textbf{87.92}\% & \textbf{87.92}\% & / \\
Bulyan~\cite{el2018hidden} & 88.02\% & 84.54\% & 85.72\% & 87.79\% & 87.79\% & / & 80.88\% & 84.52\% & 87.66\% & 87.66\% & / \\
RFA~\cite{pillutla2019robustrfa} & 87.97\% & 84.54\% & 85.71\% & 87.93\% & 87.93\% & / & 80.87\% & 84.51\% & 87.82\% & 87.82\% & / \\
\rowcolor{lightgray}\userdpfedavg{}-cert ($\epsilon=0.7693$) & 88.05\% & \textbf{86.2}\% & \textbf{86.35}\% & 87.4\% & 87.3\% & 0\% & \textbf{85.25}\% & \textbf{86.5}\% & 86.95\% & 86.75\% & 0\% \\
\rowcolor{lightgray}\userdpfedavg{}-cert ($\epsilon=0.648$) & 87.35\% & \textbf{86.2}\% & 86.3\% & 87.15\% & 87.4\% & 0\% & 85.1\% & 85.75\% & 86.75\% & 85.85\% & 0\% \\
\rowcolor{lightgray}\userdpfedavg{}-cert ($\epsilon=0.5346$) & 86.45\% & 85.6\% & 86.1\% & 87.05\% & 87.1\% & 0\% & 84.65\% & 85.2\% & 86.65\% & 85.1\% & 0\% \\
\rowcolor{lightgray}\userdpfedavg{}-cert ($\epsilon=0.3205$) & 85.2\% & 83.4\% & 85.25\% & 84.5\% & 85.35\% & 0.35\% & 82.35\% & 84.95\% & 84.2\% & 85.6\% & 0\% \\
\bottomrule
\end{tabular}
}
\end{table*}

\setlength{\columnsep}{0pt}%
\begin{figure}{}
  \begin{center}
    \resizebox{0.6\linewidth}{!}{\includegraphics{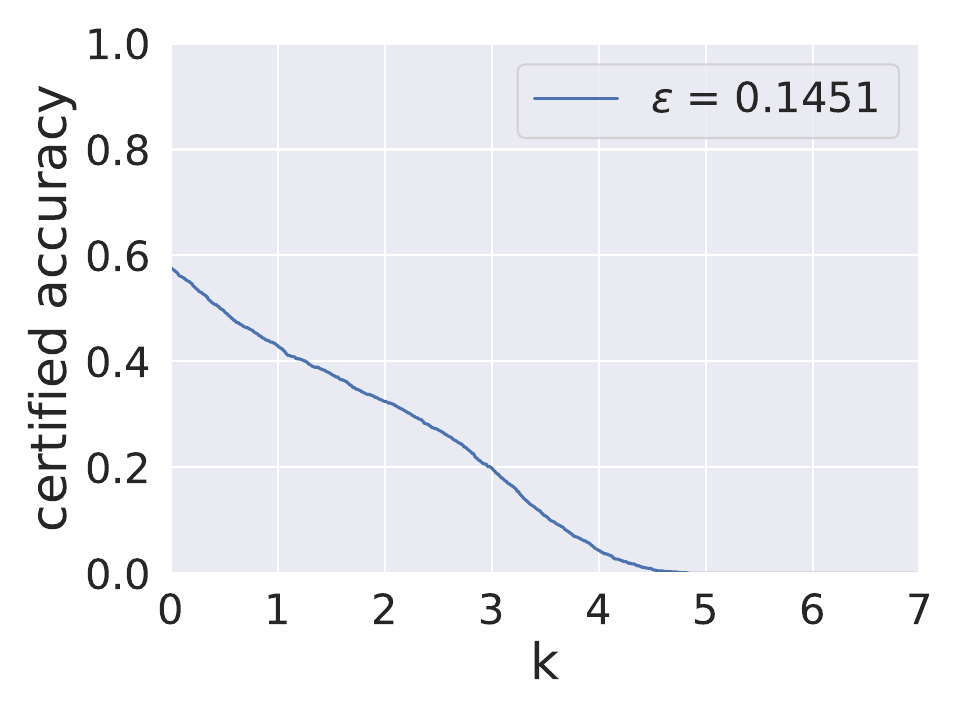}}
  \end{center}
  \caption{\revise{Certified accuracy of \userdpfedavg{} on \cifar{}  under $80\%$ confidence with $\epsilon O=10.15$.}}
  \label{fig:practical_eps_cifar}
\end{figure}

\subsubsection{Empirical Robust Accuracy against State-of-the-art Poisoning Attacks.}
\label{app:sota_poison_empirical}
In this section, we evaluate our certification method against state-of-the-art poisoning attacks and report the empirical accuracy and certified accuracy. Specifically, we consider the following  attacks. 
Static Optimization ({STAT-OPT}) attack~\cite{shejwalkar2021manipulating} solves adversarial optimization problems to find optimal poisoned local model updates. We consider the ``agnostic'' setting of STAT-OPT attack,
where the gradients of benign devices and the server's aggregation algorithm are unknown to the attacker, based on the attacker's knowledge of our settings. We evaluate two variants of STAT-OPT attack:
\textit{STAT-OPT (Min-Max)} and \textit{STAT-OPT (Min-Sum)}; for details please refer to \cite{shejwalkar2021manipulating}. 
We also consider backdoor attack (\textit{BKD}) and label flipping attack (\textit{LF})   under {model replacement} strategy with a scale factor $\gamma$ to boost malicious local update~\cite{bagdasaryan2020backdoor,bhagoji2018analyzing}.
For our \userdpfedavg{} certification approach, denoted as \textit{\userdpfedavg{}-cert}, the prediction for each test sample is calculated based on  Equation~\ref{eq:prediction_certification}, and we train \userdpfedavg{} algorithms $O=100$ times for Marto-Carlo approximation of the expected class confidence in Equation~\ref{eq:prediction_certification}. 

From \cref{tb:empircal_cifar_sota_poisoning}, we see that the empirical robust accuracy of our certification method on \cifar{} is high and remains stable in the presence of $k=2,3,5,10$ attackers under various attacks (i.e., less than 1\%$\sim$2\% accuracy drop compared with the no-attacker setting). It shows that our DPFL certification is empirically robust against poisoning attacks.

\cref{tb:empircal_cifar_sota_poisoning} also shows that the certified accuracy of \userdpfedavg{}-cert serves as the lower bound for its empirical robust accuracy. 
We notice that under relatively strong attack settings such as $k=5, 10$,  our DPFL certification cannot provide non-trivial certified accuracy. Nevertheless, \textit{our DPFL certification approach still  exhibits strong empirical effective robustness},  even without theoretical guarantees.   
The gap between certified robust accuracy and empirical robust accuracy indicates potential advancements either in crafting stronger poisoning attacks to further reduce empirical robust accuracy,  or in developing tighter robustness certification techniques to improve theoretical lower bound.

\subsubsection{Comparison to Empirical FL Defenses.}
\label{app:sota_fl_defenses_empirical}
Here, we compare the \textit{empirical} robust accuracy of our certification method with \textit{six} FL robust aggregations, including \textit{Krum}~\cite{blanchard2017machinekrum}, \textit{Multi-krum}~\cite{blanchard2017machinekrum}, \textit{Trimmed-mean}~\cite{yin2018byzantine}, \textit{Median}~\cite{yin2018byzantine}, \textit{Bulyan}~\cite{el2018hidden}, \textit{RFA}~\cite{pillutla2019robustrfa}.

show that our certification method {\userdpfedavg{}-cert} achieves similar and even higher accuracy than empirical defenses under state-of-the-art poisoning attacks, while providing privacy and robustness guarantees.  
Specifically, under the optimization-based attacks  STAT-OPT (Min-Max) and  STAT-OPT (Min-Sum), {\userdpfedavg{}-cert} consistently achieves higher empirical robust accuracy than other FL robust aggregation methods when $k=2,3,5,10$; 
under BKD and LF attacks, {\userdpfedavg{}-cert} exhibits similar robustness as FL robust aggregation methods.
 Note that Multi-Krum,  Trimmed-mean, and Bulyan require specifying the number of attackers in their aggregation rules to detect the outliers, while our approach does not require such knowledge about attackers during DPFL training.

We also notice that when $\epsilon$ is too small (e.g., $\epsilon=0.3205$), {\userdpfedavg{}-cert}  has lower empirical robust accuracy than robust aggregation defenses. This is mainly because of the noise level being large during \userdpfedavg{} training to achieve a strong privacy guarantee, which hurts the utility of the DPFL model, as we can see in the no-attack setting. Therefore, we recommend adopting a reasonable $\epsilon$ with good utility to achieve robustness, as elaborated in \cref{sec:eval_user_level}.

\end{rev}

\subsubsection{Additional Robustness Evaluation of User-level DPFL}
\label{app:add-user-dpfl_exp_results}

{
\begin{figure}[t]
\newlength{\userbkdepsheightc}
\settoheight{\userbkdepsheightc}{\includegraphics[width=0.45\linewidth]{figures/ccsfinal_plots/cer_acc_conf/cer_acc_mnist.pdf}}

\newcommand{\rowname}[1]%
{\rotatebox{90}{\makebox[\userbkdepsheightc][c]{\footnotesize #1}}}

\centering

{
\renewcommand{\tabcolsep}{10pt}
\begin{subtable}[]{\linewidth}
\centering
\begin{tabular}{@{}p{5mm}@{}c@{}c@{}c@{}c@{}c@{}c@{}c@{}}
        & \makecell{\shortstack{\footnotesize (a) \mnist{} (k=4)}}
             & \makecell{\shortstack{\footnotesize (b) \cifar{} (k=4)}}
        \vspace{-1.7pt}\\
\rowname{\makecell{$J(D')$}}&
\includegraphics[height=\userbkdepsheightc]{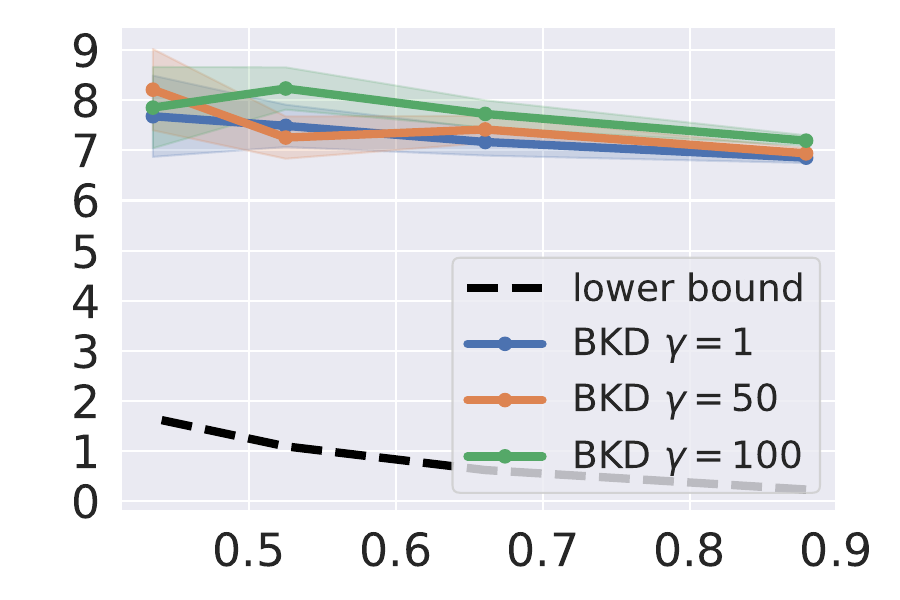}&
\includegraphics[height=\userbkdepsheightc]{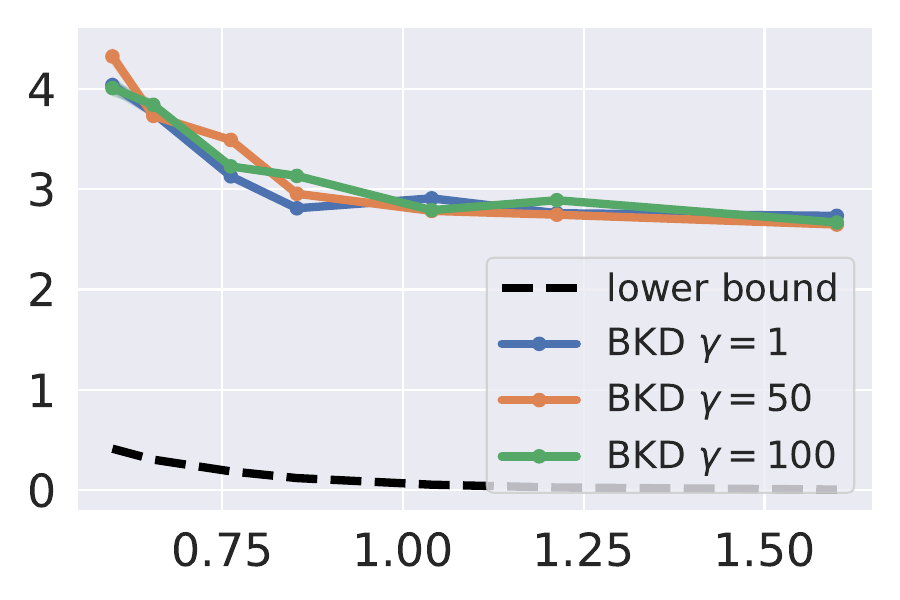}&\\
\\[-5ex]
& \makecell{{\footnotesize $\epsilon$}}& \makecell{{\footnotesize $\epsilon$}} 
\end{tabular}
\end{subtable}
}
\vspace{-3mm}
\caption{\small Certified \attackcost of \userdpfedavg{} with different $\epsilon$ under backdoor attack.}
\label{fig:user_eps_backdoor} 
\end{figure}

}

Here we further explore the impacts of $\epsilon$ on the certified \attackcost.
Similar to the results of label flipping attacks in Figure~\ref{fig:user_eps_tau} (a-c), the results of backdoor attacks in Figure~\ref{fig:user_eps_backdoor} show that  as the privacy guarantee becomes stronger, i.e. smaller $\epsilon$, the model is more robust, achieving higher $J(D')$ and $\underline{J(D')}$.

\subsubsection{Robustness Evaluation on 10-class Classification}
\label{sec:ten_class_results}
Here we report the robustness evaluation of user-level DPFL under backdoor attacks on 10-class classification problems.
Figure~\ref{fig:cer_acc_10class} presents the certified accuracy under different $\epsilon$. We observe the interplay between $\epsilon$ and certified accuracy on \mnist{}. On \cifar{}, larger $k$ can be certified with smaller $\epsilon$. The certified $\cerk$ is relatively small because we set large $\epsilon$ to preserve a reasonable accuracy for 10-class classification. 
Our results suggest that advanced DP mechanisms would be preferred to provide tighter privacy guarantees (i.e., smaller $\epsilon$) while achieving a similar level of accuracy.
 In terms of certified \attackcost, as shown in \cref{fig:tau_10class}  and \cref{fig:user_k_eps_10class}, the trends are similar to the 2-class results in   \cref{fig:user_eps_tau} and \cref{fig:user_k_gamma_alpha},

{
\begin{figure}[t]
\newlength{\tenceraccheight}
\settoheight{\tenceraccheight}{\includegraphics[width=0.5\linewidth]{figures/ccsfinal_plots/cer_acc_conf/cer_acc_mnist.pdf}}

\newcommand{\rowname}[1]%
{\rotatebox{90}{\makebox[\tenceraccheight][c]{\footnotesize #1}}}

\centering

{
\renewcommand{\tabcolsep}{10pt}
\begin{subtable}[]{\linewidth}
\centering
\begin{tabular}{@{}p{5mm}@{}c@{}c@{}c@{}c@{}c@{}c@{}c@{}}
        & \makecell{\shortstack{\footnotesize (a) \mnist{} (k=4)}}
             & \makecell{\shortstack{\footnotesize (b) \cifar{} (k=4)}}
        \vspace{-1.7pt}\\
\rowname{\makecell{Certified accuracy}}&
\includegraphics[height=\tenceraccheight]{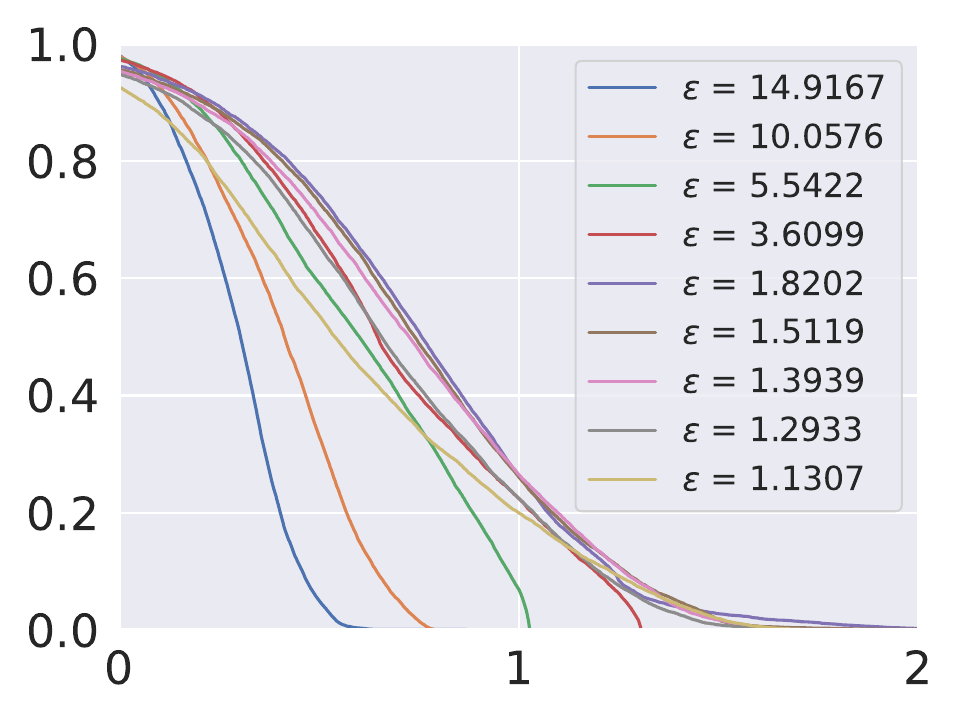}&
\includegraphics[height=\tenceraccheight]{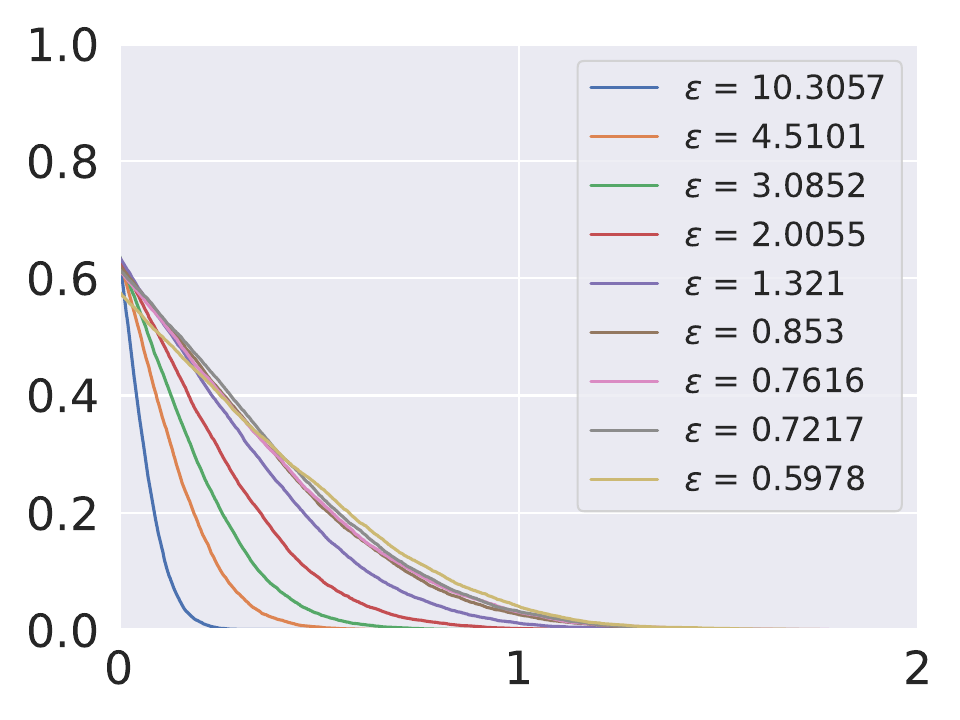}&\\
\\[-5ex]
& \makecell{{\footnotesize $k$}}& \makecell{{\footnotesize $k$}} 
\end{tabular}
\end{subtable}
}
\vspace{-3mm}
\caption{\small Certified accuracy of FL \userdpfedavg{} on 10-class classification.}
\label{fig:cer_acc_10class}
\end{figure}

}

{
\begin{figure}[t]
\newlength{\tentauheight}
\settoheight{\tentauheight}{\includegraphics[width=0.42\linewidth]{figures/ccsfinal_plots/cer_acc_conf/cer_acc_mnist.pdf}}

\newcommand{\rowname}[1]%
{\rotatebox{90}{\makebox[\tentauheight][c]{\footnotesize #1}}}

\centering

{
\renewcommand{\tabcolsep}{10pt}
\begin{subtable}[]{\linewidth}
\centering
\begin{tabular}{@{}p{5mm}@{}c@{}c@{}c@{}c@{}c@{}c@{}c@{}}
        & \makecell{\shortstack{\footnotesize (a) \mnist{}}}
             & \makecell{\shortstack{\footnotesize (b) \cifar{}}}
        \vspace{-1.7pt}\\
\rowname{\makecell{$k$}}&
\includegraphics[height=\tentauheight]{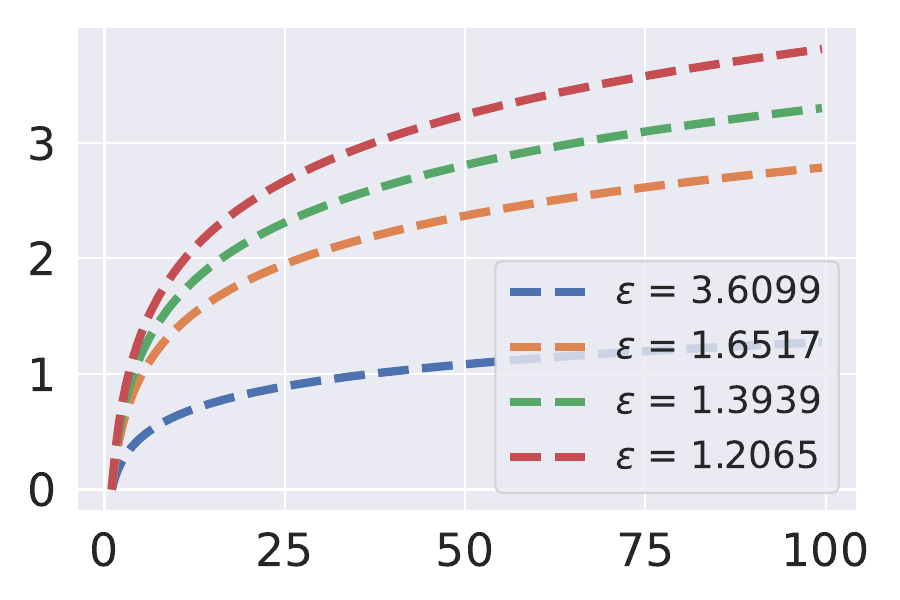}&
\includegraphics[height=\tentauheight]{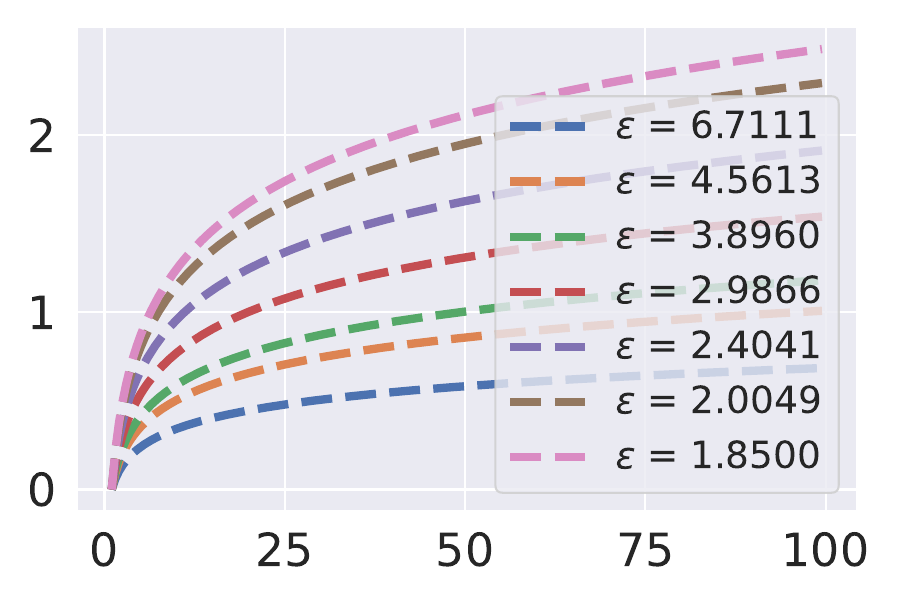}&\\
\\[-5ex]
& \makecell{{\footnotesize $\tau$}}& \makecell{{\footnotesize $\tau$}} 
\end{tabular}
\end{subtable}
}
\vspace{-3mm}
\caption{\small Lower bound of $k$ on 10-class classification under user-level $\epsilon$ given attack effectiveness $\tau$.}
\label{fig:tau_10class} 
\end{figure}

}

{
\begin{figure*}
\newlength{\tencostbkdheightc}
\settoheight{\tencostbkdheightc}{\includegraphics[width=.23\linewidth]{figures/ccsfinal_plots/cer_acc_conf/cer_acc_mnist.pdf}}

\newcommand{\rowname}[1]%
{\rotatebox{90}{\makebox[\tencostbkdheightc][c]{\footnotesize #1}}}

\centering

{
\renewcommand{\tabcolsep}{10pt}
\begin{subtable}[]{\linewidth}
\centering
\begin{tabular}{@{}p{5mm}@{}c@{}c@{}c@{}c@{}c@{}c@{}c@{}}
        & \makecell{{\footnotesize (a) \mnist{} BKD ($\epsilon=0.67$)}}
        & \makecell{{\footnotesize (b) \cifar{} BKD ($\epsilon=0.12$)}}
        & \makecell{{\footnotesize (c) \mnist{} BKD $k=3$}}
        & \makecell{{\footnotesize (d) \cifar{} BKD $k=1$}}
        \vspace{-1.7pt}\\
\rowname{\makecell{$J(D')$}}&
\includegraphics[height=\tencostbkdheightc]{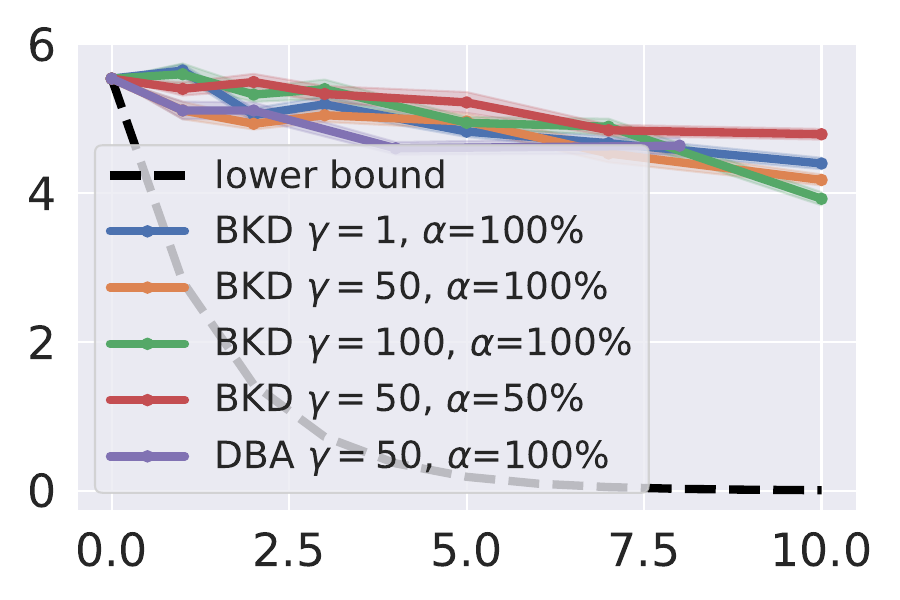}&
\includegraphics[height=\tencostbkdheightc]{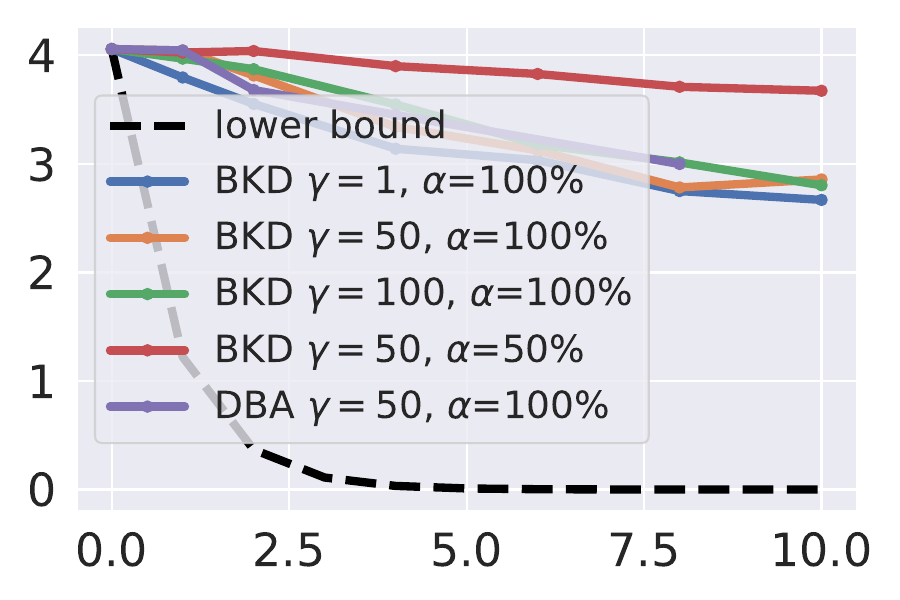}&
\includegraphics[height=\tencostbkdheightc]{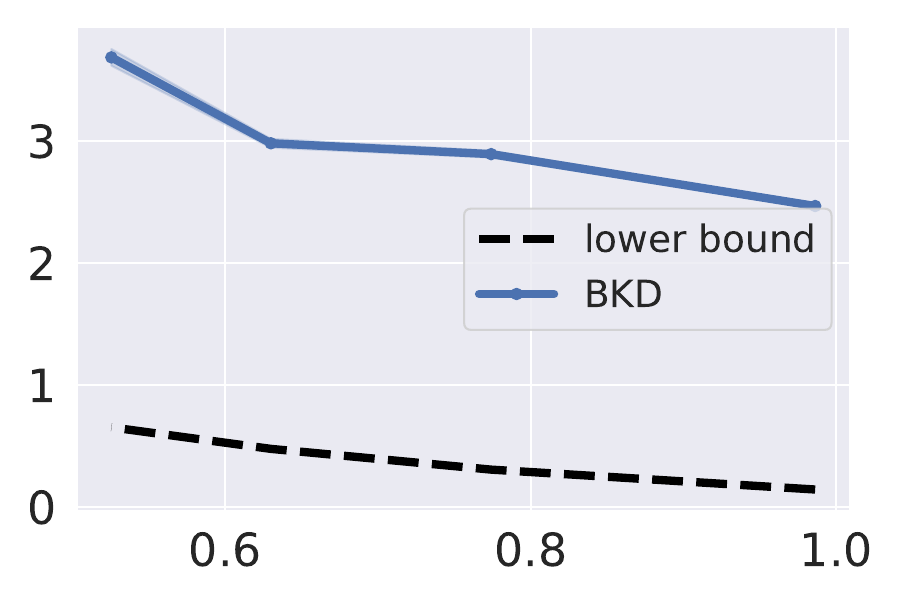}&
\includegraphics[height=\tencostbkdheightc]{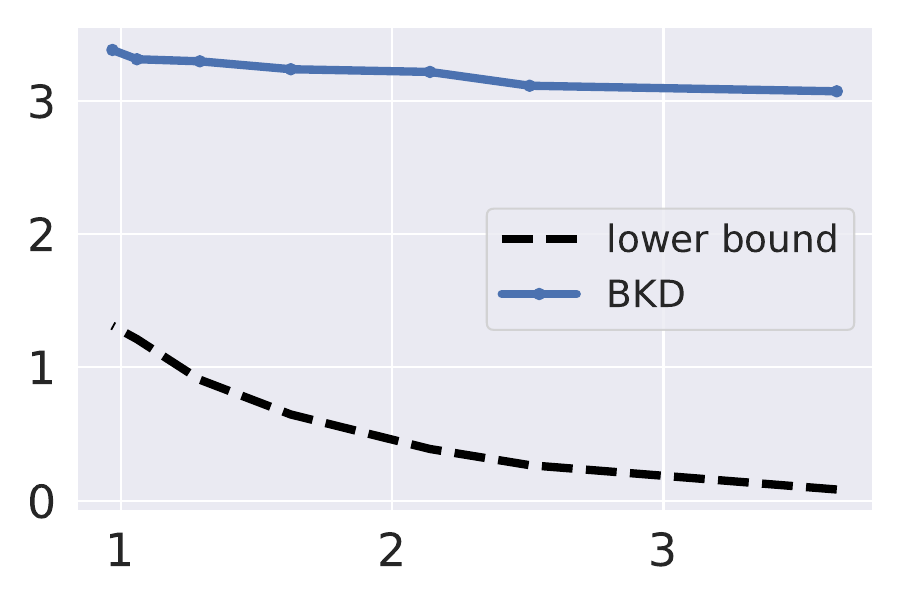}\\[-1.2ex]
& \makecell{{\footnotesize $k$}}& \makecell{{\footnotesize $k$}}& \makecell{\footnotesize $\epsilon$}& \makecell{\footnotesize $\epsilon$} \\
\end{tabular}
\end{subtable}
}
\vspace{-2mm}
\caption{\small Certified \attackcost of \userdpfedavg{} on 10-class classification given the different number of malicious instances $k$ (a)(b) and different $\epsilon$ (c)(d).}
\label{fig:user_k_eps_10class} 
\end{figure*}

}

\section{Proofs of Certified Robustness Analysis}
\label{sec:app_proofs}

We restate our \cref{lemma_group_dp} here.

\lemmagroupdp*
\begin{proof}
We denote $d$ as $d_0$, $d'$ as $d_k$.  $d_i$ differ $i$ individuals with $d_0$. 
For any $i \in [1,k]$, $d_i$ and $d_{i-1}$ differ by one individual, thus
\begin{equation}
\label{eq:iter_group_dp}
	\Pr[M(d_{i-1})] \leq e^{\epsilon} \Pr[M(d_{i})]+\delta.
\end{equation}
By iteratively applying Eq.~(\ref{eq:iter_group_dp}) $k$ times, we have
\begin{align*}
\Pr[M(d_0)] &\leq e^{k\epsilon} \Pr[M(d_k)] + (1+e^{\epsilon}+e^{2\epsilon}+\ldots+ e^{(k-1)\epsilon}) \delta  \\
&= e^{k\epsilon} \Pr[M(d_k)] + \frac{1-e^{k \epsilon }}{1-e^\epsilon}\delta 
\end{align*}
\end{proof}

Before we prove  Theorem~\ref{thm_pred_consist_one_client}, we introduce the following lemma:
\begin{lemma}
\label{lemma:class_conf}
Suppose a randomized mechanism $\Mcal$ satisfies user-level $(\epsilon, \delta)$-DP. For two user sets $B$ and $B^\prime$ that differ by one user, $D$ and $D'$ are the corresponding training datasets. 
For a test input $x$, for any $c \in [C]$ , $f_c(\mathcal{M}(D),x)\in [0,1]$ is the class confidence, then the expected class confidence $F_{c}(\mathcal{M}(D), x):= \mathbb{E}[f_{c}(\mathcal{M}(D),x)]$ meets the following property: 
\begin{align} \label{eq:expected_class_conf_bound}
	F_{c}(\mathcal{M}(D), x)  \leq  e^{\epsilon} F_{c}(\mathcal{M}(D'), x)+  \delta
\end{align}
\end{lemma}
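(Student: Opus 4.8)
The plan is to reduce the statement about expected class confidences to the basic $(\epsilon,\delta)$-DP guarantee by writing the expectation of the bounded random variable $f_c(\mathcal{M}(D),x)\in[0,1]$ as an integral of tail probabilities. Since $\mathcal{M}$ satisfies user-level $(\epsilon,\delta)$-DP and $D,D'$ are user-level adjacent, for \emph{every} measurable output set $E\subseteq\Theta$ we have $\Pr[\mathcal{M}(D)\in E]\le e^\epsilon\Pr[\mathcal{M}(D')\in E]+\delta$. The key observation is that for any fixed threshold $t\in[0,1)$, the set $E_t := \{\theta\in\Theta : f_c(\theta,x) > t\}$ is a measurable subset of $\Theta$ (assuming $f_c(\cdot,x)$ is measurable, which holds for the network models considered), so the DP inequality applies to $E_t$.

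First I would recall the layer-cake / tail-integral identity: for a random variable $Y\in[0,1]$, $\mathbb{E}[Y] = \int_0^1 \Pr[Y>t]\,dt$. Applying this with $Y = f_c(\mathcal{M}(D),x)$ gives
\begin{equation}
F_c(\mathcal{M}(D),x) = \mathbb{E}[f_c(\mathcal{M}(D),x)] = \int_0^1 \Pr[f_c(\mathcal{M}(D),x) > t]\,dt = \int_0^1 \Pr[\mathcal{M}(D)\in E_t]\,dt.
\end{equation}
Then I would apply the $(\epsilon,\delta)$-DP guarantee pointwise inside the integral: $\Pr[\mathcal{M}(D)\in E_t] \le e^\epsilon \Pr[\mathcal{M}(D')\in E_t] + \delta$ for each $t$, and integrate both sides over $t\in[0,1]$. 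This yields
\begin{equation}
F_c(\mathcal{M}(D),x) \le \int_0^1 \left( e^\epsilon \Pr[\mathcal{M}(D')\in E_t] + \delta \right) dt = e^\epsilon \int_0^1 \Pr[\mathcal{M}(D')\in E_t]\,dt + \delta = e^\epsilon F_c(\mathcal{M}(D'),x) + \delta,
\end{equation}
using the same layer-cake identity in reverse for the $D'$ term and $\int_0^1 \delta\,dt = \delta$. This is exactly Equation~(\ref{eq:expected_class_conf_bound}).

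The main obstacle — really the only subtlety — is the measurability/integrability bookkeeping: one must ensure $t\mapsto \Pr[\mathcal{M}(D)\in E_t]$ is measurable (it is monotone non-increasing in $t$, hence measurable) so that Fubini/Tonelli justifies swapping expectation and integral, and that $f_c(\cdot,x):\Theta\to[0,1]$ is a measurable function of the model parameters so that each $E_t$ lies in the $\sigma$-algebra on which the DP guarantee is stated. Since $f_c$ ranges in $[0,1]$ everything is finite and the integrals converge trivially. Once these routine points are dispatched, the argument is a two-line computation. I would also note that this lemma is the expectation-level analogue of the standard fact that DP is preserved under post-processing, here instantiated with the post-processing map $\theta\mapsto f_c(\theta,x)$ followed by taking expectations, which is why it serves as the crucial bridge to Theorem~\ref{thm_pred_consist_one_client}.
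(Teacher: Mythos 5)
Your proposal is correct and follows essentially the same route as the paper's proof: the paper likewise writes $F_c(\mathcal{M}(D),x)$ as the tail integral $\int_0^1 \Pr[\mathcal{M}(D)\in\Theta(a)]\,da$ with $\Theta(a)=\{\theta: f_c(\theta,x)>a\}$, applies the $(\epsilon,\delta)$-DP inequality pointwise under the integral, and integrates back. Your additional remarks on measurability are sound but not part of the paper's argument; no substantive difference exists.
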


\begin{proof}
Define $\Theta (a) :=\{\theta: f_{c}(\theta,x)>a\}$. Then
\begin{align*}
	F_{c}(\mathcal{M}(D), x)  &= \mathbb{E}[f_{c}(\mathcal{M}(D),x)] \\
	&=\int_{0}^{1}  \mathbb{P} \left[f_{c}(\mathcal{M}(D),x) > a \right] da \\
	&=\int_{0}^{1} \mathbb{P} \left[\mathcal{M}(D) \in \Theta(a) \right] da \\
	&\leq \int_{0}^{1} \left(e^{\epsilon}\mathbb{P} \left[ \mathcal{M}(D') \in \Theta(a) \right]+ \delta \right) da\\
	&=\int_{0}^{1} e^{\epsilon} \mathbb{P}\left[f_{c}(\mathcal{M}(D'),x) > a \right] da  + \int_{0}^{1} \delta da\\
	&= e^{\epsilon} F_{c}(\mathcal{M}(D'), x)  + \delta 
\end{align*}
\end{proof}

We recall Theorem~\ref{thm_pred_consist_one_client}.
\thmonecertpred*

\begin{proof}
According to Lemma~\ref{lemma:class_conf}, 
	\begin{equation} \label{eq:class_conf_A}
F_{\Aclass}(\mathcal{M}(D), x) \leq e^{\epsilon} F_{\Aclass}(\mathcal{M}(D'), x) + \delta
\end{equation}
\begin{equation} \label{eq:class_conf_B}
	F_{\Bclass}(\mathcal{M}(D'), x) \leq e^{\epsilon} 	F_{\Bclass}(\mathcal{M}(D), x) + \delta.
\end{equation}
Then 
\begin{align*}
	 F_{\Aclass}(\mathcal{M}(D'), x)   &\geq  \frac{ F_{\Aclass}(\mathcal{M}(D), x) -\delta }{e^{\epsilon}}  \tag*{(Because of Eq.~\ref{eq:class_conf_A})}\\
	&\geq  \frac{ e^{2\epsilon} F_{\Bclass}(\mathcal{M}(D), x)  +  (1+ e^{\epsilon})\delta -\delta }{e^{\epsilon}} \tag*{(Because of the given condition Eq.~\ref{pred_cons_condition})} \\ 
	& =   e^{\epsilon} F_{\Bclass}(\mathcal{M}(D), x) +   \delta  \\
	&\geq  e^{\epsilon} \left( \frac{ F_{\Bclass}(\mathcal{M}(D'), x)  - \delta}{e^{\epsilon}} \right) +   \delta  \tag*{(Because of Eq. \ref{eq:class_conf_B})}\\
	& =   F_{\Bclass}(\mathcal{M}(D'), x), 
\end{align*}
which indicates that the prediction of $\mathcal{M}(D')$ at $x$ is $\Aclass$ by definition.
\end{proof}

Before we prove Theorem~\ref{thm_pred_consist_k_client}, we introduce the following lemma:

\begin{lemma}
\label{lemma:class_conf_groupdp}
Suppose a randomized mechanism $\Mcal$ satisfies user-level $(\epsilon, \delta)$-DP. For two user sets $B$ and $B^\prime$ that differ $k$ users, $D$ and $D'$ are the corresponding training datasets.  
For a test input $x$, for any $c \in [C]$ , $f_c(\mathcal{M}(D),x)\in [0,1]$ is the class confidence, then the expected class confidence $F_{c}(\mathcal{M}(D), x):= \mathbb{E}[f_{c}(\mathcal{M}(D),x)]$ meets the following property: 
\begin{align} \label{eq:expected_class_conf_bound_groupdp}
	F_{c}(\mathcal{M}(D), x)  \leq  e^{k\epsilon} F_{c}(\mathcal{M}(D'), x)+ \frac{1-e^{k \epsilon }}{1-e^\epsilon}\delta.
\end{align}
\begin{rev}

and
\begin{align} 
	F_{c}(\mathcal{M}(D'), x)  \leq  e^{k\epsilon} F_{c}(\mathcal{M}(D), x)+ \frac{1-e^{k \epsilon }}{1-e^\epsilon}\delta. \nonumber
\end{align}
\end{rev}
\end{lemma}

\begin{proof}
Define $\Theta (a) :=\{\theta: f_{c}(\theta,x)>a\}$. Then
{\small
\begin{align*}
	F_{c}(\mathcal{M}(D), x)    
	&=\int_{0}^{1}  \mathbb{P} \left[f_{c}(\mathcal{M}(D),x) > a \right] da \\
	&=\int_{0}^{1} \mathbb{P} \left[\mathcal{M}(D) \in \Theta(a) \right] da \\
	&\leq \int_{0}^{1} \left(e^{k\epsilon}\mathbb{P} \left[ \mathcal{M}(D') \in \Theta(a) \right]+  \frac{1-e^{k \epsilon }}{1-e^\epsilon}\delta  \right) da \tag*{(Because of Group DP property in ~\cref{lemma_group_dp})}\\
	&=\int_{0}^{1} e^{k\epsilon} \mathbb{P}\left[f_{c}(\mathcal{M}(D'),x) > a \right] da  + \int_{0}^{1} \frac{1-e^{k \epsilon }}{1-e^\epsilon}\delta da\\
	&= e^{k\epsilon} F_{c}(\mathcal{M}(D'), x)  + \frac{1-e^{k \epsilon }}{1-e^\epsilon} \delta 
\end{align*}

\begin{rev}

Similarly, due to the symmetric property of adjacent datasets in the DP definition (\cref{def:dp}) and Group DP definition (\cref{lemma_group_dp}), $D$ and $D'$ are interchangeable, and therefore we have 
\begin{align*}
	F_{c}(\mathcal{M}(D'), x)    
	&=\int_{0}^{1}  \mathbb{P} \left[f_{c}(\mathcal{M}(D'),x) > a \right] da \\
	&=\int_{0}^{1} \mathbb{P} \left[\mathcal{M}(D') \in \Theta(a) \right] da \\
	&\leq \int_{0}^{1} \left(e^{k\epsilon}\mathbb{P} \left[ \mathcal{M}(D) \in \Theta(a) \right]+  \frac{1-e^{k \epsilon }}{1-e^\epsilon}\delta  \right) da \tag*{(Because of Group DP property in ~\cref{lemma_group_dp})}\\
	&=\int_{0}^{1} e^{k\epsilon} \mathbb{P}\left[f_{c}(\mathcal{M}(D),x) > a \right] da  + \int_{0}^{1} \frac{1-e^{k \epsilon }}{1-e^\epsilon}\delta da\\
	&= e^{k\epsilon} F_{c}(\mathcal{M}(D), x)  + \frac{1-e^{k \epsilon }}{1-e^\epsilon} \delta 
\end{align*}

\end{rev}

}

\end{proof}

We recall Theorem~\ref{thm_pred_consist_k_client}.
\thmkcertpred*

\begin{proof}
According to Lemma~\ref{lemma:class_conf_groupdp}, we have
	\begin{equation} \label{A_group}
F_{\Aclass}(\mathcal{M}(D), x) \leq  e^{k\epsilon} F_{\Aclass}(\mathcal{M}(D'), x) +  \frac{1-e^{k \epsilon }}{1-e^\epsilon}\delta 
\end{equation} 
\begin{equation} \label{B_group}
F_{\Bclass}(\mathcal{M}(D'), x) \leq  e^{k\epsilon} F_{\Bclass}(\mathcal{M}(D), x)  + \frac{1-e^{k \epsilon }}{1-e^\epsilon}\delta.
\end{equation}
We can re-write the given condition $k < \cerk $ according to Eq.~(\ref{eq:cerk}) as 
\begin{align} \label{eq:group_pred_cons_condition}
	e^{2k\epsilon} F_{\Bclass}(\mathcal{M}(D), x)  + (1+e^{k\epsilon}) \frac{1-e^{k \epsilon }}{1-e^\epsilon}\delta   <  F_{\Aclass}(\mathcal{M}(D), x) .  
\end{align}
Then 
{\small
\begin{align*}
	 F_{\Aclass}(\mathcal{M}(D'), x)  &\geq  \frac{  F_{\Aclass}(\mathcal{M}(D), x)   - \frac{1-e^{k \epsilon }}{1-e^\epsilon}\delta   }{e^{k\epsilon}}  \tag*{(Because of Eq. \ref{A_group})}\\
	&>  \frac{ e^{2k\epsilon}  F_{\Bclass}(\mathcal{M}(D), x)  + (1+e^{k\epsilon}) \frac{1-e^{k \epsilon }}{1-e^\epsilon}\delta   -  \frac{1-e^{k \epsilon }}{1-e^\epsilon}\delta  }{e^{k\epsilon}} \tag*{(Because of the given condition Eq.\ref{eq:group_pred_cons_condition})} \\ 
	& =   e^{k\epsilon} F_{\Bclass}(\mathcal{M}(D), x)  +  \frac{1-e^{k \epsilon }}{1-e^\epsilon}\delta  \\
	&\geq  e^{k\epsilon} \left( \frac{ F_{\Bclass}(\mathcal{M}(D'), x) - \frac{1-e^{k \epsilon }}{1-e^\epsilon}\delta 
 }{e^{k\epsilon} } \right) +  \frac{1-e^{k \epsilon }}{1-e^\epsilon}\delta  \tag*{(Because of Eq. \ref{B_group})}\\
	& =   F_{\Bclass}(\mathcal{M}(D'), x),
\end{align*}
which indicates that the prediction of $\mathcal{M}(D')$ at $x$ is $\Aclass$ by definition.
}
\end{proof}

We recall Theorem~\ref{thm_costfunc_k_client}.
\thmcostfunck*

\begin{proof}
We first consider $C (\cdot )\geq 0$. Define $\Theta (a) =\{\theta: C(\theta)>a\}$.
\begin{align*}
	J(D) &= \int_{0}^{\bar C}  \mathbb{P} \left[C(\mathcal{M}(D)) >a \right] da \\
	&=\int_{0}^{\bar C} \mathbb{P} \left[ \mathcal{M}(D)) \in \Theta(a) \right] da \\
	&\leq \int_{0}^{\bar C} \left(e^{k\epsilon}\mathbb{P} \left[ \mathcal{M}(D')) \in \Theta(a) \right]+ \frac{1-e^{k \epsilon }}{1-e^\epsilon}\delta \right) da \tag*{(Because of Group DP property in \cref{lemma_group_dp})}\\
	&= \int_{0}^{\bar C} e^{k\epsilon}\mathbb{P} \left[ \mathcal{M}(D')) \in \Theta(a) \right]  da +  \frac{1-e^{k \epsilon }}{1-e^\epsilon}\delta \bar C  \\
	&=\int_{0}^{\bar C} e^{k\epsilon} \mathbb{P} \left[C(\mathcal{M}(D')) >a \right] da + \frac{1-e^{k \epsilon }}{1-e^\epsilon}\delta \bar C \\
	&= e^{k\epsilon} J(D') +  \frac{1-e^{k \epsilon }}{1-e^\epsilon}\delta \bar C 
\end{align*}
i.e.,
$$
 J(D') \geq e ^{-k\epsilon} J(D)-  \frac{1-e^{-k \epsilon }}{e^\epsilon-1}\delta \bar C. 
$$
Switch the role of $D$ and $D'$, we have 
$$ J(D') \leq e^{k\epsilon} J(D) +  \frac{1-e^{k \epsilon }}{1-e^\epsilon}\delta \bar C.
  $$
Also note that $0\leq J(D') \leq \bar C$ trivially holds due to $ 0 \leq C(\cdot) \leq \bar C$, thus 
\begin{align*}
& \min\{	 e^{k\epsilon} J(D) +  \frac{e^{k \epsilon }-1}{e^\epsilon-1}\delta \bar C , \bar C \}
 \geq  J(D') \\
 & \quad \quad \geq \max \{e ^{-k\epsilon} J(D) - \frac{1-e^{-k \epsilon }}{e^\epsilon-1}\delta \bar C ,0 \}.     
\end{align*}

Next, we consider $C(\cdot)  \leq 0$. Define $\Theta (a) =\{\theta: C(\theta)< a\}$.
\begin{align*}
	J(D) &=- \int_{-\bar C}^{0}  \mathbb{P} \left[C(\mathcal{M}(D)) < a \right] da \\
	&=- \int_{-\bar C}^{0} \mathbb{P} \left[ \mathcal{M}(D)) \in \Theta(a) \right] da \\
	&\geq  - \int_{-\bar C}^{0}  \left(e^{k \epsilon}\mathbb{P} \left[ \mathcal{M}(D')) \in \Theta(a) \right]+ \frac{1-e^{k \epsilon }}{1-e^\epsilon}\delta \right) da \tag*{(Because of Group DP property in \cref{lemma_group_dp})} \\
	&= - \int_{-\bar C}^{0} e^{k \epsilon}\mathbb{P} \left[ \mathcal{M}(D')) \in \Theta(a) \right]  da -  \frac{1-e^{k \epsilon }}{1-e^\epsilon}\delta  \bar C  \\
	&=- \int_{-\bar C}^{0} e^{k \epsilon} \mathbb{P} \left[C(\mathcal{M}(D')) <a \right] da - \frac{1-e^{k \epsilon }}{1-e^\epsilon}\delta \bar C \\
	&= e^{k \epsilon} J(D') - \frac{1-e^{k \epsilon }}{1-e^\epsilon}\delta \bar C 
\end{align*}
i.e.,
$$
 J(D') \leq e^{-k \epsilon} J(D)+ \frac{1-e^{-k \epsilon }}{e^\epsilon-1}\delta \bar C.
$$
Switch the role of $D$ and $D'$, we have 
$$J(D') \geq e^{k \epsilon} J(D) - \frac{1-e^{k \epsilon }}{1-e^\epsilon}\delta \bar C. 
 $$
Also note that $-\bar C \leq J(D') \leq 0 $ trivially holds due to $ - \bar C \leq C(\cdot) \leq 0$, thus 
\begin{align*}
&\min\{ e^{-k \epsilon} J(D)+ \frac{1-e^{-k \epsilon }}{e^\epsilon-1}\delta \bar C  , 0 \}  \geq   J(D') \\
&\quad \quad \geq \max \{ e ^{k\epsilon} J(D) - \frac{ e^{k \epsilon } - 1}{e^\epsilon-1}\delta \bar C  , -\bar C  \} 
\end{align*}
\end{proof}

We recall Corollary~\ref{thm_k_clients_bound}.
\corcostfunctau*

\begin{proof}

We first consider $C (\cdot )\geq 0$. According to the lower bound in Theorem~\ref{thm_costfunc_k_client}, when $B'$ and $B$ differ $k$ users,  
$ J(D') \geq e ^{-k\epsilon} J(D) - \frac{1-e^{-k \epsilon }}{e^\epsilon-1}\delta \bar C 
 $. 
 Since we require $J(D^\prime) \leq  \frac{1}{\tau}  J(D) $,
then $e ^{-k\epsilon} J(D) - \frac{1-e^{-k \epsilon }}{e^\epsilon-1}\delta \bar C 
  \leq \frac{1}{\tau}  J(D)$.
Rearranging gives the result.

Next, we consider $C (\cdot ) \leq 0$. According to the lower bound in  Theorem~\ref{thm_costfunc_k_client}, when $B'$ and $B$ differ $k$ users,  
$ J(D') \geq e ^{k\epsilon} J(D) - \frac{ e^{k \epsilon } - 1}{e^\epsilon-1}\delta \bar C 
 $. 
 Since we require $J(D^\prime) \leq  \tau  J(D) $,
then $ e ^{k\epsilon} J(D) - \frac{ e^{k \epsilon } - 1}{e^\epsilon-1}\delta \bar C  
  \leq \tau  J(D)$.
Rearranging gives the result.

\end{proof}

We note that all the above robustness certification-related proofs are built upon the user-level $(\epsilon,\delta)$-DP property and the Group DP property.
According to Definition~\ref{def:userdp} and Definition~\ref{def:insdp}, the definition of user-level DP and instance-level DP are both induced from DP (Definition~\ref{def:dp}) despite the different definitions of adjacent datasets. 
By applying the definition of instance-level $(\epsilon,\delta)$-DP and following the proof steps of Theorem~\ref{thm_pred_consist_one_client},~\ref{thm_pred_consist_k_client},~\ref{thm_costfunc_k_client} and Corollary~\ref{thm_k_clients_bound}, we can derive similar theoretical conclusions for instance-level DP,  leading to Theorem~\ref{thm:thmsapplytoinsdp} to achieve the certifiably robust FL given the DP property.

\begin{rev}

\section{Certified Robustness Analysis via R\'enyi DP and Randomized Smoothing}
\label{app:renyi_certifications}

\subsection{Preliminary}
We start by providing preliminaries on R\'enyi Differential Privacy~\cite{mironov2017renyi} and the $ f$-divergence-based randomized smoothing~\cite{Dvijotham2020A}, which is a relaxation of $\ell_p$-norm-based randomized smoothing~\cite{cohen2019certifiedrandsmoothing}.
\begin{definition}(R\'enyi Divergence)\label{def:renyi_div}
For two probability distributions $\rho$ and $\nu$, the Rényi divergence of order $\alpha>1$ is
\begin{equation}
D_\alpha(\rho \| \nu) \triangleq \frac{1}{\alpha-1} \log \mathrm{E}_{x \sim \nu}\left(\frac{\rho(x)}{\nu(x)}\right)^\alpha
\end{equation}
\end{definition}

\begin{definition}($(\alpha,\epsilon_{R,\alpha})$-RDP \cite{mironov2017renyi})\label{def:rdp}
A randomized mechanism $\mathcal{M}:\mathcal{D} \rightarrow \Theta$ with domain $\mathcal{D}$ and output set $\Theta$
satisfies $(\alpha,\epsilon_{R,\alpha})$ R\'enyi Differential Privacy (RDP) if for any pair of two adjacent datasets $ d,d' \in \mathcal{D}$, it holds that 
\begin{align}
	D_\alpha(\mathcal{M}(d)  \| \mathcal{M}(d')) \leq \epsilon_{R, \alpha}
\end{align}
\end{definition}

\begin{definition}(Group R\'enyi DP \cite{mironov2017renyi})\label{def:group_rdp}
For mechanism $\mathcal{M}$ that satisfies $(\alpha,\epsilon_{R,\alpha})$-RDP, it satisfies $(\alpha/2^k, 3^k\epsilon_{R,\alpha})$-DP for groups of size $k$. That is, for any  $ d,d' \in \mathcal{D}$ that differ by $k$ individuals, it holds that
\begin{align}
D_{\alpha/2^k}(\mathcal{M}(d)  \| \mathcal{M}(d')) \leq   3^k\epsilon_{R,\alpha}
\end{align}
\end{definition}

\begin{lemma}(R\'enyi DP and DP conversion \cite{mironov2017renyi})\label{lemma:renyidp_dp}
The mechanism $\mathcal{M}$ that satisfies  $(\alpha,  \epsilon_{R, \alpha})$-RDP $\alpha > 1$, also satisfies 
  $(\epsilon_{R, \alpha} + \frac{\log 1/\delta}{\alpha-1}, \delta)$-DP for any $0 < \delta < 1$.
\end{lemma}

\begin{lemma}(Certificates for R\'enyi-divergence  [Table 4 of \cite{Dvijotham2020A}])
\label{lemma:renyi_f_div_rand_smoothing}
Given two distributions $\rho$ and $\nu$ with bounded Rényi divergences $(\alpha \geq 0)$
$
D_\alpha(\rho \| \nu) \leq \epsilon_{R, \alpha},
$
and two probabilities $p_a$, $p_b$ that satify $p_a, p_b\geq0$, $p_a + p_b \leq 1$, and define the class of specification $S$ as 
\begin{small}
\begin{equation}
S=\left\{\phi: \mathcal{X} \rightarrow\{-1,0,+1\} \text { s.t. } \underset{x \sim \rho}{\mathbb{P}}[\phi(x)=+1] \geq p_a, \underset{x \sim \rho}{\mathbb{P}}[\phi(x)=-1] \leq p_b\right\}.    
\end{equation}
\end{small}
It is certified that $\mathbb{E}_{x \sim \nu}[\phi(x)] \geq 0$ for all $\nu$ and $\phi \in S$ if 
\begin{align*}
& \epsilon_{R, \alpha} \leq-\log \left(1-p_a-p_b+2 \eta\right) , \text{with\quad} \eta=\left(\frac{p_a^{(1-\alpha)}+p_b^{(1-\alpha)}}{2}\right)^{\left(\frac{1}{1-\alpha}\right)}.
\end{align*}
\end{lemma}

\subsection{Main Results on RDP-based Certified Prediction}
We present our main results for certified robustness against FL poisoning attacks based on R\'enyi DP (RDP)~\cite{mironov2017renyi} and Randomized Smoothing via R\'enyi Divergence~\cite{Dvijotham2020A}. 
\cref{thm_renyidp_pred_consist_one_client} states the certification under one adversarial user and \cref{thm_renyidp_pred_consist_k_client} further extends the certification to $k$ adversarial suers.

\begin{theorem}[RDP-based Certified Prediction under One Adversarial User]
\label{thm_renyidp_pred_consist_one_client}
\sloppy
Suppose a randomized mechanism $\Mcal$ satisfies user-level $(\alpha,  \epsilon_{R, \alpha})$-RDP, which also satisfies user-level $(\epsilon_{R, \alpha} + \frac{\log 1/\delta}{\alpha-1}, \delta)$-DP,  where $\alpha>1$ and $0<\delta<1$. For two user sets $B$ and $B^\prime$ that differ by one user, let $D$ and $D'$ be the corresponding training datasets. 
Define the classifier as $h:(\theta , \mathbb{R}^d)  \rightarrow [C]$ with the finite set of labels $[C]$, and the randomly smoothed classifier $h_s$ as 
 $h_s(\mathcal{M}(D),x) :=\arg \max_{c \in [C]} {\mathbb{P}} [h(\mathcal{M}(D),x)=c]$.
For a test input $x$, suppose that
\begin{align*}
&  {\mathbb{P}} [h(\mathcal{M}(D),x)=\Aclass] \geq p_a \geq p_b  \geq  \arg \max_{c \in [C]:c\neq \Aclass}  {\mathbb{P}} [h(\mathcal{M}(D),x)=c].
\end{align*}
Then, it is guaranteed that 
 $  h_s(\mathcal{M}(D'),x) = h_s(\mathcal{M}(D),x) = \Aclass$ if:

\begin{align}
& \epsilon_{R, \alpha} \leq-\log \left(1-p_a-p_b+2 \left(\frac{p_a^{(1-\alpha)}+p_b^{(1-\alpha)}}{2}\right)^{\left(\frac{1}{1-\alpha}\right)} \right).  \nonumber
\end{align}
\end{theorem}

\begin{theorem}[RDP-based Certified Prediction under $k$ Adversarial User]
\label{thm_renyidp_pred_consist_k_client}
Using the same setting as in Theorem~\ref{thm_renyidp_pred_consist_one_client} but let two user sets $B$ and $B^\prime$ differ by $k$ users, and $D$ and $D'$ be the corresponding training datasets. 
Then, it is guaranteed that   $h_s(\mathcal{M}(D'),x) = h_s(\mathcal{M}(D),x) = \Aclass$ if:
\begin{align}
&  \epsilon_{R, \alpha}  \leq- \frac{1}{ 3^k} \log \left(1-p_a-p_b+2 \left(\frac{p_a^{(1-\alpha/2^k)}+p_b^{(1-\alpha/2^k)}}{2}\right)^{\left(\frac{1}{1-\alpha/2^k}\right)} \right).  \nonumber
\end{align}
\end{theorem}

\textit{Remark.} 
From Theorem~\ref{thm_renyidp_pred_consist_one_client} and Theorem~\ref{thm_renyidp_pred_consist_k_client}, we observe that 
\textbf{(1)}  RDP-based certifications are more complex than DP-based certifications due to the additional tunable parameter, the RDP order $\alpha$, and its foundational R\'enyi Divergence-based privacy definition.  
\textbf{(2)} Theorem~\ref{thm_renyidp_pred_consist_k_client} presents a more intricate RHS, making it challenging to derive a simple closed-form upper bound $\cerk$ for the certified number of attackers where $k<\cerk$, as seen in Theorem~\ref{thm_pred_consist_k_client}. Nevertheless, Theorem~\ref{thm_renyidp_pred_consist_k_client} can be utilized to perform a binary check for certified robustness by verifying if the current RDP privacy budget satisfies the inequality.
\textbf{(3)} Different from DP-based certifications in Theorem~\ref{thm_pred_consist_one_client}  and Theorem~\ref{thm_pred_consist_k_client} that are built upon the \textit{expected class confidence} $F_A$ and $F_B$, RDP-based certifications are built upon the \textit{probability of model prediction}, e.g.,  the probability of the model predicting a certain class ${\mathbb{P}} [h(\mathcal{M}(D),x)=\Aclass]$, where $h(\mathcal{M}(D),x)$ is the predicted class.
To compute RDP-based certifications in practice, one can also use Marto Carlo sampling to approximate  ${\mathbb{P}} [h(\mathcal{M}(D),x)=\Aclass]$.

\subsection{Proofs}

We now provide the proofs for Theorem~\ref{thm_renyidp_pred_consist_one_client} and  Theorem~\ref{thm_renyidp_pred_consist_k_client} below.

\begin{proof}  [Proof for Theorem~\ref{thm_renyidp_pred_consist_one_client}]
Recall that we define the classifer $h:(\theta , \mathbb{R}^d)  \rightarrow [C]$ with the finite set of labels $[C]$, and the randomly smoothed classifer $h_s$ as 
\begin{align} h_s(\mathcal{M}(D),x) :=\arg \max_{c \in [C]} {\mathbb{P}} [h(\mathcal{M}(D),x)=c],
\end{align}
where $x$ is a test sample, $\mathcal{M}(D)$ is the stochastic model trained from the randomized DP mechanism $\mathcal{M}$ on a training dataset $D$.

For a test input $x$, suppose that

\begin{align*}
&  {\mathbb{P}} [h(\mathcal{M}(D),x)=\Aclass] \geq p_a \geq p_b  \geq  \arg \max_{c \in [C]:c\neq \Aclass}  {\mathbb{P}} [h(\mathcal{M}(D),x)=c].
\end{align*}
Therefore, $\Aclass =h_s(\mathcal{M}(D),x)$.

Let $\Bclass =\arg \max_{c \in [C]:c\neq \Aclass}  {\mathbb{P}} [h(\mathcal{M}(D),x)=c]$. 
We define the specification $\phi_{\Aclass, \Bclass}$ as follows:
\begin{equation}
    \phi_{\Aclass, \Bclass}(\mathcal{M}(D))= \begin{cases}+1 & \text { if } h(\mathcal{M}(D), x)=\Aclass \\ -1 & \text { if } h(\mathcal{M}(D), x)=\Bclass \\ 0 & \text { otherwise }\end{cases}
\end{equation}

Based on the certificates for R\'enyi-divergence in Lemma~\ref{lemma:renyi_f_div_rand_smoothing} and  Definition~\ref{def:rdp} for Rényi DP, if 
\begin{align*}
& \epsilon_{R, \alpha} \leq-\log \left(1-p_a-p_b+2 \eta\right) , \text{with\quad} \eta=\left(\frac{p_a^{(1-\alpha)}+p_b^{(1-\alpha)}}{2}\right)^{\left(\frac{1}{1-\alpha}\right)}, 
\end{align*}
and if the mechanism $\mathcal{M}$ satisfies $(\alpha,\epsilon_{R,\alpha})$-RDP $(\alpha > 1)$
\begin{equation}
D_\alpha(\mathcal{M}(D)  \| \mathcal{M}(D')) \leq \epsilon_{R, \alpha},
\end{equation}
it is certified that 
 $\mathbb{E}[ \phi_{\Aclass, \Bclass}(\mathcal{M}(D'))] =   {\mathbb{P}} [h(\mathcal{M}(D'),x)=\Aclass] - {\mathbb{P}} [h(\mathcal{M}(D'),x)=\Bclass]  \geq 0$, that is, 
\begin{align*}
& {\mathbb{P}} [h(\mathcal{M}(D'),x)=\Aclass]    \geq   {\mathbb{P}} [h(\mathcal{M}(D'),x)=\Bclass].
\end{align*}
It further implies that 
\begin{align*}
h_s(\mathcal{M}(D'),x) =h_s(\mathcal{M}(D),x)=\Aclass.
\end{align*}
Finally, we can convert Rényi DP to DP by Lemma~\ref{lemma:renyidp_dp}
\end{proof}
\begin{proof}[Proof for Theorem~\ref{thm_renyidp_pred_consist_k_client}]
According to the group Rényi DP in Definition~\ref{def:group_rdp}, the mechanism $\mathcal{M}$ that satifies user-level $(\alpha,\epsilon_{R,\alpha})$-RDP also satifies user-level $(\alpha/2^k,  3^k\epsilon_{R, \alpha})$-RDP for two user sets $B$ and $B^\prime$  that differ by $k$ users. That is, 
 \begin{equation}
D_{\alpha/2^k}(\mathcal{M}(D)  \| \mathcal{M}(D')) \leq  3^k \epsilon_{R, \alpha}.
\end{equation}

For a test input $x$, suppose that
\begin{align*}
&  {\mathbb{P}} [h(\mathcal{M}(D),x)=\Aclass] \geq p_a \geq p_b  \geq  \arg \max_{c \in [C]:c\neq \Aclass}  {\mathbb{P}} [h(\mathcal{M}(D),x)=c].
\end{align*}
Then, according to Lemma~\ref{lemma:renyi_f_div_rand_smoothing} and following similar steps in the proofs of the Theorem~\ref{thm_renyidp_pred_consist_one_client},  if 
\begin{align*}
&  3^k \epsilon_{R, \alpha}  \leq-\log \left(1- p_a - p_b  +2 \eta\right), \\
&\text{with\quad} \eta=\left(\frac{p_a^{(1-\alpha/2^k)}+p_b^{(1-\alpha/2^k)}}{2}\right)^{\left(\frac{1}{1-\alpha/2^k}\right)},
\end{align*}
it is certified that 
\begin{align*}
h_s(\mathcal{M}(D'),x) =h_s(\mathcal{M}(D),x)=\Aclass.
\end{align*}
\end{proof}

\end{rev}

\newpage
\onecolumn

\end{document}